\def\maxwidth{ %
  \ifdim\Gin@nat@width>\linewidth
    \linewidth
  \else
    \Gin@nat@width
  \fi
}
\theoremstyle{definition}
\newtheorem{assumption}{Assumption}
\declaretheorem[name=Proposition,numberwithin=section]{proposition}
\newcommand{\R}{\ensuremath{\mathbb{R}}}
\newcommand{\obs}{\text{obs}}
\newcommand{\mis}{\text{mis}}
\def\b1{\boldsymbol{1}}
\definecolor{RED}{RGB}{255,0,0}
\title{Estimating the effects of a California gun control program with Multitask Gaussian Processes%
\thanks{We would like to thank Alyssa Bilinski, Phil Cook, Cass Crifasi, Alex D'Amour, Peng Ding, John Donohue, Naoki Egami, Max Gopelrud, Jess Kunke, Luke Miratrix, Jacob Montgomery, Jesse Rothstein, Yotam Shem-Tov, Elizabeth Stuart,  Liyang Sun,  and seminar participants at Polmeth 2021 and the University of Washington for helpful comments and discussion. We are especially grateful to Eli Sherman for his involvement in an earlier version of this project. A portion of this paper was previously circulated with the title ``The Effect of a Targeted Effort to Remove Firearms from Prohibited Persons on State Murder Rates.'' This research was supported in part by the Hellman Family Fund at UC Berkeley and by the Institute of Education Sciences, U.S. Department of Education, through Grant R305D200010. The opinions expressed are those of the authors and do not represent views of the Institute or the U.S. Department of Education.}}
\author{Eli Ben-Michael, David Arbour, Avi Feller, Alexander Franks, and Steven Raphael}
\date{\today}
\begin{document}

\maketitle

\thispagestyle{empty}
\pagenumbering{gobble}

\singlespacing
\begin{abstract}
Gun violence is a critical public safety concern in the United States.
In 2006 California implemented a unique firearm monitoring program, the Armed and Prohibited Persons System (APPS), to address gun violence in the state.
The APPS program first identifies those firearm owners who become prohibited from owning one due to federal or state law, then confiscates their firearms.
Our goal is to assess the effect of APPS on California murder rates using annual, state-level crime data across the US for the years before and after the introduction of the program.
To do so, we adapt a non-parametric Bayesian approach, multitask Gaussian Processes (MTGPs), to the panel data setting. MTGPs allow for flexible and parsimonious panel data models that nest many existing approaches and allow for direct control over both dependence across time and dependence across units, as well as natural uncertainty quantification.
We extend this approach to incorporate non-Normal outcomes, auxiliary covariates, and multiple outcome series, which are all important in our application.
We also show that this approach has attractive Frequentist properties, including a representation as a weighting estimator with separate weights over units and time periods.
Applying this approach, we find that the increased monitoring and enforcement from the APPS program substantially decreased homicides in California. We also find that the effect on murder is driven entirely by declines in gun-related murder with no measurable effect on non-gun murder. Estimated cost per murder avoided are substantially lower than conventional estimates of the value of a statistical life, suggesting a very high benefit-cost ratio for this enforcement effort.
\end{abstract}

\clearpage
\pagenumbering{arabic}
\onehalfspacing



\clearpage
\section{Introduction}
\label{sec:intro}

Every year, thousands of people in the United States are murdered with a firearm. In 2020 alone, there were over 45,000 firearm-related deaths in the U.S., including deaths by suicide and murder; firearm-related injuries are now the leading cause of death among U.S. children and adolescents \citep{goldstick2022current}. This mortality rate remains multiple times that of other wealthy nations, largely due to differences in the lethality of violence in the United States.  
 	
Addressing gun violence is thus one of the most pressing public safety issues in the United States.  Policy proposals intended to curb gun violence vary considerably.  At one end of the spectrum are proposals to increase gun availability to deter potential assailants through such policies as shall-issue concealed carry laws \citep{lott1997crime} or the arming of school teachers \citep{green2018nyt}.  At the other end are policy proposals to restrict access to firearms through universal background checks, temporarily removing firearms from individuals deemed a danger to themselves or others \citep{wintemute2019extreme}, and restricting access to assault weapons with high capacity magazines \citep{donohue2019assault}.    

Federal and state laws already prohibit certain individuals from owning firearms.  Under federal law, individuals with prior felony convictions as well as some individuals with a history of mental illness are prohibited from purchasing and owning firearms.  Many states go further, restricting access to firearms for individuals convicted of violent misdemeanors, individuals that are subjected to restraining orders, and juvenile offenders, among other categories \citep{giffords2019}. While law enforcement officials confiscate firearms of prohibited persons discovered to be in possession, until recently there was no effort to proactively and comprehensively recover firearms from prohibited individuals.

California was the first and remains the only state to undertake such an effort.  In 2006 California implemented the Armed and Prohibited Persons System (APPS), a monitoring program that identifies known firearm owners who become prohibited from owning a firearm and then subsequently attempts to retrieve all prohibited weapons. Through this program, state law enforcement has made hundreds of thousands of contacts with prohibited individuals and has removed tens of thousands of firearms from their possession, often permanently.\footnote{While California is not the only state that screens existing gun owners for prohibiting conditions, it is the only state to screen daily for triggering events, to carefully monitor whether the person has been disarmed, and to have a law enforcement unit in charge of following up, disarming such individuals, and storing the firearms.}

In this paper, we assess whether the implementation of APPS affected murder rates in California, using annual, state-level data for the 50 states. 
To do so, we adapt a non-parametric Bayesian approach, \emph{multitask Gaussian Processes} (MTGPs), to estimate the causal effect of APPS.
Originally developed for the setting with multiple outcomes rather than multiple units \citep{bonilla2008multi, alvarez2017gp}, Multitask GPs separately parameterize dependence across time and dependence across units. This allows for a flexible and parsimonious panel data model that nests many existing approaches: Our primary model is a natural generalization of low-rank factor models commonly used in panel data settings \citep[e.g.,][]{xu2017gsynth, athey2018matrix} with an additional prior that encourages smoothness in the underlying factors. 
We also take advantage of the large literature on the statistical and computational properties of Gaussian Processes \citep{williams2006gaussian, gelman2013bayesian} to extend this model. Immediate extensions include: allowing for a count (Poisson) observational model; allowing for multiple outcome series; incorporating a mean model; and incorporating auxiliary covariates. 

Applying these ideas to the impact of APPS yields large, negative effect estimates on gun-related homicides in California. Specifically, we estimate over 100 murders prevented per year, a decline of at least 5 percent relative to California's murder rate immediately prior to the program. In addition, we find that the effect on murder is driven entirely by declines in gun-related murder with no measurable effect on non-gun murder. 
Back-of-the-envelope calculations find that program expenditures are likely under \$100,000 per murder prevented, much lower than conventional estimates of the value of a statistical life, suggesting a very high benefit-cost ratio for this enforcement effort.

Methodologically, we believe that the MTGP framework is a promising approach for estimating causal effects with panel data more broadly.
First, we can naturally incorporate many panel data diagnostics and other model checks in terms of standard Bayesian workflows, such as posterior predictive checks. 
This is especially important in settings such as ours, with relatively few units and (pre-treatment) time periods. 
The Bayesian approach also  
yields coherent uncertainty quantification, appropriately propagating uncertainty from the different sources of information in the model. In particular, the specific model we consider automatically produces uncertainty intervals that grow over time post-treatment. Surprisingly, this last point is a departure from many existing approaches, which sometimes have constant uncertainty intervals post-treatment.
We also report many flexible summaries of the posterior distribution, allowing us to obtain posterior distributions for estimands like the benefit-cost ratio, a key quantity in policy analysis.

Finally, we adapt results from \citet{kanagawa2018gaussian} showing that the Gaussian Process can be represented as a weighting estimator.
We then use this representation to show that the overall weights decompose into separate unit and time weights. As a result, the proposed MTGP approach has the form of a doubly-weighted average over units and time periods, similar to recent proposals in the panel data literature \citep{arkhangelsky2019synthetic, ben2018augmented}.

As we discuss in Section \ref{sec:related}, our proposed approach combines the growing literature on Gaussian Processes for causal inference \citep[see][]{oganisian2020practical} with the robust literature on estimating causal effects for a single treated unit with panel data \citep[see][]{samartsidis2019assessing}. Of particular relevance are \citet{modi2019generative} and \citet{carlson2020gp}, who also consider GPs in this context, albeit with different structures. More broadly, our proposed framework is a natural (Bayesian) generalization of recent proposals for low-rank factor models with panel data \citep[e.g.,][]{xu2017gsynth, athey2018matrix}; see also recent extensions from \citet{pang2020bayesian}.

Our paper proceeds as follows.
In Section \ref{sec:apps_intro} we give an overview of the APPS program and other institutional details and present an initial analysis of the effect of APPS.
In Section \ref{sec:setup_section} we describe the underlying causal problem and review related work.
In Section \ref{sec:mtgp_main} we review single- and multi-task Gaussian Processes and apply these ideas to our application setting. 
In Section \ref{sec:weights} we represent the MTGP approach as a weighting estimator.
In Section \ref{sec:apps_main_analysis} we report model diagnostics and the estimated impact of APPS under different models.
Finally, in Section \ref{sec:discussion} we discuss open questions and possible extensions.
The Appendix includes additional computational details, diagnostics, and results.
All code and replication files are available at \href{https://github.com/darbour/mtgp_panel}{\texttt{github.com/darbour/mtgp\_panel}}.

\section{Armed and Prohibited Persons System}
\label{sec:apps_intro}

\subsection{Description of the system}

The Armed and Prohibited Persons System is a law enforcement program devoted to proactively removing firearms from known gun owners who become prohibited from owning a gun. The program has two components: an intricate data infrastructure and monitoring system that updates daily; and an enforcement operation, which includes both non-sworn analysts and a small force of sworn state officers, dedicated solely to enforcing state firearms restrictions for gun owners who become prohibited.
While the state has maintained the database on gunowners and associated firearms since the mid 1990s, the active effort to remove firearms from prohibited individuals did not begin until December 2006 \citep{CA_DOJ_2015}.

The core of the data system is the APPS database, a list of known gunowners that draws on Dealer Record of Sales (DROS) transactions, voluntary gun registrations with the state, and an Automated Firearm System (AFS) database that records all known firearms involved in sales through licensed dealers.
The database then links each individual known to own a firearm to specific firearms identified by serial numbers.  
This system has recorded owners of handguns and their associated weapons since 1996.
In 2014, the state expanded the system to include all gunowners and all firearms, including long guns. 
The known population of gunowners in the APPS database has grown substantially over time, 
from under 900,000 in 2006 to over 2.5 million in 2019, or roughly 6 percent of the state's resident population.\footnote{The system likely covers only a portion of gunowners in the state since the underlying list of gunowners used by the system is generated largely by new firearms sales.  For example, the list excludes people who legally purchase firearms in other states and fail to register them in California, as well as people who purchased handguns prior to 1996 and long guns prior to 2014 and have not registered these firearms.}

The APPS data are cross referenced daily against several databases that record events that trigger a firearm prohibition.  These include the state's Automated Criminal History System (the state's criminal history repository), the Mental Health Reporting system (a database used by mental health providers to record involuntary commitments and other mental health events that may trigger a prohibition), the California Restraining and Protective Order system, and a statewide Wanted Persons System.  
When the cross-reference uncovers a triggering event for an individual in APPS, the case is analyzed by state criminal intelligence specialists to determine whether the person is actually prohibited and if the person is still armed.  Local law enforcement across the state conducts gun relinquishment efforts for those who become prohibited;  hence, local agencies may have already removed the person's firearms.\footnote{In 2016, this became the official responsibility of county probation departments across the state for those newly convicted of a prohibiting offense.} In the event that the APPS database reveals that the person is still armed,
the case is referred to the state Bureau of Firearms (BOF) for an enforcement action.
Of these, 54 percent were prohibited due to a felony conviction, 11 percent were prohibited due to a misdemeanor conviction, 19 percent were prohibited due to an existing restraining order, 18 percent were prohibited due to a mental health prohibition, 24 percent were prohibited due to provisions of the Federal Brady Act, while the remaining 7 percent were prohibited due to active probation status \citep{CA_DOJ_2019}.

BOF officers attempt to contact these individuals, usually at their last known address, and request to perform a consensual search of the premises.  If refused, officers will return with a search warrant.  In many instances, officers only recover some of the firearms associated with the individual in question, possibly due to a prior unrecorded sale, the individual turning over firearms to friends and family, the gun being lost or stolen, or local law enforcement already collecting the firearms at conviction and failing to update the person's record in APPS.  In addition, officers frequently discover unregistered weapons along with the registered firearms.\footnote{For an abundance of examples of the product of these enforcement actions, see \cite{CA_DOJ_2015, CA_DOJ_2016, CA_DOJ_2017, CA_DOJ_2018, CA_DOJ_2019}.}
Once all guns associated with prohibited individuals are removed from possession, the individuals are removed from the APPS database as they are no longer armed.  Gunowners are also removed from the APPS database upon death. Between 2007 and 2017, the BOF directly retrieved nearly 32,000 firearms from prohibited individuals.  However, the total number of firearms relinquished statewide is likely much larger given the role of local law enforcement and the fact the BOF works cases that are somehow not handled locally.

\subsection{Related research on access to firearms}
\label{sec:related_firearm}

A growing literature evaluates the impact of state-level changes in access to firearms on homicide rates; see \citet{webster2015effects} and \citet{cook2017saving}. 
Overall, recent studies find that more permissive firearm regulations, such as more permissive concealed carry laws and Right to Carry (RTC) laws, are associated with increases in gun-related deaths \citep{donohue2019right, siegel2017easiness}.\footnote{While \citet{lott1997crime} find the opposite, their results have been convincingly disputed on methodological grounds \citep{aneja2011impact, nrc2005firearms}.} 
Similarly, policy changes that restrict firearm access, such as permit-to-purchase laws and comprehensive background checks, are associated with decreases in gun-related deaths \citep{rudolph2015association, castillo2019california}. 
Methodologically, many of these studies also use the synthetic control method, which we consider next \citep{rudolph2015association, kagawa2018repeal, castillo2019california, donohue2019right, mccourt2020purchaser, pear2022firearm}.

Several qualitative studies complement these quantitative estimates. 
\citet{wintemute2019extreme} present a series of California case studies of individuals subjected to gun law restraining orders, where the APPS was central in identifying associated firearms.  In addition, case studies listed in recent years of the California Department of Justice annual APPS report indicate that BOF enforcement teams are often directly involved in carrying out these orders \citep{CA_DOJ_2018, CA_DOJ_2019}.  In the case studies documented by Wintemute and colleagues, the individuals disarmed as a result of these orders all made specific and highly alarming threats against intimate partners, coworkers, educational institutions, and in some instances health and mental health providers.  Among the cases where the individual in question threatened violence, none had committed a violent act at the time of publication.  See also \citet{swanson2017implementation}, who find that such restraining orders may also be effective in reducing gun suicide.  

Finally, there is currently a randomized control trial of the APPS enforcement program in the field \citep[see][for the study protocol]{wintemute2017evaluation}. The study began in 2015 to randomly assign regions within California for earlier or later enforcement actions among regions without well-established working relationships with the state Department of Justice.  To date, results from this evaluation have not been reported.

\subsection{Synthetic control estimates}
\label{sec:synth}

\begin{figure}[tbp]
  \centering
    \begin{subfigure}[t]{0.45\textwidth}  
  \includegraphics[width=\textwidth]{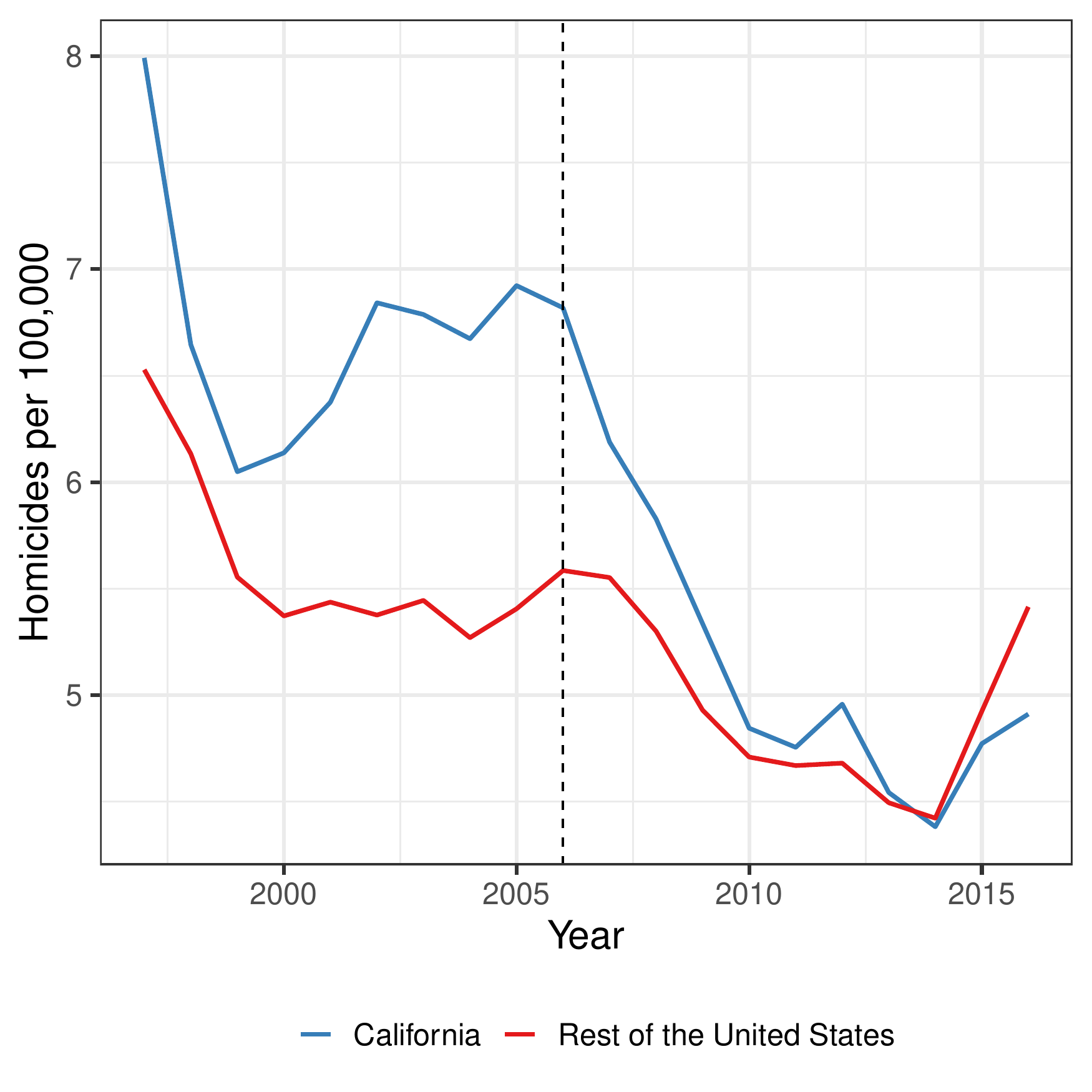}
    \caption{}
    \label{fig:murder_raw}
    \end{subfigure}%
    ~
    \begin{subfigure}[t]{0.45\textwidth}  
    \includegraphics[width=\textwidth]{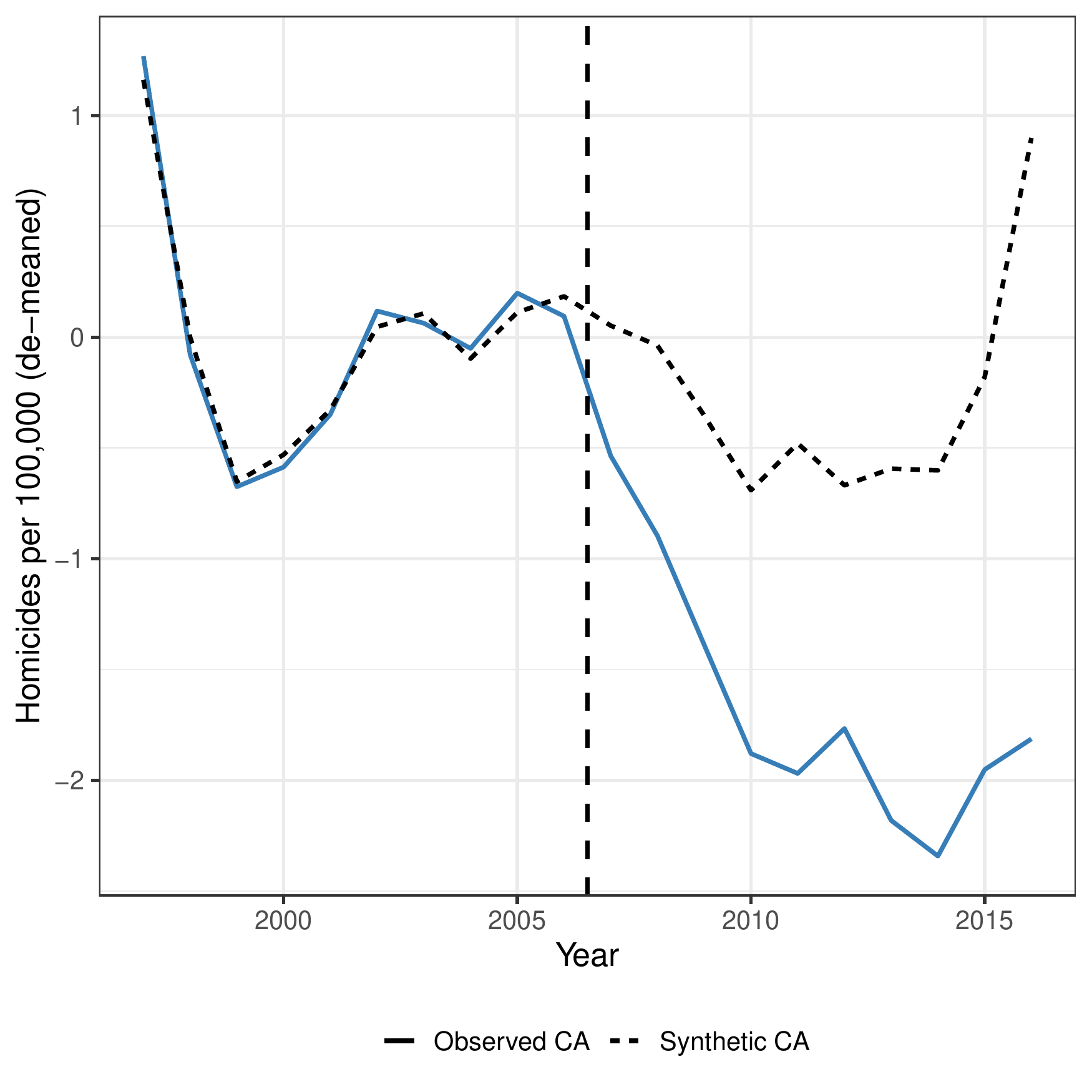}
    \caption{}
    \label{fig:synth_plot}
      \end{subfigure}
\caption{Annual homicide rate per 100,000, 1997 through 2016. The dotted line is 2006, the year the APPS program was launched. (a) California and the rest of the United States. (b) California and Synthetic California, matched based on the pre-intervention years 1997-2006, de-meaned} 
\label{fig:synth_results}
\end{figure}

Figure \ref{fig:murder_raw} shows the annual homicide rate per 100,000 people in California versus the rest of the United States (a population-weighted average of the remaining 49 states); the dotted line indicates 2006, the year the APPS program was first launched (in December). 
The figure is based on data from the Federal Bureau of Investigation (FBI) Uniform Crime Reports (UCR).
The figure shows several striking patterns.
First, the overall murder rate in the United States largely declined over this period, with a major increase in the final years of the panel. 
Second, the murder rate in California was substantially higher than the national average in the late 1990s and mid 2000s, 
with an average difference in murders per 100,000 of 1.1 over the ten-year period through 2006.
This gap closes after 2006: the California murder rate falls and remains below the rate for the rest of the nation by 2014. 

The methodological question is whether and how we can use this panel data structure to estimate the impact of APPS on the murder rate. Furthermore, we would like to separately estimate effects on \emph{gun related} and \emph{non-gun related} homicides. The APPS program should primarily impact gun-related homicides by removing firearms from a high-risk population, though substitution effects may lead to changes in non-gun-related homicides. Therefore our methodological goal also includes estimating effects on \emph{multiple outcomes} simultaneously.

Unfortunately, relatively simple approaches are unlikely to perform well here. For example, the workhorse ``difference-in-differences'' (DiD) regression model assumes \emph{parallel trends}: in the absence of the intervention, the differences between the murder rates in California and the rest of the country are constant over time. This is clearly violated in Figure \ref{fig:murder_raw}, with much wider differences in the mid 2000s than in the late 1990s.

An alternative to DiD for settings such as this where the parallel trends assumption is implausible is 
the \emph{synthetic control method} \citep[SCM][]{abadie2010synthetic}. This procedure finds a weighted combination of other states, \emph{Synthetic California}, that approximates the pre-APPS murder rate for California.
Here, we use a variant of SCM that first subtracts out the pre-APPS homicide rate in each state to isolate trends, then finds a synthetic control that approximates the pre-APPS murder rate trends for California \citep{doudchenko2016balancing, ferman2019synthetic, ben2018augmented}.
See Appendix \ref{sec:appendix_synth} for further details on constructing the synthetic control.

Figure \ref{fig:synth_plot} shows the de-meaned California murder rate (the solid line) alongside the de-meaned murder rate for our synthetic comparison group of states (the dashed line).
The synthetic control has excellent pre-treatment fit: the two series are very close for the pre-treatment period of 1997 through 2006.
Seven states have non-zero weight and contribute to the synthetic control: Ohio is by far the most important donor state (weight = 0.65), followed by Hawaii (0.1), Mississippi (0.09), Louisiana (0.06), Arkansas (0.06), Alaska (0.03) and Idaho (0.01).
After 2006, California's murder rate is markedly lower than the rate for synthetic California, with an average difference of -1.41 murders per 100,000. This difference widens over the ten-year period following the implementation of APPS.

While a promising initial analysis, several open questions remain. First, we would like to be able to estimate the effects on both gun related and non-gun related homicides in order to evaluate our hypothesis that the APPS program should have a larger impact on gun related homicides.
Second, the methodology for panel data with multiple outcomes is under-developed, and it is unclear how this SCM approach can be applied more generally.
Relatedly, the number of homicides per state are inherently count outcomes, a fact ignored in existing SCM analyses but one that we would like to incorporate explicitly. 
Finally, quantifying uncertainty with synthetic controls is challenging. The two most common approaches of placebo and conformal inference \citep{abadie2010synthetic, chernozhukov2017exact} rest on versions of exchangeability either over units or over time. These assumptions are difficult to justify in our setting, and so it is unclear whether the large differences in Figure \ref{fig:synth_plot} is driven by true effects or statistical noise.
Moreover, the uncertainty intervals from conformal inference approaches are constant over time even though, intuitively, we anticipate far greater uncertainty for the counterfactual murder rate in 2017 than immediately after the implementation of APPS in 2007.
This adds more difficulty in understanding whether the effect is truly increasing over time as we see in Figure \ref{fig:synth_plot}.
These concerns motivate our alternative approach, which we turn to next.

\section{Setup and Background}
\label{sec:setup_section}

\subsection{Problem setup}
\label{sec:setup_subsection}

We now describe the standard panel data setting with $i = 1, \ldots, N$ units observed for $t = 1, \ldots, T$ time periods. In our application, we observe $N = 50$ states for $T = 20$ years. Let $W_i$ be an indicator that unit $i$ is treated at time $T_0 < T$ where units with $W_i = 0$ never receive the treatment. While our setup is more general, we restrict our attention to the case where a single unit receives treatment, and follow the convention that this is the first one, $W_1 = 1$. The remaining $N_0 = N - 1$ units are untreated. In our application, the treatment of interest is the APPS program, treated at time $T_0 = 10$; California is the treated unit and the remaining states are possible controls.
Note that while only California implement a program like APPS, as we discuss above, other states have enacted various gun-related policies.
Thus, the un-treated condition $W_1 = 0$ corresponds to the policy environment California would have experienced in the absence of APPS.

We describe our problem of interest in the potential outcomes framework. We assume no interference between units and stable treatment, which allows us to write the potential outcomes for unit $i$ in time $t$ under control and treatment as $Y_{it}(0)$ and $Y_{it}(1)$ respectively.\footnote{While this is not always the case, in our application, it is reasonable to assume that any state could launch an APPS-style program at any point. Thus, both potential outcomes could reasonably exist for all units and all time periods.}
Our primary outcome of interest is the overall murder rate; we consider multiple outcomes in Section \ref{sec:multiple_outcomes}.
For ease of exposition, we initially ignore the presence of background covariates from our analyses; we discuss extensions to incorporate them in Section \ref{sec:outcome_model}.

Since the first unit is treated at time $T_0$, the observed outcomes $Y_{it}$ are therefore:
$$
Y_{i t}=\left\{\begin{array}{ll}{Y_{i t}(0)} & {\text { if } W_{i}=0 \text { or } t \leq T_{0}} \\ {Y_{i t}(1)} & {\text { if } W_{i}=1 \text { and } t>T_{0}}\end{array}\right. .
$$
The goal is to estimate the causal effect for the treated unit at each post-treatment time $t > T_0$, $\tau_{t} = Y_{1t}(1) - Y_{1t}(0) = Y_{1t} - Y_{1t}(0)$, as well as the average treatment effect on the treated unit after time $T_0$, 
$$\tau = \frac{1}{T-T_0}\sum_{t=T_0+1}^T \tau_t.$$
Since we observe the treated outcome $Y_{1t} = Y_{1t}(1)$ for the treated unit at $t > T_0$, the challenge is to impute the missing potential outcome for the treated unit at time $t > T_0$. 
Note that we adopt a finite sample causal inference framework here: our goal is to estimate the impact for California, rather than for a hypothetical population \citep[see][]{imbens2015causal}.

\subsection{Bayesian causal inference}
\label{sec:causal}

We adopt a Bayesian approach to estimating causal quantities. There are many subtle issues that arise in Bayesian causal inference that are not central to our discussion. For instance, identification in the Bayesian approach is conceptually distinct from identification in non-Bayesian causal inference: if the prior distribution is proper then the posterior distribution will also be proper, regardless of whether the parameters in the likelihood are fully or partially identified \citep[see][]{gustafson2010bayesian, imbens2015causal}. 
See \cite{ding2018causal} and \citet{oganisian2020practical} for additional background. \citet{pang2020bayesian} and \citet{menchetti2020estimating} give overviews of Bayesian causal inference for panel or time series data.

At a high level, Bayesian causal inference views missing potential outcomes as unobserved random variables to be imputed \citep{rubin1978bayesian, ding2018causal}. 
We therefore break the problem into two parts: first, estimate a Bayesian model using the observed data; second, use this model to obtain the posterior (predictive) distribution for the missing potential outcomes.
Specifically, let $\mathbf{Y^{\mis}} = \{Y_{it}(0)\}_{i = 1, t > T_0} \in \mathbb{R}^{T - T_0}$ be the missing potential outcomes for the treated unit after $T_0$, and let $\mathbf{O}^{\obs} = (\{Y_{it}(0)\}_{i > 1},$  $\{Y_{it}(0)\}_{i = 1, t \leq T_0},$ $\{Y_{it}(1)\}_{i = 1, t > T_0},$ $\mathbf{W})$ be the set of observed data, including all control outcomes, the treated unit's pre-treatment outcomes, the treated unit's observed outcomes post-treatment, and the set of treatment assignments. We introduce auxiliary covariates in Section \ref{sec:outcome_model}.

The goal is to impute the missing potential outcomes, $\mathbf{Y^{\mis}}$, by drawing from the posterior predictive distribution, 
$\mathbb{P}(\mathbf{Y^{\mis}} | \theta, \mathbf{O^{\obs}})$, governed by a model parameter $\theta$. 
We can then estimate $\hat{\tau}_t = Y_{1t} - Y^\ast_{1t}(0)$ at post-treatment time $t > T_0$, where $Y_{1t}^\ast(0)$ is a draw from the posterior predictive distribution for $Y_{1t}(0)$ for unit 1 at time $t$.
To do so, we follow 
\citet{richardson2011transparent} and use the fact that: 
\begin{align*}
    \mathbb{P}\{\mathbf{Y^{\mis}}, \theta \mid \mathbf{O^{\obs}}\} =  \mathbb{P}\{\theta \mid \mathbf{O^{\obs}} \} \cdot \mathbb{P}\{\mathbf{Y^{\mis}} \mid \theta, \mathbf{O^{\obs}}\},
\end{align*}
where $\mathbb{P}\{\mathbf{Y^{\mis}}, \theta | \mathbf{O^{\obs}}\}$ is the posterior distribution of our model parameters and the missing potential outcomes given observed data. 
Now, to obtain the target posterior predictive distribution, we first specify a prior distribution $\mathbb{P}(\theta)$ and use Bayes rule to obtain the posterior distribution $\mathbb{P}(\theta | \mathbf{O^{\obs}})$. With this posterior over the model parameters, we can repeatedly draw samples of $\mathbf{Y^{\mis}}$, yielding the joint posterior on the left-hand side of the above expression. Obtaining the target posterior predictive distribution is then a matter of rearranging the above expression to isolate $\mathbb{P}(\mathbf{Y^{\mis}} | \mathbf{O^{\obs}})$, marginalizing over $\theta$.
This approach rests entirely on the form of the model and the estimated model parameters. See \citet{pang2020bayesian} for an alternative framing that explicitly states sufficient conditions in the form of identifying assumptions, and \citet{miratrix2020ITS} for a discussion of model-based uncertainty in a similar setting.

Formally, we assume that units and time periods are exchangeable conditional on the model parameters $\theta$, which have a prior distribution $p(\theta)$.
Following \cite{ding2018causal}, we partition the parameter space into $\theta^m$, which governs the marginal potential outcome distribution, and $\theta^a$, which governs the association between $Y_{it}(0)$ and $Y_{it}(1)$. In principle, we could treat $\theta^a$ as a sensitivity parameter; for the purposes of this paper, we assume that the potential outcomes are conditionally independent given data and model parameters, allowing us to ignore $\theta^a$.\footnote{It is tempting to sidestep this issue entirely and view this framework as solely focusing on the control potential outcomes. The challenge is that a (formal) Bayesian approach conditions on all data and parameters. Thus, this additional restriction makes explicit the fact that we are not incorporating any information from the observed $Y_{it}(1)$. See \citet{ding2018causal} for further discussion.} The posterior for $\theta^m$ is then:
\begin{align*}
\mathbb{P}\{\theta^m \mid \mathbf{O^{\obs}} \} \propto
p(\theta^m)  \prod_{(i,t)\in \mathcal{C}} \mathbb{P}(Y_{it}(0) \mid \theta^m, W_{it}),
\end{align*}
where $\mathcal{C}$ denotes the ``control'' unit and time periods: $t = 1, \ldots, T_0$ for $i = 1$ and $t = 1, \ldots, T$ for other units. Let $Y^\ast_{1t}(0)$ denote the posterior predictive distribution over the missing potential outcomes, $Y^\ast_{1t}(0) = \mathbb{P}\{\mathbf{Y^{\mis}} \mid \theta,  \mathbf{O^{\obs}} \} $. 
The estimated treatment effect at time $t$ is then the observed treated potential outcome minus the predicted missing potential outcome at post-treatment time $t > T_0$, $\hat{\tau}_t = Y_{1t} - Y^\ast_{1t}(0)$, where repeatedly drawing $Y^\ast_{1t}(0)$ from the posterior predictive distribution propagates the uncertainty.

A key consequence of this setup is that we never use post-treatment data from the treated unit in our modeling (or, equivalently, we have an arbitrarily flexible model for the observed treated data).
Rather, our focus is on modeling the observed control potential outcomes.
And by imputing $Y_{it}^\ast(0)$ from the posterior \emph{predictive} distribution, our inference is inherently finite sample, allowing us to directly reason about the effect of APPS on California's murder rate \citep{imbens2015causal}.

Finally, while the setup here is quite general, in practice we restrict ourselves to a structure where the control potential outcomes are a \emph{model component} plus additive noise. 

\begin{assumption}[Additive, separable error structure]
\label{a:additive_noise}
The control potential outcomes, $Y_{it}(0)$, are generated as:
\begin{align*}
    Y_{it}(0) = \text{model}_{it}(\theta) + \varepsilon_{it} \qquad\qquad \mathbb{E}[\varepsilon_{it}] = 0
\end{align*}
where $\text{model}_{it}(\theta)$ is the outcome model for unit $i$ at time $t$, governed by parameters $\theta$; and $\varepsilon_{it}$ is mean-zero noise.
\end{assumption}

This assumption, sometimes referred to \emph{strict exogeneity} in the panel data setting \citep[see, e.g.,][]{imai2019use},
places no restrictions on the model terms and $\theta$. This allows for \emph{non-parametric} modelling of the outcomes where $\theta$ is potentially infinitely dimensional --- but does not allow treatment assignment to depend on the errors. In other words, given the model and the possibly infinitely many parameters $\theta$, the error distribution for the treated unit's (control) potential outcomes are exchangeable with other units' errors.
In our analysis of the APPS program, one implication of Assumption \ref{a:additive_noise} is that California's decision to implement APPS must not be in response to an \emph{idiosyncratic shock} to the homicide rate. This is consistent with the history of the program \citep[see, e.g.,][]{CA_DOJ_2015}, which largely came about due to changes in IT infrastructure and an entrepreneurial California Attorney General.
As we discuss in Section \ref{sec:discussion}, we can relax Assumption \ref{a:additive_noise} within the fully Bayesian workflow, for instance, via sensitivity analysis \citep{franks2019flexible}.

\subsection{Related methodological research}
\label{sec:related}

Our paper bridges several robust literatures. First, we build on the many existing methods for estimating causal effects for a single treated unit with panel data; see \citet{samartsidis2019assessing} for a recent review. 
Within this, there are several important threads. Most directly relevant are recent papers on Bayesian implementations of (or alternatives to) the synthetic control method, including \citet{tuomaala2019bayesian, kim2020bayesian, pang2020bayesian,pinkney2021improved}. 
Of particular interest is \citet{causalImpact}, who propose a Bayesian structural time series model for estimating causal effects, focused on a single treated series; see \citet{menchetti2020estimating} for a recent extension. We explore connecections between \citet{causalImpact} and our proposed method in Appendix \ref{sec:appendix_causal_impact}.
There are also several recent papers that directly estimate factor models for causal effects \citep[e.g.,][]{xu2017gsynth, athey2018matrix}, as well as approaches that directly address multiple outcomes \citep{samartsidis2020bayesian}. 
As we discuss below, we incorporate many of the novel ideas in these papers, including many of the prior choices. However, our proposed MTGP framework is typically more general, and, we believe, more conducive to estimating causal effects with panel data.

Next, there is a small but growing set of papers on the use of Gaussian Processes for estimating causal effects. To date, nearly all of these papers have focused on the cross-sectional setting \citep{alaa2017bayesian, schulam2017reliable,ray2018semiparametric, huang2019gpmatch, branson2019nonparametric,witty2020gp, ren2021bayesian}. See \citet{oganisian2020practical} for a recent review. 

We are aware of very few papers specifically using GPs to estimate causal effects in panel data settings, although, as we discuss in Section \ref{sec:weights}, many existing estimators can be written as special cases of this approach.\footnote{See also \citet{karch2018gaussian}, who use GPs for latent growth curve modeling in psychometrics, and \citet{huang2015identification}, who instead focus on causal discovery.}
Two working papers are especially relevant for our work.
The first is recent work from \citet{modi2019generative}, who also propose a GP approach for estimating causal effects in this setting.
Their main proposal, however, is very different than ours, with a focus on GP for the frequency domain. Thus, we view our model as a useful complement to theirs. 
A second recent proposal is \citet{carlson2020gp}, who also uses a GP approach but does not exploit the multitask structure.
In independent work, \citet{montgomery2022mtgp} also propose using a multitask GP for panel data. Unlike in our application, however, they consider the setting with many units divided into a treatment and control group and are focused on modeling so-called ``parallel-ish'' trends between these groups. In a different direction, \citet{Antonelli2021} propose estimating effects under staggered adoption of treatment using a Bayesian vector autoregressive model that allows for spatial and time dependence, which is in some ways analogous to the multitask structure we propose.  Finally, while not explicitly causal, of particular relevance to our approach is \citet{flaxman2015fast}, who propose a hierarchical GP model that is similar to what we discuss here.

\section{Multitask Gaussian Processes for Causal Inference}
\label{sec:mtgp_main}

We now give a high-level description of Multitask Gaussian Processes for causal effects with panel data, beginning with the simpler setting of single-task GPs. We discuss specific model implementation choices in Appendix \ref{sec:implementation}.
To fix terminology, we use \emph{tasks} throughout to refer to units; thus, a single-task GP is for a single unit while multi-task GPs are for multiple units. This is in contrast to the original use \citep[e.g.,][]{bonilla2008multi} where tasks instead refers to outcomes.

\subsection{Review: Single Task Gaussian Processes}
\label{sec:single_gp_intro}

We begin by reviewing the setup and properties of a Single-Task GP, which focuses solely on the treated unit; see \citet{williams2006gaussian} for a textbook discussion. While the original GP setup is quite general, to fix ideas we initially focus on applying GPs to the \emph{interrupted time series} or \emph{horizontal regression} setting \citep{athey2018matrix}, in which we use the pre-treatment outcomes for California to forecast post-treatment outcomes in the absence of the intervention. 
Specifically, we initially model the (control) murder rate for California at time $t$, $Y_{1t}(0)$, as the sum of a model component $f_{1t}$ and a mean-zero, independent noise component (as in Assumption \ref{a:additive_noise}):\footnote{For simplicity, we consider a de-meaned outcome series for now, allowing for a global intercept in Section \ref{sec:extensions}. Nonetheless, as we show in Equation \eqref{eq:gp_predmean}, the posterior mean for a future data point can be nonzero even when $\mathbf{f}_1$ has mean zero; see \citet{rasmussen2004gaussian}.} 
$$
\begin{aligned} \label{eq:single_task_GP_simple}
Y_{1t}(0) &= f_{1t} + \varepsilon_{1t} \\
\mathbf{f}_1 &\sim \mathcal{GP}\left(0, k_{\text{time}}\right)\\ 
\varepsilon_{1t} &\overset{\text{iid}}{\sim} N(0, \sigma^2)
\end{aligned}
$$
where $\mathcal{GP}(\cdot)$ is a Gaussian Process prior with corresponding \emph{time kernel}
$k_{\text{time}}$, $\mathbf{f}_1 = (f_{11},\ldots,f_{1T})$ is the vector of model components for the treated unit, and where we initially consider homoskedastic noise $\sigma^2$, which we relax below. 

The key idea is that the GP prior over $\mathbf{f}_1$ incorporates smoothness over time via $k_{\text{time}}(t, t')$, where similar values of $t$ imply larger covariances. Here we use the squared exponential kernel:
    $$k_{\text{time}}(t,t') = \exp\left(-\frac{|t - t'|^2}{2\rho}\right),$$
where $\rho$ is the length scale.

The choice of kernel function plays an important role in defining the behavior of the model. 
For the family of kernels we consider, the relation between the time difference $|t - t'|$ and the kernel value $k_{\text{time}}(t,t')$ is central in determining how we expect the model to behave over time. If $k_{\text{time}}(t,t')$ is large even for far-apart time periods, then the model components $f_{1t}, f_{1t'}$ are assumed to be highly correlated; this corresponds to an assumption that the underlying model is very smooth over time. Conversely, if $k_{\text{time}}(t,t')$ is low even for close time periods, then the model components are assumed to be close to independent, which corresponds to assuming that the model varies strongly over time. 
The length scale hyper-parameter $\rho$ explicitly controls the level of smoothness in the model components.
Many other kernels are possible and appropriate for other settings, such as periodic kernels; again see \citet{williams2006gaussian} for a textbook discussion.

This GP model implies that the control potential outcomes follow a multivariate Normal distribution, so the conditional distributions have a convenient form. Specifically, as in Section \ref{sec:causal}, we can partition the vector of control potential outcomes for the treated unit into the observed and missing components: $\mathbf{Y}^{\obs}_1 = (Y_{11}^{\obs}, Y_{12}^{\obs}, \ldots, Y_{1T_0}^{\obs})$, and $\mathbf{Y}^{\mis} = (Y_{1T_0+1}(0), Y_{1T_0+2}(0), \ldots, Y_{1T}(0))$. 
Next, we define $\mathbf{K}_\text{{time}} \in \R^{T \times T}$ as the \emph{time kernel matrix}, where $\mathbf{K}_{\text{time}, tt'} = k_\text{time}(t,t')$.
Suppressing the superscript \emph{time} to reduce clutter, we can then partition this multivariate Normal model as:
\begin{align}
    \begin{pmatrix}
    \mathbf{Y}^{\mis}_1 \\
    \mathbf{Y}^{\obs}_1
    \end{pmatrix} \;\; \sim \;\; \text{MVN}\left[ 
    \begin{pmatrix}
    0 \\ 0
    \end{pmatrix}, \;
    \begin{pmatrix}
    \mathbf{K}_{\text{mis}} & \mathbf{K}_{\text{mis,obs}} \\
    \mathbf{K}_{\text{obs,mis}} & \mathbf{K}_{\text{obs}}
    \end{pmatrix} + \sigma^2\mathbb{I}
    \right],
\end{align}
where $\mathbf{K}_{\obs} \in \R^{T_0 \times T_0}$ is the time kernel matrix for observed points, $\mathbf{K}_{\mis} \in \R^{(T-T_0) \times (T - T_0)}$ is the time kernel matrix for unobserved points, and $\mathbf{K}_{\obs, \mis} \in \R^{T_0 \times (T - T_0)}$ is a rectangular matrix of time kernel evaluations for pre- and post-treatment times, whose $t,t'$ element is the kernel value $k_\text{time}(t, T_0 + t')$.

We can then write the the posterior predictive distribution for California's outcomes at future time points, $t > T_0$, as:
\begin{align}
\nonumber
\mathbf{Y^{\mis}_1} | \mathbf{Y}_1^{\obs} &\sim \text{MVN}(\boldsymbol{\mu}, \mathbf{\Sigma})\\
\label{eq:gp_predmean}
    \boldsymbol{\mu} &=  \mathbf{K}_{\text{mis,obs}}(\mathbf{K}_{\obs} + \sigma^2\mathbb{I})^{-1}
    \mathbf{Y}_1^\obs  \\
\label{eq:gp_predcov}
    \mathbf{\Sigma} &= (\mathbf{K}_{\mis}+\sigma^2\mathbb{I}) - \mathbf{K}_{\text{mis,obs}}(\mathbf{K}_{\obs} + \sigma^2\mathbb{I})^{-1}\mathbf{K}_{\text{obs,mis}}.
\end{align}
As we discuss in Section \ref{sec:weights}, the posterior mean is a linear combination of the pre-treatment observations, $\mathbf{Y}_1^{\obs}$.

\subsection{Multitask Gaussian Processes}
\label{sec:mtgp_intro}

We now turn to modeling the murder rate for all states jointly, rather than modeling California's outcome series alone. The primary change is that each observation is now a unit-time pair $(i,t)$, so the kernel measuring similarity between points is doubly-indexed: $k(~(i,t), (i', t')~)$. 
The \emph{multitask} GP framework, originally developed for the setting with a single unit but many outcomes \citep{goovaerts1997geostatistics, bonilla2008multi}, is a natural approach for capturing similarity across multiple dimensions.
As in the single-task GP setting and Assumption \ref{a:additive_noise}, we assume that the control potential outcomes consist of a model component plus additive (Normal) noise:
\begin{align*}
    Y_{it}(0) &= f_{it} + \varepsilon_{it} \\
    \mathbf{f} &\sim \mathcal{GP}(0, k)\\ 
    \varepsilon_{it} &\overset{\text{iid}}{\sim} N(0, \sigma^2),  
\end{align*}
where $\mathbf{f} = (f_{11},\ldots,f_{NT})$ is the vector of model components for all units and time periods.
We again begin with the simplified setting in which the outcome series is de-meaned by a global intercept; we incorporate a mean model in Section \ref{sec:extensions}.

While this framework is quite general, we focus on two common simplifications: allowing for separable unit and time covariances, and imposing a low-rank structure on the unit covariance.

\paragraph{Separable unit and time kernels.}
The first key idea is to decompose the GP kernel into separate kernels that capture similarity across time (the ``time covariance'') and similarity across units (the ``unit covariance''):
\begin{equation} \label{eq:kernel_decomp}
k(~(i,t), (i', t')~) = k_{\text{unit}}(i, i') \times k_{\text{time}}(t, t'),
\end{equation} 
where $k_{\text{unit}}$ defines similarity between the $N$ units and  $k_{\text{time}}$ defines similarity between the $T$ time periods.
This is a strong restriction that implies that the covariance across units, $ k_{\text{unit}}$,  is constant in time. (Conversely, this implies that the time covariance $k_{\text{time}}$ is constant across units.) We discuss some approaches for relaxing this restriction in Section \ref{sec:discussion}.

\paragraph{Low-rank unit covariance.}
Even with the separable kernel restriction, the resulting model is overparameterized. In particular, the similarity between the $T$ time points, $k_{\text{time}}(t, t')$, has a natural parameterization in terms of the difference in time, $|t - t'|$, which is a scalar. By contrast, the similarity between the $N$ units, $k_{\text{unit}}(i, i')$, involves $N(N-1)/2$ comparisons.
Following \citet{goovaerts1997geostatistics} and \citet{bonilla2008multi}, we  simplify the problem by using the \emph{intrinsic coregionalization model} (ICM), which imposes a low-rank assumption on the unit covariance $k_{\text{unit}}$ \citep[see also][Section 4.2]{flaxman2015fast}. 

Specifically, we assume that the unit kernel matrix $\mathbf{K}_{\text{unit}} \in \R^{N \times N}$, where $\mathbf{K}_{\text{unit}, {i,i'}} = k_\text{unit}(i,i')$, has the form $\mathbf{K}_{\text{unit}} = \boldsymbol{\beta}\boldsymbol{\beta}'$ for some matrix $\boldsymbol{\beta} \in \R^{N \times J}$, where $J$ is the rank of the unit kernel matrix, and where we place a hyperprior on $\boldsymbol{\beta}$ and infer it from the data.\footnote{Equivalently, we can write this restriction as $k_{\text{unit}}(i, i') = \langle \boldsymbol{\beta}_i, \boldsymbol{\beta}_{i'}\rangle$, i.e., a linear kernel between weights for units $i$ and $j$, inferred from the data.}
While initially proposed in terms of restrictions on the matrices, we can equivalently write the ICM model as an assumption that the outcome model for each of the $i=1,\dots,N$ units is a unit-specific linear combination of $J$ latent draws from a Gaussian process, i.e.,
\begin{equation}
    \label{eq:icm_generative}
    f_{it} = \sum_{j=1}^J \beta_{ij}u_{jt},
\end{equation}
where each latent $\mathbf{u}_j$ is itself a GP with kernel $k_{\text{time}}$,
and where each unit has its own set of unit-specific factor loadings $\boldsymbol{\beta}_{i} = (\beta_{i1}, \ldots, \beta_{iJ})$.
Thus, the underlying model components $\mathbf{u}_1, \ldots,\mathbf{u}_J$ are shared across units and induce a dependence across them.
The number of shared latent functions $J$ corresponds to the rank of the unit kernel matrix $\mathbf{K}_{\text{unit}}$. This is a key hyper-parameter governing the overall complexity of the model; a higher rank $J$ will lead to a more flexible model that potentially has the risk of overfitting. We discuss choosing the rank $J$ via Bayesian model criticism in Section \ref{sec:PPCs}, though we could also impose a horseshoe prior over $J$ \citep{carvalho2009horseshoe}.
The resulting model is then:
\begin{align*}
Y_{it}(0) &= f_{it} + \varepsilon_{it} \\
f_{it} &= \sum_{j=1}^J \beta_{ij} u_{jt} \\
\mathbf{u}_j &\sim \mathcal{GP}(0,  k_{\text{time}})\\
\beta_{ij} &\overset{\text{iid}}{\sim} \mathcal{N}(0,1) \\
\varepsilon_{it} & \overset{\text{iid}}{\sim} \mathcal{N}(0,\sigma^2).
\end{align*}
We estimate this model using the probabilistic programming language, \texttt{Stan} \citep{stan}. See Appendix \ref{sec:implementation} for hyperprior distributions and additional implementation details; many other hyperprior distributions are possible.

As above, we can re-write this model in matrix notation and can represent the decomposition in Equation \eqref{eq:kernel_decomp} as the \emph{Kronecker product} between the two covariances: 
$$\mathbf{K} \equiv \mathbf{K}_{\text{unit}} \otimes \mathbf{K}_{\text{time}} \in \R^{NT \times NT},$$
where any one entry of $\mathbf{K}$ can be written as the product of the unit and time covariance, as in Equation \eqref{eq:kernel_decomp}.

We can similarly divide this (admittedly unwieldy) kernel matrix into a  matrix for observed unit-time pairs, $\mathbf{K}_{\obs} \in \R^{((n-1)T + T_0) \times ((n-1)T + T_0)}$, a matrix for the unobserved counterfactual post-treatment outcomes for the treated unit $\mathbf{K}_{\mis} \in \R^{(T - T_0) \times (T - T_0)}$, and a rectangular matrix for the kernel between the outcomes for the observed unit-times and the post-treatment outcomes for the treated unit $\mathbf{K}_{\obs, \mis} \in \R^{((n-1)T + T_0) \times (T - T_0)}$. With this setup, the joint distribution of the missing post-treatment outcomes for the treated unit and the observed untreated outcomes are multivariate Normal:
\begin{align}
    \nonumber
    \begin{pmatrix}
    \mathbf{Y^{\mis}} \\
    \mathbf{Y^{\obs}}
    \end{pmatrix} \;\; \sim \;\; \text{MVN}\left[ 
    \begin{pmatrix}
    0 \\ 0
    \end{pmatrix}, \;
    \begin{pmatrix}
    \mathbf{K}_{\mis} & \mathbf{K}_{\text{mis,obs}} \\
    \mathbf{K}_{\text{obs,mis}} & \mathbf{K}_{\text{obs}} 
    \end{pmatrix} + \sigma^2\mathbb{I}
    \right].
\end{align}
The resulting conditional mean and variance have the same form as in Equations \eqref{eq:gp_predmean} and \eqref{eq:gp_predcov}.

As we discuss in Section \ref{sec:weights} below, there are two important special cases of this model, which correspond to setting either the time or unit covariances to be the identity.
Setting the \emph{time} covariance matrix to the identity, $\mathbf{K}_{\text{time}} = \mathbb{I}_T$, corresponds to a model with no smoothness in the underlying latent factors, $\mathbf{u}$, allowing the function value to change arbitrarily between time points. The resulting approach is a (Bayesian) linear factor model \citep[e.g.,][]{xu2017gsynth, athey2018matrix, samartsidis2020bayesian}. In Section \ref{sec:weights} below, we further connect this to ``vertical regression," which finds a set of weights over units that optimize some imbalance criterion \citep{doudchenko2016balancing}.
On the other hand, setting the \emph{unit} covariance matrix to the identity, $\mathbf{K}_{\text{unit}} = \mathbb{I}_N$, corresponds to a model in which there is no correlation in the underlying process for each unit, and each model component is an independent GP. The model therefore reduces back to the single-task GP discussed in Section \ref{sec:single_gp_intro} above.
Finally, while we focus on the ICM model with separate time and unit covariances, there is a large literature on more elaborate generalizations, such as the \emph{semiparametric latent factor model} and the \emph{linear model of coregionalization}; see \citet{alvarez2017gp} for a review.

\paragraph{Computational considerations.}
An important practical constraint in applying multi-task Gaussian processes is the prohibitive computational burden incurred when naively taking samples. Specifically, the Cholesky decomposition of the covariance matrix defined by $\boldsymbol{K}_\text{time} \otimes \boldsymbol{K}_\text{unit}$ has computational complexity cubic with respect to the product of the number of units and time points, $\mathcal{O}((TN)^3)$.
Fortunately, we can take advantage of the structure of the covariance matrix to dramatically reduce this complexity to $\mathcal{O}(T^3 + NJ)$ using properties of Kronecker products and low-rank approximations to each constituent matrix. 
Let $\boldsymbol{C}_\text{time}$ and $\boldsymbol{C}_\text{unit}$ denote the decomposition of the time and unit matrices, $\boldsymbol{K}_\text{time}$ and $\boldsymbol{K}_\text{unit}$, such that $\boldsymbol{K} = \boldsymbol{C}\boldsymbol{C}^\top$, and let $\boldsymbol{C}_\otimes$ be the decomposition of the Kronecker product of $\boldsymbol{K}_\text{time}$ and $\boldsymbol{K}_\text{unit}$.
From the mixed product property of Kronecker products, we have
\begin{align*}
 \boldsymbol{C}_\otimes\boldsymbol{C}_\otimes^\top = \left(\boldsymbol{C}_\text{time} \otimes \boldsymbol{C}_\text{unit}\right)\left(\boldsymbol{C}_\text{time} \otimes \boldsymbol{C}_\text{unit}\right)^\top,
\end{align*}
which allows for the separate decomposition of each covariance. This is $\mathcal{O}\left(T^3\right)$ for the time covariance,\footnote{In cases where $T$ is large, a number of methods can be used to improve computational efficiency further by reducing the complexity of taking the decomposition of $\boldsymbol{K}_\text{time}$ \citep[see][]{solin2020hilbert, wilson2015thoughts, flaxman2015fast}.} and $\mathcal{O}\left(NJ\right)$ for the unit covariance, which uses our low-rank representation $\boldsymbol{K}_\text{unit} = \boldsymbol{\beta}\boldsymbol{\beta}^\top$, $\boldsymbol{\beta} \in \mathbb{R}^{N\times J}$.\footnote{We can further reduce the computational overhead incurred for sampling from this distribution. Let $\boldsymbol{Z} \in \mathbb{R}^{T\times J}, \boldsymbol{Z}_{i,j} \sim \mathcal{N}(0,1)$. 
Sampling by taking $\tilde{\boldsymbol{z}} = \boldsymbol{C}_\text{time}\otimes\boldsymbol{C}_\text{unit}\text{vec}(\boldsymbol{Z})$ is naively $\mathcal{O}(TJ)$. We can reduce this to $\mathcal{O}\left(T^2 + NJ\right)$ by using the ``vectorization trick,'' $\left(\boldsymbol{C}_\text{time}\otimes\boldsymbol{C}_\text{unit}\right)\text{vec}\left(\boldsymbol{Z}\right) = \text{vec}\left(\boldsymbol{C}_\text{unit}\boldsymbol{Z}\boldsymbol{C}_\text{time}^\top\right)$.}

\subsection{Extensions}
\label{sec:extensions}

Building on the broad GP literature, we can immediately extend the MTGP model above to better match our application. We focus on three main extensions here: incorporating a mean model; allowing for a non-Gaussian likelihood; and modeling multiple outcomes simultaneously.

\subsubsection{Incorporating a mean model}
\label{sec:outcome_model}

Thus far, we have considered mean-zero Gaussian Processes for the structural component of the control potential outcomes. A natural extension is to instead allow for the GP to have a non-zero mean, $m_{it}$:\footnote{The distinction between the mean function and the covariance matrix in the MTGP is somewhat artificial, and is typically chosen for clarity and to facilitate the choice of prior; see \citet[][Ch. 2.7]{rasmussen2004gaussian} and  \citet{kanagawa2018gaussian}. In this application, making the mean function explicit allows us to more directly connect the proposed approach to the panel data literature.}
\begin{align*}
    Y_{it}(0) &= f_{it} + \varepsilon_{it} \\
    \mathbf{f} &\sim \mathcal{GP}(\mathbf{m}, k) \\
    \varepsilon_{it} &\sim N(0, \sigma^2),  
\end{align*}
where $\mathbf{m} = (m_{11},\ldots,m_{NT})$ is a length $N \times T$ prior mean vector. Extending the formula in Equation \eqref{eq:gp_predmean}, the posterior mean for a new control potential outcome is therefore:
$$
    \boldsymbol{\mu} = \mathbf{m} +  \mathbf{K}_{\text{mis,obs}}(\mathbf{K}_{\obs} + \sigma^2\mathbb{I})^{-1}\left(
    \mathbf{Y}^\obs - \mathbf{m}\right).
$$
\noindent We return to this representation in Section \ref{sec:worst_case_error} below.

Explicitly incorporating a mean model is useful in a panel data setting because we can more directly connect the MTGP framework to common estimators, such as Comparative Interrupted Time Series and so-called fixed effects models \citep[for a recent review, see][]{samartsidis2019assessing}. One natural model is to set $m_{it}$ to have unit- and time-specific intercepts:
\begin{align*}
m_{it} &= \mu + \text{unit}_i + \text{time}_t \\
\text{unit}_i &\sim N(0, \sigma^2_{\text{unit}}) \\
\textbf{time} &\sim \mathcal{GP}(0, k_{\text{global}})
\end{align*}
where $\mu$ is a global intercept, $\text{unit}_i$ is a unit-specific intercept for unit $i$, and $\text{time}_t$ is a time-specific intercept for time $t$.
The unit-specific intercepts capture the fact that one state may have a higher baseline rate of homicides than another, even if they have similar trends.
The time-specific intercepts capture the trend in the homicide rate across the United States.
This allows us to estimate and account for underlying trends in the homicide rate over time.
Here we place independent Normal priors on the unit-specific intercepts, and a global Gaussian Process prior on the time-specific intercepts. We could similarly allow for a parametric trend over time, such as a baseline linear model common in Interrupted Time Series approaches \citep[see][]{miratrix2020ITS}. 

Finally, we can extend the mean model to incorporate (time invariant) auxiliary covariates, $X_i \in \mathbb{R}^p$:\footnote{With additional restrictions (e.g., ``exogeneity''), we can extend this to include time-varying covariates as well. See \citet{imai2019use} for discussion.}
$$m_{it} = \mu + \text{unit}_i + \text{time}_t + \eta'X_i,$$
where $\eta$ is a vector of regression coefficients, that we can place a prior on. As with the unit-specific intercepts, we could instead include this as part of the GP kernel, for example by incorporating it into the unit covariance.

\subsubsection{Non-Gaussian Likelihoods and heteroskedasticity}
\label{sec:non_gaussian_lik}

Another natural GP extension is to allow for non-Gaussian observation models, which is essential for well-calibrated uncertainty quantification.
Analogous to using alternative link functions with generalized linear models \citep[][Ch. 21.3]{gelman2013bayesian},  we can model the control potential outcomes as:
$$Y_{it}(0) \sim g^{-1}(f_{it}),$$
where $g^{-1}(\cdot)$ is an appropriate link function and $f_{it}$ is the latent GP factor. As above, we could also incorporate a mean model, $Y_{it}(0) \sim g^{-1}(m_{it} + f_{it})$. 
See \citet{naish2008generalized, hensman2015scalable} for discussions of related computational issues.

We highlight two link functions here. First, the outcome in our application is the number of murders, which is more naturally modeled via a Poisson (or, perhaps, Negative Binomial) link. A Poisson model also captures the restriction that the variance in the murder rate increases with the mean murder rate. We use this as our primary model below. 
Alternatively, we could allow for a ``robust'' Gaussian Process by allowing for $t$-distributed rather than Normal errors; see \citet{jylanki2011robust}. In our application, the estimates using $t$-distributed errors are nearly identical to those using Normal errors.

Finally, an important practical feature of our application is that populations vary considerably across states. Thus, we expect that the underlying variability in the murder rate will be much larger in, say Vermont or Wyoming than in California or Texas. With a Gaussian likelihood, we can parametrize this as:
$$Y_{it}(0) \sim N\left(f_{it}, \frac{\sigma^2}{N_{it}}\right)$$
where $N_{it}$ is the population for state $i$ at time $t$. We can similarly allow for a population offset in a Poisson model.\footnote{More broadly, we could incorporate much more complex noise models, such as an additional GP on the noise term \citep[e.g.,][]{naish2008generalized, hensman2015scalable}. This is a challenging model to fit given our limited sample size.}
Allowing for heteroskedasticity by size is an important departure from many Frequentist panel data methods used in this setting; see \citet{samartsidis2019assessing}.

\subsubsection{Multiple outcomes}
\label{sec:multiple_outcomes}

We can extend the MTGP approach to allow for multiple correlated outcomes, following the large literature on multi-output GPs \citep{alvarez2017gp}; see also \citet{samartsidis2020bayesian} for a non-GP setting. This is particularly important for our application because we are interested in assessing the impact of APPS on both gun- and non-gun-related homicides.

To do so, we can write each data point as a unit-time-outcome triple, $(i, t, \ell)$, with control potential outcome, $Y_{it\ell}(0)$. While many models are possible, we focus on a simple version with a shared factor structure across outcomes, which we can write as a multivariate Normal observation model with common covariance across units and time:
$$\mathbf{Y}_{it}(0) \sim \text{MVN}( \mathbf{f}_{it}, \Sigma),$$
where $\mathbf{f}_{it} = (f_{it1}, \ldots, f_{itL})$ and $\mathbf{Y}_{it}(0) = (Y_{it1}(0), \ldots, Y_{itL}(0))$ for outcomes $\ell = 1, \ldots, L$.
Equivalently, we can write this as an MTGP where the resulting kernel decomposes into separate unit, time, and outcome kernels:
$$k(~(i,t, \ell), (i', t', \ell')~) = k_{\text{unit}}(i, i') \;\times\; k_{\text{time}}(t, t') \;\times\; k_{\text{outcome}}(\ell, \ell').$$
Similarly, we can write this as a Kronecker structured overall kernel $\mathbf{K} = \mathbf{K_{\text{unit}}} \otimes \mathbf{K_{\text{time}}} \otimes \mathbf{K_{\text{outcome}}}$, as in \citet{flaxman2015fast}.
Unlike the unit kernel $k_{\text{unit}}(\cdot,\cdot)$, which we restrict to be low-rank, the outcome kernel $k_{\text{outcome}}(\cdot, \cdot)$ can be full rank because the number of outcomes (2) is less than the number of units (50).

\section{Guarantees and connections through a weighting representation}
\label{sec:weights}

Our discussion thus far has been entirely Bayesian. Gaussian Processes, however, are fundamentally linked with (Frequentist) kernel ridge regression; a large literature exploits this connection to describe the Frequentist property of GPs; see \citet{kanagawa2018gaussian} for a recent review. Here we adapt those results to our panel data setting. We first represent the MTGP estimate as a weighting estimator with separate unit and time weights, and give error bounds under fairly general conditions on the latent (noiseless) outcomes. We then describe several special cases to better illustrate this result.

\subsection{Weighting representation}
\label{sec:worst_case_error}
We now show that the MTGP estimate with separable unit and time kernels has a corresponding representation as a weighting estimator with separate unit and time weights. This setup differs from our fully Bayesian formulation above; to make this connection with weighting, we instead consider a Frequentist setup that assumes a known kernel. Specifically, conditional on hyperparameters, the MTGP posterior mean can be written as the solution to a constrained optimization problem that minimizes the worst-case mean-square error across the ``noiseless'' latent functions, $f$. This representation exploits the connection between Gaussian Processes and kernel ridge regression \citep{kanagawa2018gaussian}; see \citet{hazlett2018trajectory} for additional discussion of kernel weighting methods for panel data.

To set up the problem, we again assume that (control) potential outcomes are the sum of structural and error components, $Y_{it}(0) = f_{it} + \varepsilon_{it}$, where $\varepsilon_{it}$ is mean-zero noise. We assume that $f$ is ``well behaved'' in the sense of belonging to a Reproducing Kernel Hilbert Space (RKHS);
this setup is quite flexible and incorporates a wide range of function classes \citep[see, e.g.,][]{wainwright2019high}. We focus on the ICM kernel, $\mathbf{K} = \mathbf{K}_{\text{time}} \otimes \mathbf{K}_{\text{unit}}$, where, as in other Frequentist analyses, the kernel is assumed known.

We can then represent the MTGP as a weighting estimator. We consider two basic forms, corresponding to the weight-space and function-space interpretations of Gaussian processes~(\citet{williams2006gaussian}, Ch. 2). We first show that the MTGP minimizes the worst-case imbalance in the noiseless latent functions $f$. We then show the equivalent formulation in terms of imbalance on the observed outcomes $Y$. We initially consider mean-zero GPs, and incorporate a prior mean function below.

\begin{restatable}{proposition}{mtgpnllicm}
\label{prop:mtgp_nll_icm}
Let $Y_{it}(0) = f_{it} + \varepsilon_{it}$, where $f$ is a fixed function. 
Let $\mathcal{H}_k$ be the RKHS for kernel $k$ with Hilbert-Schmidt norm, $\|f\|_{\mathcal{H}_k}^2 = k(f, f)$.
Then let $f$ be contained in the unit ball of the reproducing Hilbert space, $\|f\|_{\mathcal{H}_k} \leq 1$, where $k(~(i,t), (i', t')~) = k_{\text{unit}}(i, i') \times k_{\text{time}}(t, t')$, and $\varepsilon_{it}$ are iid mean-zero random variables with observed variance $\sigma^2 > 0$.  Further, model the time covariance $\mathbf{K}_{\text{time}}$ with a squared exponential kernel with length scale $\rho$, and model the unit covariance $\mathbf{K}_{\text{unit}} = \beta\beta^\top$. 
Finally, let $\gamma^{(i)} \in \R^{N}$ and $\lambda^{(t)} \in \R^{T}$ be the vectors of unit and time weights for target unit $i$ at time $t$, and let $\boldsymbol{\gamma}$ and $\boldsymbol{\lambda}$, respectively, be the set of weights across all targets.
Then:
\begin{enumerate}
    \item[(a)]
    
The posterior (predictive) mean estimate for target observation $(1, t^\ast) \not\in \mathcal{C}$, for $t^\ast > T_0$ is:
\begin{align*}
    \hat{\mu}_{1t^\ast} = 
    \sum_{(i,t) \in \mathcal{C}}  \gamma_i^{(1)} \lambda_{t}^{(t^\ast)} Y_{it}.
\end{align*}
The unit and time weights, $\boldsymbol{\gamma}$ and $\boldsymbol{\lambda}$, have the following closed form:
\begin{align}\label{eq:weights_closed_form}
\gamma^{(i)} \otimes \lambda^{(t)} = \left(\mathbf{K}_{\text{time}} \otimes \mathbf{K}_{\text{unit}} + \sigma^2 \mathbb{I}\right)^{-1}\left(k_{\text{time}}(t, \cdot) \otimes k_{\text{unit}}(i, \cdot)\right).
\end{align}

\item[(b)] The unit and time weights also solve the following, equivalent optimization problems, in terms of (1) the weight-space formulation:
\begin{align}\label{eq:nll_icm}
    \min_{\boldsymbol{\gamma}, 
    \boldsymbol{\lambda}} \;\;\;
    \sup_{f \in \mathcal{H}_k} \;\;\;
    \sum_{(i,t) \in \mathcal{C}}\left(f_{it} - \sum_{(i', t')\in \mathcal{C}}\gamma^{(i)}_{i'} \lambda^{(t)}_{t'}f_{i't'}\right)^2 
    \;\;+\;\;    \sigma^2\|\boldsymbol{\gamma}\|^2_2\|\boldsymbol{\lambda}\|^2_2,
\end{align} 
and (2) the function-space formulation:
\begin{align} \label{eq:nll_krr}
\min_{\boldsymbol{\alpha}, \boldsymbol{\xi}} \;\;\; \sum_{(i,t) \in \mathcal{C}} \left(Y_{it} - \sum_{(i', t')\in \mathcal{C}} \underbrace{\alpha^{(i)}_{i'}k_{\text{unit}}(i, i')}_{\gamma^{(i)}_{i'}} \cdot  \underbrace{\xi^{(t)}_{t'} k_{\text{time}}(t, t')}_{\lambda^{(t)}_{t'}}Y_{i't'} \right)^2  \;\;+\;\; \sigma^2 \left(\boldsymbol{\xi} \otimes \boldsymbol{\alpha} \right)^\top \mathbf{K}\left(\boldsymbol{\xi} \otimes \boldsymbol{\alpha} \right),
\end{align}
where $\mathbf{K} = \mathbf{K}_{\text{time}} \otimes \mathbf{K}_{\text{unit}}$ and where $\alpha^{(i)} \in \R^{N}$ and $\xi^{(t)} \in \R^{T}$ are the coefficient vectors for target unit $i$ at time $t$, with $\gamma_{i'}^{(i)} = \alpha_{i'}^{(i)}k_{\text{unit}}(i, i')$ and $\lambda_{t'}^{(t)} = \xi_{t'}^{(t)}k_{\text{time}}(t, t')$, and with corresponding sets $\boldsymbol{\alpha}$ and $\boldsymbol{\xi}$.

\item[(c)] This estimate has the following out-of-training-sample estimation error for the structural component of the missing potential outcome $f_{1 t^\ast}$, 
\begin{align} \label{eq:error_bound}
\left| f_{1t^\ast} - \hat{\mu}_{1t^\ast}
  \right| \leq
  \left[ \underbrace{\left(k_{\text{time}}(t^\ast, \cdot)^\top  \lambda^{(t^\ast)}\otimes  k_{\text{unit}}(1, \cdot)^\top\gamma^{(1)}\right)}_{\text{error for } f_{1t^\ast}}
    \;\;+\;\; \underbrace{\sigma^2  \left\|\gamma^{(1)}\right\|^2_2 \left\|\lambda^{(t^\ast)}\right\|^2_2 }_{\text{irreducible noise}}  \right]^{1/2}.
\end{align}

\end{enumerate}
\end{restatable}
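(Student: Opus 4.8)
The plan is to obtain all three parts as successive consequences of the Gaussian-process/kernel-ridge duality (\citet{kanagawa2018gaussian}), specialized to the separable ICM kernel $k = k_{\text{time}} \times k_{\text{unit}}$. For part (a), I would start from the posterior-mean formula already derived in Equation \eqref{eq:gp_predmean}. Specializing that expression to the single target $(1, t^\ast)$ gives the linear-smoother form $\hat{\mu}_{1t^\ast} = w^\top \mathbf{Y}^{\obs}$ with weight vector $w = (\mathbf{K}_{\obs} + \sigma^2 \mathbb{I})^{-1} k_\ast$, where $k_\ast = k_{\text{time}}(t^\ast, \cdot) \otimes k_{\text{unit}}(1, \cdot)$ collects the kernel evaluations between the target and the control observations. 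The content of the claim is then that $w$ inherits the Kronecker structure of the kernel and factors as $\gamma^{(1)} \otimes \lambda^{(t^\ast)}$. I would establish this by passing to the joint eigenbasis of $\mathbf{K}_{\text{time}} \otimes \mathbf{K}_{\text{unit}}$ via the mixed-product property already invoked in the computational-considerations paragraph, and checking that the separate linear systems in time and in units reproduce Equation \eqref{eq:weights_closed_form}. The delicate point is that the regularizer $\sigma^2 \mathbb{I}$ does not itself split across the Kronecker product, so it must be handled through the shared eigenbasis rather than by naively factoring the inverse.

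For part (b), the two optimization problems are the weight-space and function-space views of the same estimator. For the weight-space problem \eqref{eq:nll_icm}, I would use the reproducing property to write each imbalance $f_{it} - \sum \gamma^{(i)}_{i'}\lambda^{(t)}_{t'} f_{i't'}$ as an inner product $\langle f, \, k((i,t),\cdot) - \sum \gamma^{(i)}_{i'}\lambda^{(t)}_{t'}\, k((i',t'),\cdot)\rangle_{\mathcal{H}_k}$, take the supremum over the unit ball by Cauchy--Schwarz (so the worst-case imbalance becomes the squared RKHS norm of the residual representer), and then observe that minimizing this quadratic plus the ridge term $\sigma^2\|\gamma\|^2_2\|\lambda\|^2_2 = \sigma^2\|\gamma\otimes\lambda\|^2_2$ yields exactly the normal equations solved by the weights of part (a). For the function-space problem \eqref{eq:nll_krr}, I would invoke the representer theorem: the penalty $\sigma^2(\xi\otimes\alpha)^\top \mathbf{K}(\xi\otimes\alpha)$ is precisely the squared RKHS norm of a function expanded in the control feature maps with coefficients $\xi\otimes\alpha$, so \eqref{eq:nll_krr} is ordinary kernel ridge regression under a rank-one restriction on those coefficients. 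The stated change of variables $\gamma^{(i)}_{i'} = \alpha^{(i)}_{i'} k_{\text{unit}}(i,i')$ and $\lambda^{(t)}_{t'} = \xi^{(t)}_{t'} k_{\text{time}}(t,t')$ then identifies the two solutions with each other and with part (a).

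For part (c), I would decompose the estimation error as $f_{1t^\ast} - \hat{\mu}_{1t^\ast} = \big(f_{1t^\ast} - \sum_{(i,t)\in\mathcal{C}} w_{it} f_{it}\big) - \sum_{(i,t)\in\mathcal{C}} w_{it}\varepsilon_{it}$, separating a deterministic structural term from a mean-zero noise term. The structural term is bounded, as in part (b), by the RKHS norm of the residual representer times $\|f\|_{\mathcal{H}_k} \leq 1$; this is the ``error for $f_{1t^\ast}$'' piece, and under the separable kernel it factors into its time and unit components. The noise term contributes variance $\sigma^2\|w\|^2_2 = \sigma^2\|\gamma^{(1)}\|^2_2\|\lambda^{(t^\ast)}\|^2_2$ by independence of the $\varepsilon_{it}$. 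Combining the squared structural bound with the noise variance and taking the square root gives \eqref{eq:error_bound}, and substituting the closed-form separable weights from part (a) puts both terms in the displayed factored form.

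I expect the main obstacle to be the bookkeeping around separability in the presence of the regularizer $\sigma^2\mathbb{I}$: carrying the exact unit/time factorization of the weights consistently from part (a) through the RKHS-norm computation in (b) and the residual-representer bound in (c), and in particular confirming that the worst-case structural term collapses to the stated product of time and unit kernel-weight inner products (using the optimality relation $(\mathbf{K}+\sigma^2\mathbb{I})w = k_\ast$), rather than the general, non-factoring quadratic form $k_{\ast\ast} - 2w^\top k_\ast + w^\top\mathbf{K}w$.
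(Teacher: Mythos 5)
Your proposal runs along the same backbone as the paper's proof --- the linear-smoother form of the GP posterior mean for part (a) plus the GP/kernel-ridge duality of \citet{kanagawa2018gaussian} for parts (b) and (c) --- but where the paper outsources essentially all of the content to citations, you supply the arguments, and the comparison is instructive. The paper reads the weights off the multivariate-normal conditional mean exactly as you do, but justifies the unit/time separation only by a heuristic feature-map identity (writing the product kernel as a product of expectations over independent Gaussian draws); its proof of (b) is the single sentence that the claim is immediate from Proposition 3.12 of \citet{kanagawa2018gaussian}; and its proof of (c) quotes a worst-case-error proposition and says the bound follows by rearranging. Your reconstructions --- Cauchy--Schwarz against the residual representer for the weight-space supremum, the representer theorem for the function-space problem, and the structural-plus-noise decomposition for the error --- are precisely what sits inside those citations, so the route is the same; what your self-contained version buys is that it exposes the two genuinely delicate steps the citations hide. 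First, as you flag, $\left(\mathbf{K}_{\text{time}}\otimes\mathbf{K}_{\text{unit}}+\sigma^2\mathbb{I}\right)^{-1}$ does not split over the Kronecker product: in the joint eigenbasis the relevant diagonal entries are $1/(\lambda_i\mu_j+\sigma^2)$, which have product form only in degenerate cases (e.g., rank-one $\mathbf{K}_{\text{unit}}$), so carrying out your eigenbasis check would reveal that the exact factorization in \eqref{eq:weights_closed_form} is a weakness of the statement itself --- one the paper's feature-map argument asserts rather than proves. Second, writing $k_\ast$ for the cross-kernel vector and $k_{\ast\ast}$ for the prior variance at the target, your observation that the worst-case structural term $k_{\ast\ast}-2w^\top k_\ast+w^\top\mathbf{K}w$ collapses only via the optimality relation $(\mathbf{K}+\sigma^2\mathbb{I})w=k_\ast$ is correct, and carrying it out yields a bound of the form $\left(k_{\ast\ast}-k_\ast^\top w\right)^{1/2}$, the posterior standard deviation; also, since your noise term $\sum_{(i,t)\in\mathcal{C}}w_{it}\varepsilon_{it}$ is random, combining its variance with the squared structural bound gives a root-mean-square rather than pointwise bound. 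Both of these are loosenesses of \eqref{eq:error_bound} and of the paper's quoted proposition, shared with --- not worse than --- the paper's own proof.
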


Proposition \ref{prop:mtgp_nll_icm} begins with the algebraic result that the posterior mean estimate of the MTGP can be written as a weighting estimator, and that these overall weights separate into unit- and time-weights. This separation does not hold in general, but, in our setting with an ICM kernel, follows immediately from the restriction that the unit and time covariances are separable \citep{bonilla2008multi}. As a result, we can view our proposed MTGP approach as a doubly-weighted panel data estimator, similar to recent proposals from e.g. \citet{ben2018augmented} and  \citet{arkhangelsky2019synthetic} that also impute the counterfactual as a double weighted average.\footnote{The weights we consider here can be negative, unlike some Frequentist weighting methods. Restricting the weights to obey a simplex constraint can be achieved by using the graph Laplacian and considering the Gaussian random field, an analogue of Gaussian processes \citep{zhu2005semi}.}  Importantly, the MTGP implicitly estimates unit and time weights \emph{simultaneously}, while previous proposals construct them separately.

Proposition \ref{prop:mtgp_nll_icm} then leverages Frequentist results on GPs to provide additional intuition for these weights \citep[see][]{kanagawa2018gaussian}.
First, Equation \eqref{eq:nll_icm} shows that the MTGP weights minimize the error on the noiseless outcome functions, subject to regularization on the weights. Importantly, this formulation does not rely on parametric assumptions, only that the noiseless outcomes, $f_{it}$, are relatively ``well behaved'' in the sense of belonging to an RKHS.
Equation \eqref{eq:nll_krr} shows that the weights can be seen as the solution to an equivalent optimization problem in terms of minimizing imbalance in the observed outcomes directly, but subject to regularization on the ``coefficients'' $\alpha$ and $\xi$, rather than on the weights themselves.
Finally, building off this, Equation \eqref{eq:error_bound} shows that the error bound depends directly on the distance between the control observations, $\mathcal{C}$, and the unobserved point, $(1, t^\ast)$.
Specifically, with a squared exponential kernel, $k_{\text{time}}(t^\ast, \cdot)$, the bound is increasing for $t^\ast > T_0$. See \citet{fiedler2021practical} for recent generalizations of such bounds under model mis-specification.

\paragraph{Using a prior mean model.}
The results above are for the MTGP where the prior mean model $m_{it}$ is zero.
We can extend these results to include the prior mean model $m_{it}$ from Section \ref{sec:outcome_model}, recognizing that the distinction between modeling the mean and modeling the covariance can often be arbitrary \citep{williams2006gaussian}.
The resulting posterior (predictive) mean estimate is
\begin{align}
  \label{eq:post_mean_aug_bias_correction}
  \hat{\mu}_{1t^\ast}^{\text{aug}} \;&=\; \sum_{(i,t) \in \mathcal{C}} \gamma_i^{(1)} \lambda^{(t^\ast)}_{t} Y_{it} \;\;+\;\;  \underbrace{\bigg( m_{1t^\ast} - \sum_{(i,t) \in \mathcal{C}}  \gamma_i^{(1)} \lambda^{(t^\ast)}_{t} m_{it} \bigg)}_{\text{estimated bias}} \\[0.5em]
  &=\; m_{1t^\ast} \qquad\qquad\qquad+\quad \underbrace{\sum_{(i,t) \in \mathcal{C}}  \gamma_i^{(1)} \lambda^{(t^\ast)}_{t} (Y_{it} - m_{it})}_{\text{re-weighted residuals}} .
  \label{eq:post_mean_aug_residuals}
  \end{align}
Equation \eqref{eq:post_mean_aug_bias_correction} has a form similar to bias correction for matching \citep{rubin1973matching}, which corrects the doubly-weighted average by an estimate of the difference in the outcome model $m_{it}$. 
Equation \eqref{eq:post_mean_aug_residuals} instead has the form of augmented inverse propensity score weighting \citep{robins1994estimation}, which re-weights the outcome model residuals. 
This echos several recent proposals that combine outcome modeling and weighting in the panel data setting \citep[see, e.g.][]{ben2018augmented, ferman2019synthetic, arkhangelsky2021double}, where here the ``outcome model'' is our prior for the control potential outcome and can include, e.g. fixed effects and time invariant auxiliary covariates as in Section \ref{sec:outcome_model}.

\subsection{Special cases}

To help build intuition for the weighting representation, we consider two special cases: (1) the unit kernel $\mathbf{K_{\text{unit}}}$ is the identity; and (2) the time kernel $\mathbf{K_{\text{time}}}$ is the identity matrix. For simplicity, we take the prior mean model to be zero here, but could extend accordingly.

\paragraph{Identity unit covariance.}
First, let the unit covariance
$\mathbf{K_{\text{unit}}}$ be the identity matrix. 
We can view this as the setting where there is no correlation between the underlying processes for each unit, so each model component is an independent GP. In this setting, there is nothing to be learned from the comparison units, and the MTGP estimate only compares to the pre-treatment outcomes of the treated unit. In particular, following the setup in Proposition \ref{prop:mtgp_nll_icm}, if we set the unit covariance matrix $\mathbf{K_{\text{unit}}}$ to be the identity, then the weights for post-treatment time period $t^\ast > T_0$ are
$$
\lambda^{(t^\ast)} = (\mathbf{K}_\text{time} + \sigma \mathbb{I}_{T_0})^{-1}k_{\text{time}}(t^\ast, \cdot),
$$
with posterior mean estimate:
\[
\hat{\mu}_{1t^\ast}^{\text{horiz}} = \sum_{t=1}^{T_0}\lambda_t^{(t^\ast)} Y_{1t}.
\] 

This is a \emph{horizontal regression} formulation \citep[see][]{athey2018matrix}.
As these weights ignore the outcomes for the comparison units, we can view this case as fitting $N$ separate single-task GP models, one for each unit. 
Thus, when the unit covariance is the identity matrix, this problem ignores the outcomes for comparison units and reduces to the single-task GP for the treated unit discussed in Section \ref{sec:single_gp_intro}.

\paragraph{Identity time kernel.}
Next, let the time kernel $\mathbf{K_{\text{time}}}$ be the identity matrix. This corresponds to the setting where there is no smoothness at all in the underlying latent functions, so the function value can change wildly between time periods. In this case, it is impossible to use outcomes from different time periods to inform our estimates, and so only the unit weights remain. 
Again following the setup in Proposition \ref{prop:mtgp_nll_icm}, if the time covariance matrix $\mathbf{K_{\text{time}}}$ is the identity, then the weights for the treated unit are:
$$
\gamma^{(1)} = (\mathbf{K}_\text{unit} + \sigma \mathbb{I}_{N})^{-1}k_{\text{unit}}(1, \cdot),
$$ with posterior mean estimate:
\[
\hat{\mu}_{1t^\ast}^{\text{vert}} = \sum_{i=2}^{N}\gamma^{(1)}_i Y_{it^\ast}.
\]

\noindent This is a \emph{vertical regression} setup \citep[see][]{doudchenko2016balancing}, and corresponds to the implicit weights of a linear factor model or interactive fixed effects model  \citep{gobillon2016regional, xu2017gsynth, athey2018matrix}. The estimator only uses the learned unit covariance to implicitly construct weights over comparison units.

\section{Model diagnostics and estimated impact for APPS}
\label{sec:apps_main_analysis}

We now use the MTGP framework to estimate the impact of APPS on homicides. We begin by using posterior predictive checks to guide the choice of model. We then estimate the impacts both on overall homicides and separately on gun- and non-gun-related homicides.

\subsection{Posterior Predictive Checks and other model diagnostics}
\label{sec:PPCs}

Diagnostics and other forms of model criticism are important components of the Bayesian workflow, helping researchers diagnose and compare the fit of the proposed models \citep{gelman2013bayesian}. 
We begin with \emph{posterior predictive checks} (PPCs). First, we choose a test statistic, $T(\text{data}, \theta)$, which reflects a relevant aspect of model fit and is a function of both data and model parameters.  We then compare the posterior predictive distribution of this statistic, $T(\text{data}_{\text{rep}}, \theta)$, to the value of the test statistic for the observed data,  $T(\text{data}_{\text{obs}}, \theta)$, over draws of $\theta$.

PPCs play an especially important role in panel data settings like ours, where it is often difficult to decide between modeling approaches given limited data.\footnote{See \citet{liu2020practical} for a review of Frequentist approaches to model checks for panel data.}
We focus on two common measures of model fit: the overall pre-treatment fit and the pre-treatment imbalance at each time point. 
First, we evaluate the pre-treatment fit for California by comparing the posterior predictive distribution of the root mean squared error (RMSE), 
$$T(\mathbf{Y}^\ast, \theta) = \frac{1}{T_0}\sqrt{\sum_{t = 1}^{T_0} (Y^\ast_{1t} - f_{1t})^2},$$
to the observed RMSE, $T(\mathbf{Y}^\text{obs}, \theta) = \frac{1}{T_0}\sqrt{\sum_{t = 1}^{T_0} (Y_{1t} - f_{1t})^2}$, where $Y^\ast_{1t}$ is a posterior predictive draw of the overall murder rate for California at (pre-treatment) time $t$.
Figure \ref{ppc:rmse} shows the distributions for both Gaussian and Poisson observation models with unit kernel ranks $J = 0,\ldots, 7$. For each combination, we also compute the posterior predictive p-value, the fraction of posterior predictive test statistics that are larger than the observed test statistic.  
For both Gaussian and Poisson models, the observed RMSE is larger than we would expect for ranks 0 through 3, with minimal discrepancies by rank 5 for both models. The Poisson model has slightly lower RMSE, suggesting that Poisson with rank 5 is a reasonable initial choice here.

\begin{figure}[btp]
\begin{subfigure}[b]{.5\linewidth}
    \centering
    \includegraphics[width=0.95\textwidth]{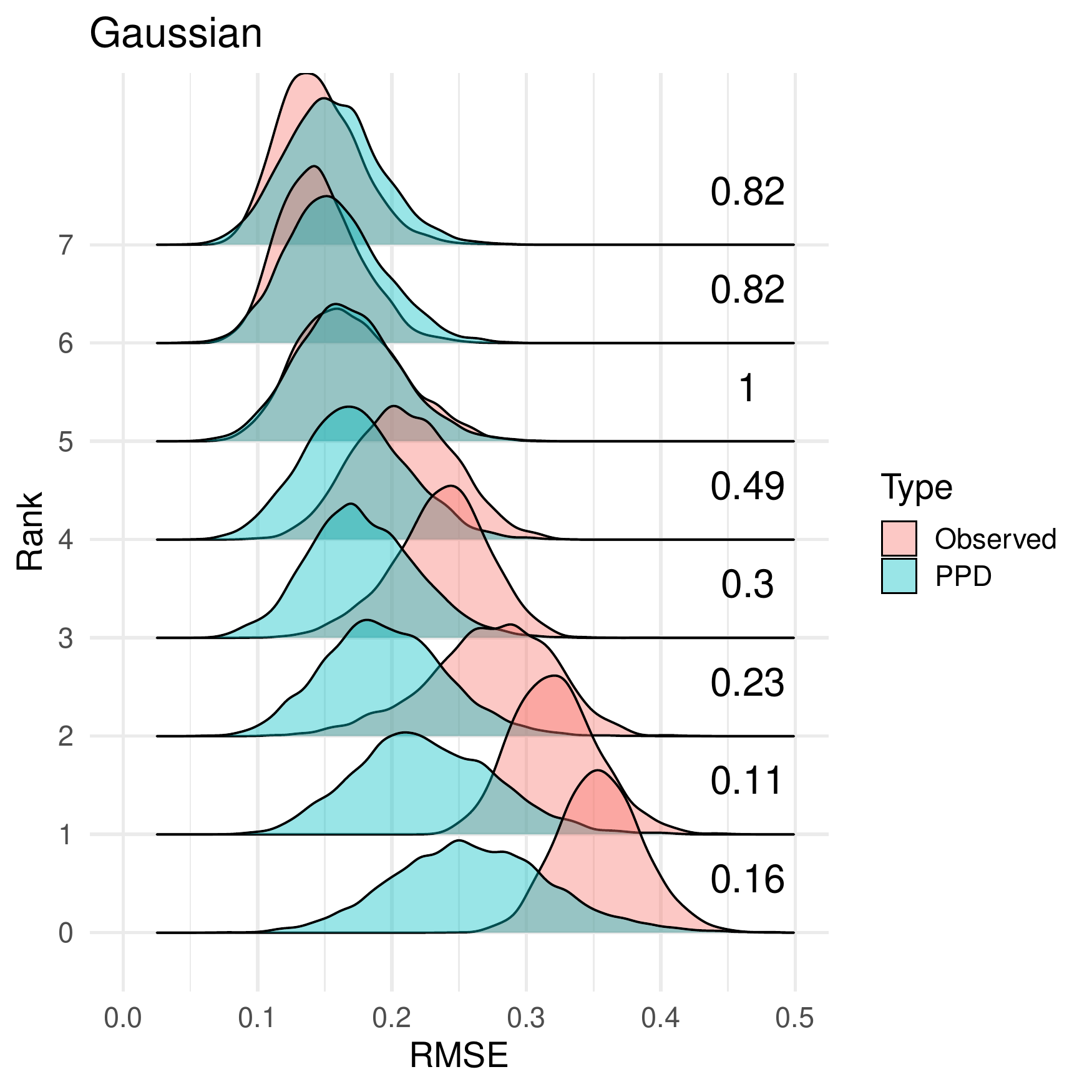}
    \caption{PPCs for Gaussian likelihood}
    \label{fig:mtgp_ests}
\end{subfigure}
\begin{subfigure}[b]{.5\linewidth}
    \centering
    \includegraphics[width=0.95\maxwidth]{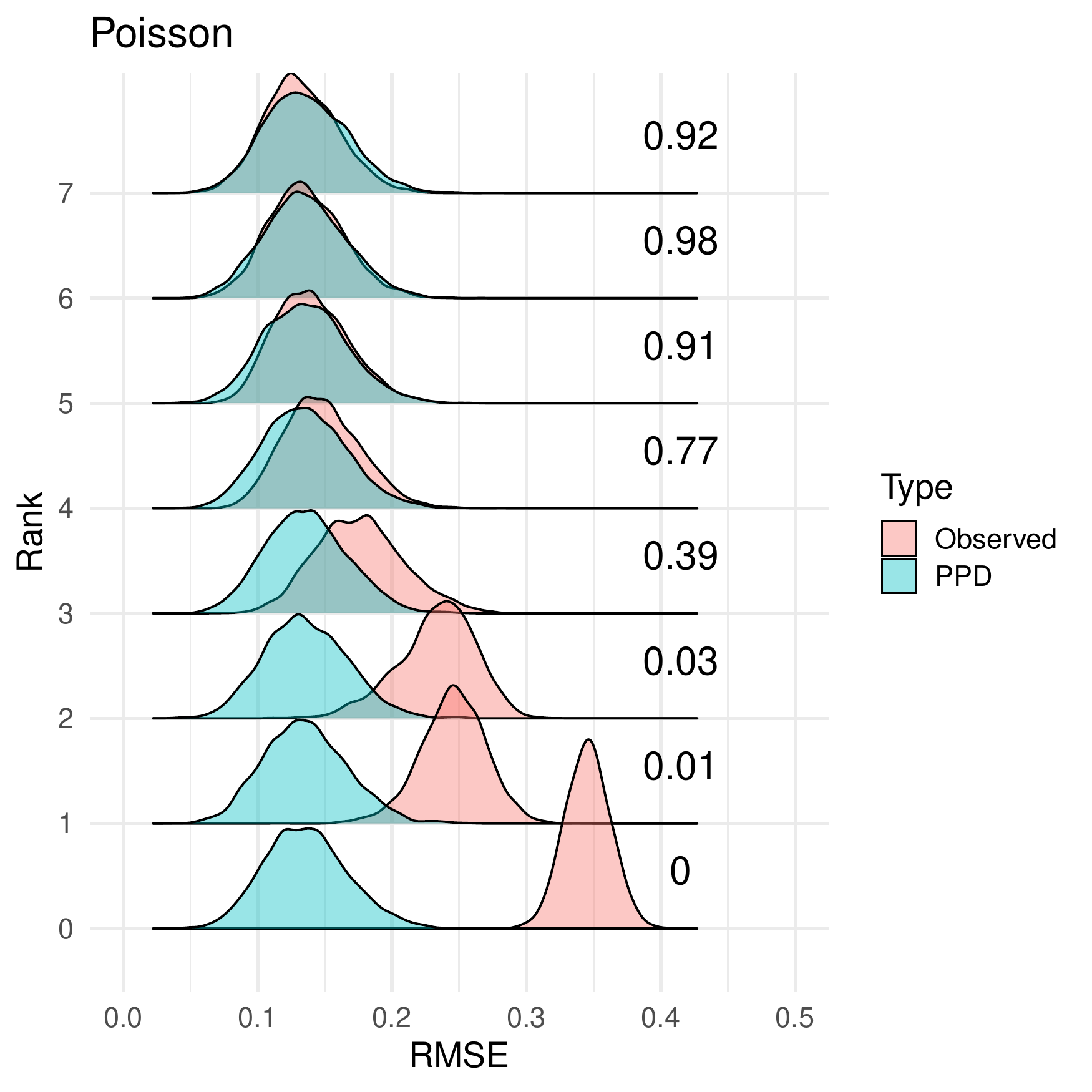}
    \caption{PPCs for Poisson likelihood}
    \label{fig:mtgp_ests_poisson}
\end{subfigure}
\caption{Posterior Predictive Checks (PPCs) for the pre-treatment RMSE for California.  Numbers indicate posterior predictive p-value associated with each rank, i.e., $\mathbb{P}\left(T(\text{data}_{\text{rep}}, \theta) > T(\text{data}_{\text{obs}}, \theta) \mid \text{data} \right)$.}
\label{ppc:rmse}
\end{figure}

\begin{figure}[tbp]
    \centering
    \includegraphics[width=0.6\maxwidth]{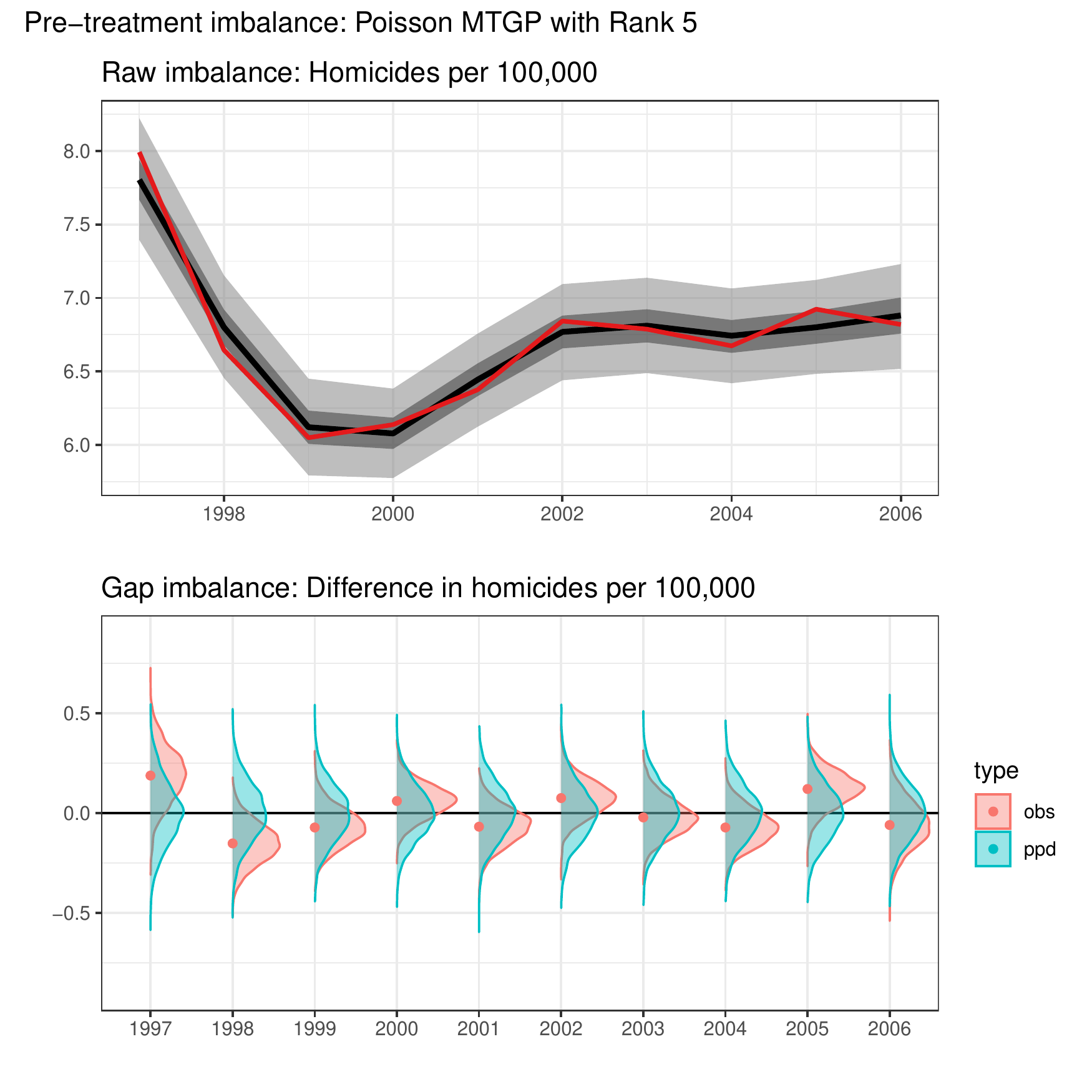}
    \caption{ Posterior predictive checks for pre-treatment imbalance with the Poisson model with Rank 5. Observed data is in red. The upper panel shows the posterior predictive distribution and original data on the scale of the raw imbalance. The lower panel shows the corresponding ``gap'' estimates for each year, comparing the distribution of $Y_{1t}- \hat{\mu}_{1t}$ to the distribution of $Y_{1t}^{\star} - \hat{\mu}_{1t}$ for times $t = 1,\ldots,T_0$. These placebo estimates are overlapping, giving further support to good model fit.}
    \label{fig:ca_balance_pois_rank5}
\end{figure}

Figure \ref{fig:ca_balance_pois_rank5} shows the imbalance at each pre-treatment time point for the Poisson MTGP model with rank 5. The upper pane shows the observed data (in red) as well as posterior predictive draws from the model on the scale of the original outcome (homicides per 100,000). The lower pane instead shows the ``gap'' between the outcome and the posterior mean, $T(\text{data}, \theta) = Y_{1t} - \hat{\mu}_{1t}$.\footnote{This is the Bayesian analog to estimates of placebo impacts that are standard in Frequentist panel data studies, although we compare the observed estimates to the posterior predictive distribution rather than to zero.}
Both plots show little difference between the observed data and the posterior predictive distribution, suggesting reasonable model fit along the dimensions we investigate here.

Finally, the Appendix includes several additional model diagnostics. Appendix Figure \ref{fig:coverage_ppc} shows the coverage of the 50\% and 95\% posterior predictive intervals for pre-treatment outcomes across all 50 states, computed as $\frac{1}{NT}\sum_{i, t}I(L_{it}^q < Y_{it} <  U^q_{it})$ where $L_{it}^q$ and $U^q_{it}$ are the lower and upper endpoints of the credible region with probability $q$ for unit $i$ at time $t$. The Poisson intervals have close to nominal coverage for ranks 4 and 5, but under-cover for lower ranks and over-cover for higher ranks. By contrast, the corresponding Gaussian intervals over-cover across all ranks, with especially poor coverage for the 50\% intervals for higher ranks, leading further credence to the Poisson model over the Gaussian model. This difference is likely due to the fact that the sampling variance increases with the mean in the Poisson model but is constant in the mean and across units for the Gaussian model.

In Appendix Figures \ref{fig:global_factor} and \ref{fig:all_factor} we show the inferred Gaussian Process for the time-specific intercept and the estimated factors from the rank 5 Poisson model next to the factor loadings for California. In Appendix Figures \ref{fig:time_weights} and \ref{fig:unit_weights}, we show the inferred time and unit weights for California in the first post-treatment period using a rank 5 Gaussian multitask Gaussian process without the global time series trend GP. Each of the time and unit weights are obtained by marginalizing over the non-relevant component (unit, or time, respectively) for each sample from the MCMC sampler.

\subsection{Impact of APPS on homicides}
\label{sec:impact_apps}
We now turn to estimating the impact of APPS on homicides in California. We begin with the overall impact and then disaggregate by gun and non-gun-related homicides.

\paragraph{Overall impact of APPS.}
Based on our model diagnostics above, we focus on the MTGP estimated with rank $J=5$, Poisson observation model, and unit- and time-specific intercepts. 
Figure \ref{fig:impact_time} shows the estimated impact of APPS on homicide rates in California over time.\footnote{Appendix Figure \ref{fig:mtgp_ests_poisson_adj} shows the corresponding estimates after adjusting for auxiliary covariates and several measures of crime, all for 2005. The auxiliary covariates are: the prison incarceration rate, the age distribution (percent 0-17, 18-24, 25-44, 45-64, 65 and older), the percent Black, the unemployment rate, the poverty rate, and log median income. The crime measures, normalized by population, are: overall violent crime, rape, robbery, assault, property crime, burglary, larceny, and motor/vehicular crime. As the results are largely unchanged, we focus on the unadjusted estimate here.}
The upper pane shows the posterior predictive distribution from the MTGP, with the posterior mean as well as 50\% and 95\% posterior predictive intervals. The red line shows the observed murder rate for California. The estimates pre-treatment repeat the pre-treatment imbalance checks in Figure \ref{fig:ca_balance_pois_rank5}; the counterfactual estimates post-treatment are consistently higher than the observed murder rate. 
The lower pane shows this difference explicitly, where zero is no impact. 
These estimates suggest that APPS reduced murders in the state, with both impacts and uncertainty growing over time.

\begin{figure}[tbp]
    \centering
    \includegraphics[width=0.6\textwidth]{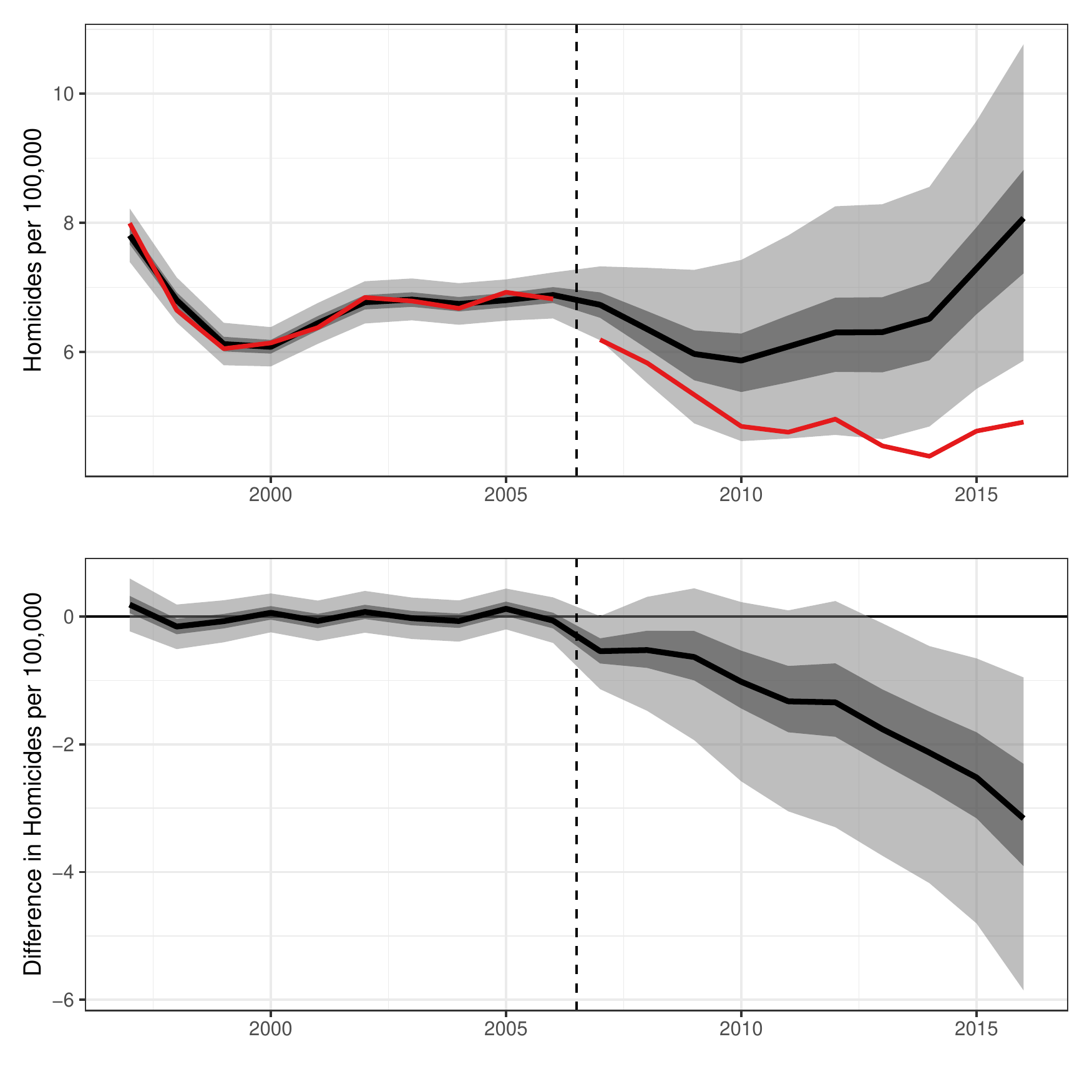}
    \caption{Impact of APPS on homicides per 100,000 in California by year. The observed data is show in red. The black line is the posterior mean estimate; the dark and light gray regions are, respectively, the 50\% and 95\% posterior predictive intervals.}
    \label{fig:impact_time}
\end{figure}

Figure \ref{fig:impact_overall} shows the posterior estimates for the average annual impact. The posterior mean estimate is roughly 1.5 murders avoided per 100,000 people per year, with a 95\% credible interval of $(0.32, 2.9)$ murders per 100,000 people; the posterior probability that the impact is negative is 99.6\%.
To contextualize these estimates, California's annual murder rate in 2004 and 2005 was 6.8 per 100,000. Focusing on the smaller end of the 95\% credible interval suggests a decline of at least 5 percent from baseline. 
We can also compare the annual number of murders prevented by this system to expenditures for APPS investigative teams in a single year, 2018, when California's population was 39.6 million.  A decrease in the murder rate of 0.32 corresponds to nearly 127 murders avoided per year, 
while the total funding for APPS investigation teams in fiscal year 2017-2018 was \$11.3 million \citep{petek2019budget}.
Dividing this cost of \$11.3 million by the estimated 127 avoided murders gives a (conservative) estimate of roughly \$89,000 per murder prevented.
This is dramatically lower than the estimated benefit from the avoided crime, which is typically on the order of \$10 million \citep{heaton2010hidden, dominguez2015role}. We include the posterior distribution of expenditures per average number of murders avoided in Figure \ref{fig:impact_overall}. 

\begin{figure}[tbp]
    \centering
    \includegraphics[width=0.6\textwidth]{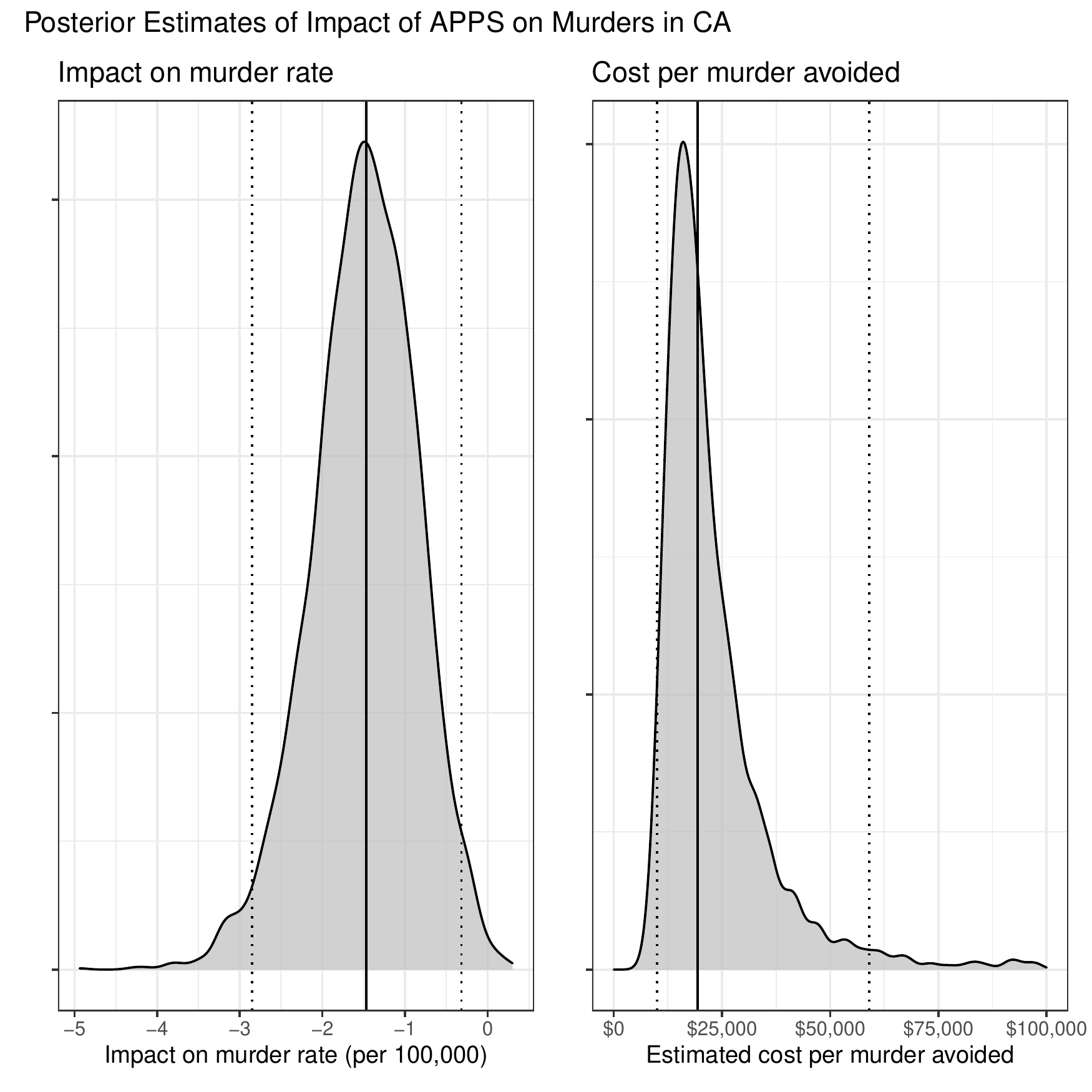}
    \caption{(left) Posterior distribution of the average impact of APPS on homicides per 100,000 in California for 2007 to 2016; (right) Posterior distribution of the cost per murder avoided, with a state population of 39.6 million and annual budget expenditure of \$11.3 million. The solid line denotes the posterior mean; the dashed lines denote 95\% credible intervals.}
    \label{fig:impact_overall}
\end{figure}

One way to assess whether these estimates are potentially spurious is to conduct an ``in-time'' placebo check, where we set the APPS adoption year to be earlier than 2006 \citep{abadie2010synthetic}. We can then compare the difference between the observed homicide rate and its posterior in the years between the placebo APPS year and 2006.
Because we do not expect an effect in these held-out years, the difference should be near zero. Figure \ref{fig:time_placebo} shows these placebo estimates for placebo APPS over four years (2002-2005). We see that the posterior distribution is indeed near zero for all four placebo estimates, indicating that our preferred MTGP model finds no treatment effects when we expect there to be none. Note, however, that the relatively small number of pre-intervention years leads to several limitations to this placebo check.
First, with the earlier placebo years there is a great deal of uncertainty due to the limited number of pre-intervention years. Second, we only evaluate placebo effects up to four years after the placebo APPS year. Settings with more time periods would have less uncertainty in the placebo check and would allow us to evaluate placebo effects for more years, which would enable us to assess the gradual increase in the magnitude of the treatment effect we discuss below.

\begin{figure}[tbp]
    \centering
    \includegraphics[width=0.5\textwidth]{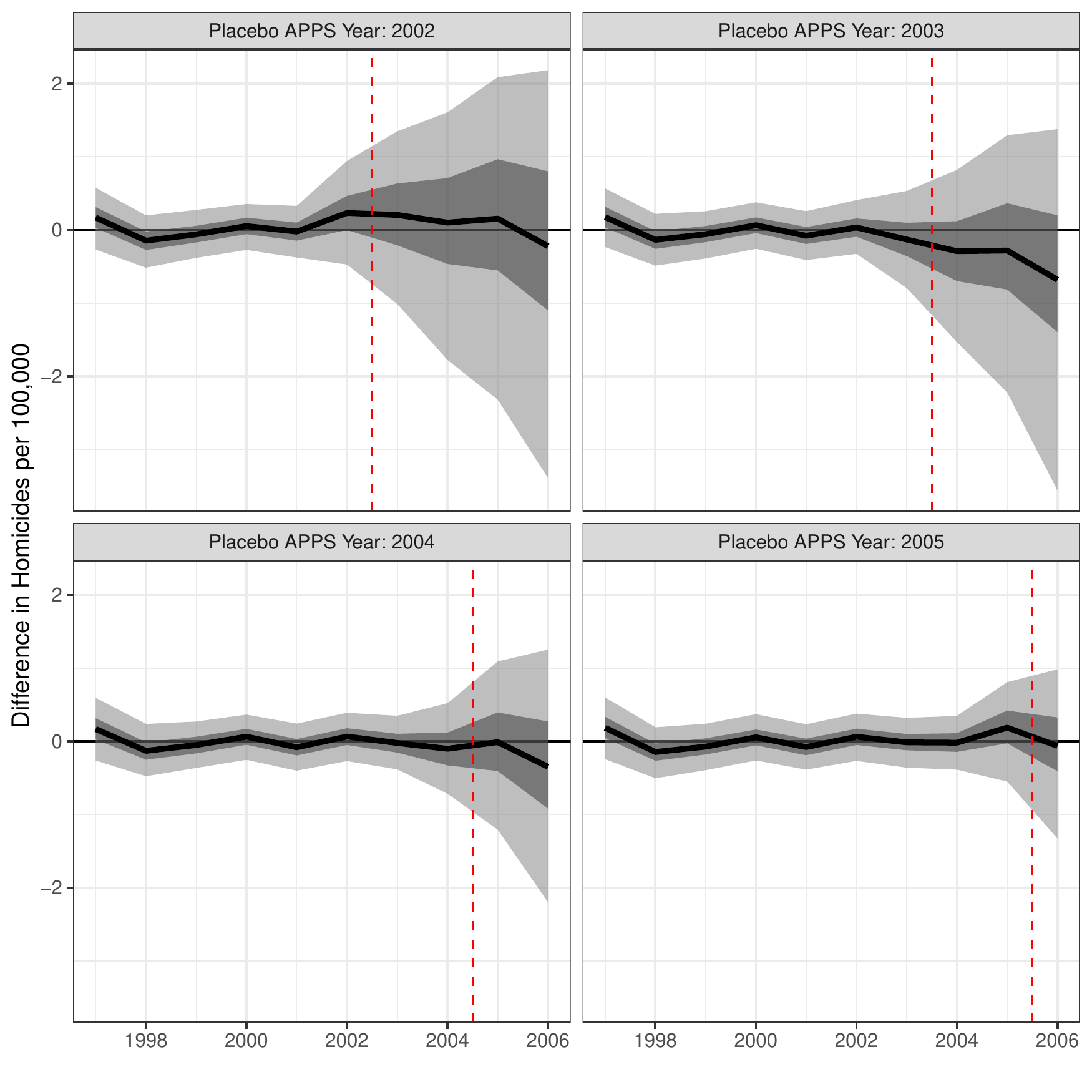}
    \caption{In-time placebo estimates for the rank 5 Poisson MTGP model, setting the placebo APPS year to be 2002-2005.}
    \label{fig:time_placebo}
\end{figure}

Our preferred estimates for the impact of APPS are similar to the preliminary SCM estimates in Section \ref{sec:synth}.
Both estimate that the counterfactual California would have seen the homicide rate drop more slowly and plateau at a higher level, before seeing a larger rise between 2014-2016.
Both procedures estimate similar average impacts (a posterior mean of -1.5 vs. -1.41 for SCM), but the MTGP approach allows us to coherently quantify the uncertainty in the impacts, incorporating the fact that we are measuring impacts on homicide \emph{counts}. Here we see that there is substantial uncertainty in the impact, but that there is a very high probability the impact is negative.
In Appendix Figure \ref{fig:alt_ests}, we compare these estimates to those from various
alternative estimators: (i) SCM without an intercept shift; (ii) Ridge Augmented SCM \citep{ben2018augmented}; (iii) fitting a low-rank factor model via \texttt{gsynth} \citep{xu2017gsynth}; and (iv) low-rank matrix completion \citep{athey2018matrix}.
These estimators broadly agree with our MTGP approach, with point estimates mostly within the 50\% posterior predictive intervals and all well within the 95\% posterior predictive intervals.
Finally, we consider the relative impact of the other 49 comparison states on our estimates via a ``leave one out'' approach that removes each comparison state from the panel, finding the posterior without conditioning on that state's homicide rate series.
We show the results in Appendix Figure \ref{fig:loo}.
Overall we find that removing each state has little effect on the posterior distribution; the largest differences come from removing Illinois, New Jersey, and Texas, but these estimates still broadly comport with our main estimates.

Our findings are also suggestive of larger effects in later years relative to earlier years.  While these results may be due to an increase in murder rates in the comparison states, there are aspects of the APPS program that could lead to larger effects over time. First, there might be a lag in the effect of removing a firearm on violent incidents.
Since the program reduces the stock of firearms over time, the effects may similarly increase along with the stock of firearms removed.
Moreover, since people usually enter the APPS database upon purchasing a new gun, the proportion of guns and gun owners covered by the system is likely increasing over time as well.

California has also implemented further restrictions on firearms since 2006, including:
banning the carrying of an unloaded handgun in public (2012); requiring new semi-automatic hand guns to have microstamping technology (2013);  requiring tougher safety tests for first time buyers (2014); requiring background checks for ammunition and limiing ammunition sales to individuals with a registered firearm in the APPS system (passed in 2016 but implemented in June 2019); and most recently imposing a 21 year age limit for firearm purchases and a lifetime ban for prior convictions for a violent misdemeanor \citep{christopher2019}.
While some of these reforms occur after the end of our study period, it is clear the state has increased the regulation of firearms above and beyond what has occurred in other states. To the extent that some of these other policies have reduced murder rates, our estimates will reflect the joint effect of the APPS intervention along with these other policies.
This also potentially contributes to the effect estimates increasing over time.
That being said, many of these policy changes were only possible due to the underlying information and infrastructure that support the APPS enforcement effort.

\paragraph{Separate impacts on gun- and non-gun-related homicides.}

As we discuss in Section \ref{sec:synth}, one falsification check involves disaggregating homicides by whether a firearm was used.  Specifically, we do not expect that the APPS system would have reduced non-gun homicides.  In fact, to the extent that there is substitution across weapons we might even expect that APPS may have increased homicide rates where firearms are not used.  

To explore this, we use the \citet{kaplan2019} concatenation of the UCR Supplementary Homicide Files, restricted to our study period (1997 through 2016) and use the incident level data to calculate totals by state and year of the number of murders involving a firearm and the number of murders where a gun is not used.\footnote{We identify gun homicides as any incidents where the weapon used by any of the offenders in the incident are flagged as ``firearm, type not stated,'' ``handgun,'' ``other gun,'' ``rifle,'' or ``shotgun.'' Note, for a small percent of cases the weapon used is unknown (roughly 5 percent).  We allot these observations to non-gun homicides, which should bias us towards finding an effect of APPS on non-gun homicides.}  We then use these totals in conjunction with state level population counts to tabulate rates per 100,000.\footnote{For these models, we restrict the donor pool to states where we observe homicide reports for all years from 1997 through 2016; inclusive of California, 43 states report data for all years.  Murder victims in the SHR data by year and state often do not add to the officially reported titles in summary tables provided in the official UCR panel data.   For some state-years, the ratio of the murder rate tabulated from   SHR data to that tabulated from the UCR summary data is less than one while for others the value is above one.  The median of this ratio is 1.02 and the mean is 1.005. } In general, gun murder rates are two to three times that of non-gun murder rates.

\begin{figure}[tbp]
    \centering

    \includegraphics[width=0.6\textwidth]{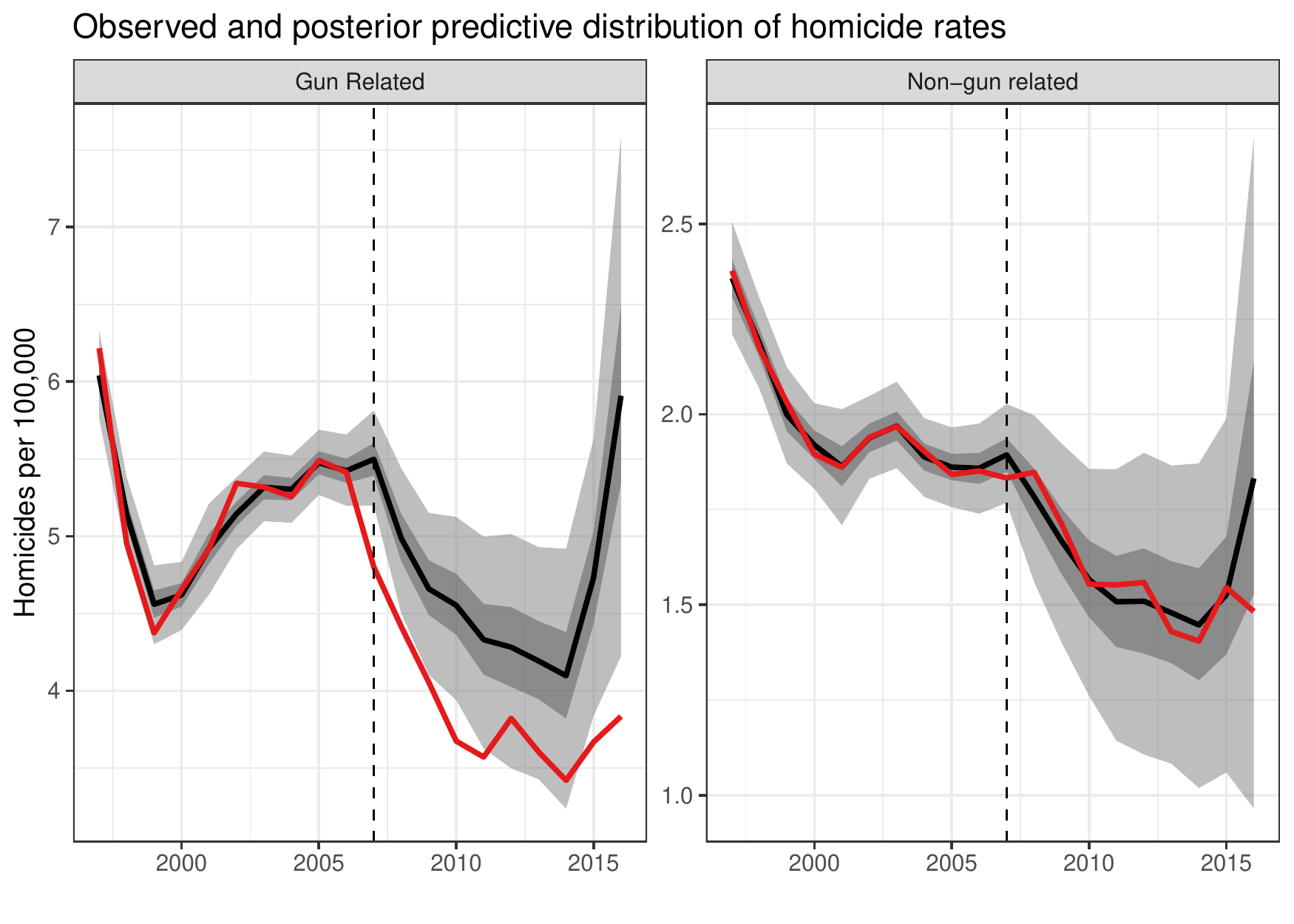}
    \caption{Impact of APPS on gun- and non-gun-related homicides per 100,000 in California by year. The observed data is show in red. The black line is the posterior mean estimate; the dark and light gray regions are, respectively, the 50\% and 95\% posterior predictive intervals.}
    \label{fig:multi_outcomes}
\end{figure}

Figure \ref{fig:multi_outcomes} shows estimates for gun- and non-gun-related homicides, estimated via a joint multitask GP. Fitting both outcomes jointly is advantageous since these two outcomes are positively correlated due to shared latent factors. (See Appendix Figure \ref{fig:mo_cor}, for the posterior predictive correlation between gun and non-gun related homicides for California). While estimates are derived from a different data source than used above, the impacts for gun-related homicides are very similar to the estimated overall impacts in Figure \ref{fig:impact_overall}.  In contrast, the estimated impact of APPS on non-gun-related homicides is essentially zero.

\section{Discussion}
\label{sec:discussion}

This paper uses a multitask Gaussian Processes framework in a panel data setting to estimate the causal effect of an innovative gun control program in California. MTGPs find a natural middle ground between pure ``horizontal'' time series forecasting models, based on extrapolating the treated unit's outcomes series alone, and pure ``vertical'' regression models, based on a weighted average of control units' outcomes. Further, MTGPs automatically quantify the relevant sources of uncertainty, providing a unified Bayesian framework in a setting where uncertainty quantification is notoriously challenging. They also allow for extensive model diagnostics via posterior predictive checks. Building on a robust GP literature, there are also many immediate extensions to the standard MTGP setup, including allowing for multiple outcomes and incorporating a mean model.
Using this flexible approach, we find large effects of APPS on homicides in California, with program benefits far exceeding costs under typical assumptions.

There has been a flurry of policy activity since 2006 in the criminal justice domain, which could change crime rate patterns and the interpretation of our findings. First, since 2011 the state has cut back on using state prison for parole violations and for less serious felony offenses.
Second, various state ballot initiatives have scaled back the severity of sentencing \citep{lofstrom2016california}.
These reforms have reduced the overall incarcerated population (prisons and jails combined) of the state by roughly one-quarter, and may have led to higher crime rates through reverse incapacitation effects.
However, several studies of these reforms fail to find any evidence of an impact of these changes on violent offenses \citep{lofstrom2016incarceration, bartos2018can}, and so we do not believe that our results are biased by these policy changes.

Methodologically, there are several promising directions for future research. 
First, there are many existing results for Gaussian Processes that would apply immediately in other settings but are not as relevant in our application. For instance, standard panel data models face a range of practical challenges when the data structure deviates from a so-called ``balanced panel,'' either due to irregularly sampled observations over time, missing data, or comparison states enacting policies similar enough to the policy of interest to warrant exclusion from the sample \citep[see][]{imai2019use}.
The MTGP framework naturally accommodates both irregularly sampled time points and missing values, and, unlike most Frequentist approaches, will automatically propagate the corresponding uncertainty. 

Additionally, while California and the APPS program is uniquely comprehensive, several states have enacted forms of firearm surrender or removal policies for, e.g., those subject to a domestic violence restraining order.
Therefore, an important extension of this approach is
to allow for multiple treated units that adopt treatment over time, also known as staggered adoption \citep[e.g.,][]{ben2019synthetic}. First, we can extend to the case where multiple treated units adopt treatment at the \emph{same} time (i.e., simultaneous adoption) by changing the target to the \emph{average} of the treated units. However, the extension to the case when units adopt over time (i.e., staggered adoption) is slightly more complicated. For instance, we could adapt the model to have a separate kernel for each treatment time, possibly including a hierarchical structure on the kernels \citep{flaxman2015fast}.

Next, we can develop a framework for assessing sensitivity to departures from key modeling assumptions for the MTGP approach, especially the restriction that the time and unit kernels are separable. While there are many possible generalizations \citep[see, for example][]{alvarez2017gp}, these are inherently under-specified in settings like ours with relatively few units and time periods. Thus, a sensitivity analysis approach in the spirit of \citet{franks2019flexible} is a natural avenue for appropriately quantifying uncertainty in this case.  

Finally, our proposed MTGP approach implicitly puts the same weight on the error for California as on other states, even though California is the target unit. We can explore alternative approaches for parameterizing similarity across units and time, such as by imposing an ``arrowhead'' structure on the overall covariance, or by learning a richer dependence structure across units; see, for example, \citet{li2020negative}. Moreover, our approach focuses on modeling the control potential outcomes for California, in the spirit of finite sample causal inference --- but allowing for potentially arbitrary treatment effects. We could also directly model the treatment effect, including by incorporating a prior, as part of the MTGP framework.

\clearpage
\singlespacing
\bibliographystyle{chicago}
\bibliography{references}

\begin{thebibliography}{}

\bibitem[\protect\citeauthoryear{Abadie}{Abadie}{2019}]{abadie2019synth_review}
Abadie, A. (2019).
\newblock Using synthetic controls: Feasibility, data requirements, and
  methodological aspects.
\newblock {\em Journal of Economic Literature\/}.

\bibitem[\protect\citeauthoryear{Abadie, Diamond, and Hainmueller}{Abadie
  et~al.}{2010}]{abadie2010synthetic}
Abadie, A., A.~Diamond, and J.~Hainmueller (2010).
\newblock Synthetic control methods for comparative case studies: Estimating
  the effect of california’s tobacco control program.
\newblock {\em Journal of the American statistical Association\/}~{\em
  105\/}(490), 493--505.

\bibitem[\protect\citeauthoryear{Alaa and van~der Schaar}{Alaa and van~der
  Schaar}{2017}]{alaa2017bayesian}
Alaa, A.~M. and M.~van~der Schaar (2017).
\newblock Bayesian inference of individualized treatment effects using
  multi-task gaussian processes.
\newblock In {\em Advances in Neural Information Processing Systems}, pp.\
  3424--3432.

\bibitem[\protect\citeauthoryear{{\'A}lvarez}{{\'A}lvarez}{2017}]{alvarez2017gp}
{\'A}lvarez, M. (2017).
\newblock Multi-output gaussian processes.
\newblock Gaussian Process Summer School 2017.

\bibitem[\protect\citeauthoryear{Aneja, Donohue~III, and Zhang}{Aneja
  et~al.}{2011}]{aneja2011impact}
Aneja, A., J.~J. Donohue~III, and A.~Zhang (2011).
\newblock The impact of right-to-carry laws and the nrc report: lessons for the
  empirical evaluation of law and policy.
\newblock {\em American Law and Economics Review\/}~{\em 13\/}(2), 565--631.

\bibitem[\protect\citeauthoryear{Antonelli and Beck}{Antonelli and
  Beck}{2021}]{Antonelli2021}
Antonelli, J. and B.~Beck (2021).
\newblock {Heterogeneous causal effects of neighborhood policing in New York
  City with staggered adoption of the policy}.

\bibitem[\protect\citeauthoryear{Arkhangelsky, Athey, Hirshberg, Imbens, and
  Wager}{Arkhangelsky et~al.}{2019}]{arkhangelsky2019synthetic}
Arkhangelsky, D., S.~Athey, D.~A. Hirshberg, G.~W. Imbens, and S.~Wager (2019).
\newblock Synthetic difference in differences.
\newblock Technical report, National Bureau of Economic Research.

\bibitem[\protect\citeauthoryear{Arkhangelsky and Imbens}{Arkhangelsky and
  Imbens}{2021}]{arkhangelsky2021double}
Arkhangelsky, D. and G.~W. Imbens (2021).
\newblock Double-robust identification for causal panel data models.
\newblock Technical report, National Bureau of Economic Research.

\bibitem[\protect\citeauthoryear{Athey, Bayati, Doudchenko, Imbens, and
  Khosravi}{Athey et~al.}{2021}]{athey2018matrix}
Athey, S., M.~Bayati, N.~Doudchenko, G.~Imbens, and K.~Khosravi (2021).
\newblock Matrix completion methods for causal panel data models.
\newblock {\em Journal of the American Statistical Association\/}, 1--15.

\bibitem[\protect\citeauthoryear{Bartos and Kubrin}{Bartos and
  Kubrin}{2018}]{bartos2018can}
Bartos, B.~J. and C.~E. Kubrin (2018).
\newblock Can we downsize our prisons and jails without compromising public
  safety? findings from california's prop 47.
\newblock {\em Criminology \& Public Policy\/}~{\em 17\/}(3), 693--715.

\bibitem[\protect\citeauthoryear{Ben-Michael, Feller, and
  Rothstein}{Ben-Michael et~al.}{2021}]{ben2018augmented}
Ben-Michael, E., A.~Feller, and J.~Rothstein (2021).
\newblock {The Augmented Synthetic Control Method}.
\newblock {\em Journal of the American Statistical Association\/}~{\em
  116\/}(536), 1789--1803.

\bibitem[\protect\citeauthoryear{Ben-Michael, Feller, and
  Rothstein}{Ben-Michael et~al.}{2022}]{ben2019synthetic}
Ben-Michael, E., A.~Feller, and J.~Rothstein (2022).
\newblock {Synthetic controls with staggered adoption}.
\newblock {\em Journal of the Royal Statistical Society. Series B: Statistical
  Methodology\/}~{\em 84\/}(2), 351--381.

\bibitem[\protect\citeauthoryear{Bonilla, Chai, and Williams}{Bonilla
  et~al.}{2008}]{bonilla2008multi}
Bonilla, E.~V., K.~M. Chai, and C.~Williams (2008).
\newblock Multi-task gaussian process prediction.
\newblock In {\em Advances in Neural Information Processing systems}, pp.\
  153--160.

\bibitem[\protect\citeauthoryear{Branson, Rischard, Bornn, and
  Miratrix}{Branson et~al.}{2019}]{branson2019nonparametric}
Branson, Z., M.~Rischard, L.~Bornn, and L.~W. Miratrix (2019).
\newblock A nonparametric bayesian methodology for regression discontinuity
  designs.
\newblock {\em Journal of Statistical Planning and Inference\/}~{\em 202},
  14--30.

\bibitem[\protect\citeauthoryear{Brodersen, Gallusser, Koehler, Remy, and
  Scott}{Brodersen et~al.}{2015}]{causalImpact}
Brodersen, K.~H., F.~Gallusser, J.~Koehler, N.~Remy, and S.~L. Scott (2015).
\newblock Inferring causal impact using bayesian structural time-series models.
\newblock {\em Annals of Applied Statistics\/}~{\em 9}, 247--274.

\bibitem[\protect\citeauthoryear{{California Department of
  Justice}}{{California Department of Justice}}{2015}]{CA_DOJ_2015}
{California Department of Justice} (2015).
\newblock {Legislative Report Number One -- Calendar Year 2014: Armed and
  Prohibited Persons System}.
\newblock Sacramento, CA.

\bibitem[\protect\citeauthoryear{{California Department of
  Justice}}{{California Department of Justice}}{2016}]{CA_DOJ_2016}
{California Department of Justice} (2016).
\newblock {Legislative Report Number Two -- Calendar Year 2015: Armed and
  Prohibited Persons System}.
\newblock Sacramento, CA.

\bibitem[\protect\citeauthoryear{{California Department of
  Justice}}{{California Department of Justice}}{2017}]{CA_DOJ_2017}
{California Department of Justice} (2017).
\newblock {Legislative Report Number Three -- Calendar Year 2016: Armed and
  Prohibited Persons System}.
\newblock Sacramento, CA.

\bibitem[\protect\citeauthoryear{{California Department of
  Justice}}{{California Department of Justice}}{2018}]{CA_DOJ_2018}
{California Department of Justice} (2018).
\newblock {Legislative Report Number One -- Armed and Prohibited Persons System
  SB140 Legislative Report Number Four -- Calendar Year 2017 (Revised
  3/8/2018)}.
\newblock Sacramento, CA.

\bibitem[\protect\citeauthoryear{{California Department of
  Justice}}{{California Department of Justice}}{2019}]{CA_DOJ_2019}
{California Department of Justice} (2019).
\newblock {Legislative Report Number One -- Armed and Prohibited Persons System
  SB140 Legislative Report Number Five -- Calendar Year 2018}.
\newblock Sacramento, CA.

\bibitem[\protect\citeauthoryear{Carlson}{Carlson}{2020}]{carlson2020gp}
Carlson, D. (2020).
\newblock Estimating a counter-factual with uncertainty through gaussian
  process projection.

\bibitem[\protect\citeauthoryear{Carvalho, Polson, and Scott}{Carvalho
  et~al.}{2009}]{carvalho2009horseshoe}
Carvalho, C.~M., N.~G. Polson, and J.~G. Scott (2009).
\newblock Handling sparsity via the horseshoe.
\newblock In {\em Artificial Intelligence and Statistics}, pp.\  73--80. PMLR.

\bibitem[\protect\citeauthoryear{Castillo-Carniglia, Kagawa, Cerd{\'a},
  Crifasi, Vernick, Webster, and Wintemute}{Castillo-Carniglia
  et~al.}{2019}]{castillo2019california}
Castillo-Carniglia, A., R.~M. Kagawa, M.~Cerd{\'a}, C.~K. Crifasi, J.~S.
  Vernick, D.~W. Webster, and G.~J. Wintemute (2019).
\newblock California's comprehensive background check and misdemeanor violence
  prohibition policies and firearm mortality.
\newblock {\em Annals of epidemiology\/}~{\em 30}, 50--56.

\bibitem[\protect\citeauthoryear{Chen, Garnett, Montgomery, and Prati}{Chen
  et~al.}{2022}]{montgomery2022mtgp}
Chen, Y., R.~Garnett, J.~Montgomery, and A.~Prati (2022).
\newblock A multi-task gaussian process model for difference in differences
  with parallel(ish) trends.

\bibitem[\protect\citeauthoryear{Chernozhukov, W{\"u}thrich, and
  Zhu}{Chernozhukov et~al.}{2021}]{chernozhukov2017exact}
Chernozhukov, V., K.~W{\"u}thrich, and Y.~Zhu (2021).
\newblock An exact and robust conformal inference method for counterfactual and
  synthetic controls.
\newblock {\em Journal of the American Statistical Association\/}, 1--16.

\bibitem[\protect\citeauthoryear{Christopher}{Christopher}{2019}]{christopher2019}
Christopher, B. (2019).
\newblock How california got tough on guns.
\newblock {Cal Matters,
  https://calmatters.org/explainers/california-gun-laws-policy-explained/
  accessed on October 11, 2019}.

\bibitem[\protect\citeauthoryear{Cook and Donohue}{Cook and
  Donohue}{2017}]{cook2017saving}
Cook, P.~J. and J.~J. Donohue (2017).
\newblock Saving lives by regulating guns: evidence for policy.
\newblock {\em Science\/}~{\em 358\/}(6368), 1259--1261.

\bibitem[\protect\citeauthoryear{Ding and Li}{Ding and
  Li}{2018}]{ding2018causal}
Ding, P. and F.~Li (2018).
\newblock Causal inference: A missing data perspective.
\newblock {\em Statistical Science\/}~{\em 33\/}(2), 214--237.

\bibitem[\protect\citeauthoryear{Dom{\'\i}nguez and Raphael}{Dom{\'\i}nguez and
  Raphael}{2015}]{dominguez2015role}
Dom{\'\i}nguez, P. and S.~Raphael (2015).
\newblock The role of the cost-of-crime literature in bridging the gap between
  social science research and policy making: Potentials and limitations.
\newblock {\em Criminology \& Public Policy\/}~{\em 14\/}(4), 589--632.

\bibitem[\protect\citeauthoryear{Donohue and Boulouta}{Donohue and
  Boulouta}{2019}]{donohue2019assault}
Donohue, J. and T.~Boulouta (2019).
\newblock The assault weapon ban saved lives.
\newblock {\em Stanford Law School Legal Aggregate\/}.
\newblock {October 15, 2019; https://stanford.io/2MWNsrV}.

\bibitem[\protect\citeauthoryear{Donohue, Aneja, and Weber}{Donohue
  et~al.}{2019}]{donohue2019right}
Donohue, J.~J., A.~Aneja, and K.~D. Weber (2019).
\newblock Right-to-carry laws and violent crime: A comprehensive assessment
  using panel data and a state-level synthetic control analysis.
\newblock {\em Journal of Empirical Legal Studies\/}~{\em 16\/}(2), 198--247.

\bibitem[\protect\citeauthoryear{Doudchenko and Imbens}{Doudchenko and
  Imbens}{2016}]{doudchenko2016balancing}
Doudchenko, N. and G.~W. Imbens (2016).
\newblock Balancing, regression, difference-in-differences and synthetic
  control methods: A synthesis.
\newblock Technical report, National Bureau of Economic Research.

\bibitem[\protect\citeauthoryear{Ferman and Pinto}{Ferman and
  Pinto}{2021}]{ferman2019synthetic}
Ferman, B. and C.~Pinto (2021).
\newblock Synthetic controls with imperfect pre-treatment fit.
\newblock {\em Quantitative Economics\/}.

\bibitem[\protect\citeauthoryear{Fiedler, Scherer, and Trimpe}{Fiedler
  et~al.}{2021}]{fiedler2021practical}
Fiedler, C., C.~W. Scherer, and S.~Trimpe (2021).
\newblock Practical and rigorous uncertainty bounds for gaussian process
  regression.
\newblock In {\em Proceedings of the AAAI Conference on Artificial
  Intelligence}, Volume~35, pp.\  7439--7447.

\bibitem[\protect\citeauthoryear{Flaxman, Wilson, Neill, Nickisch, and
  Smola}{Flaxman et~al.}{2015}]{flaxman2015fast}
Flaxman, S., A.~Wilson, D.~Neill, H.~Nickisch, and A.~Smola (2015).
\newblock Fast kronecker inference in gaussian processes with non-gaussian
  likelihoods.
\newblock In {\em International Conference on Machine Learning}, pp.\
  607--616. PMLR.

\bibitem[\protect\citeauthoryear{Franks, D’Amour, and Feller}{Franks
  et~al.}{2019}]{franks2019flexible}
Franks, A., A.~D’Amour, and A.~Feller (2019).
\newblock Flexible sensitivity analysis for observational studies without
  observable implications.
\newblock {\em Journal of the American Statistical Association\/}.

\bibitem[\protect\citeauthoryear{Gelman, Carlin, Stern, Dunson, Vehtari, and
  Rubin}{Gelman et~al.}{2013}]{gelman2013bayesian}
Gelman, A., J.~B. Carlin, H.~S. Stern, D.~B. Dunson, A.~Vehtari, and D.~B.
  Rubin (2013).
\newblock {\em Bayesian data analysis}.
\newblock CRC press.

\bibitem[\protect\citeauthoryear{{Giffords Law Center to Prevent Gun
  Violence}}{{Giffords Law Center to Prevent Gun
  Violence}}{2019}]{giffords2019}
{Giffords Law Center to Prevent Gun Violence} (2019).
\newblock
  {https://lawcenter.giffords.org/gun-laws/policy-areas/who-can-have-a-gun/categories-of-prohibited-people/
  . Accessed on October 10, 2019}.

\bibitem[\protect\citeauthoryear{Gobillon and Magnac}{Gobillon and
  Magnac}{2016}]{gobillon2016regional}
Gobillon, L. and T.~Magnac (2016).
\newblock Regional policy evaluation: Interactive fixed effects and synthetic
  controls.
\newblock {\em Review of Economics and Statistics\/}~{\em 98\/}(3), 535--551.

\bibitem[\protect\citeauthoryear{Goldstick, Cunningham, and Carter}{Goldstick
  et~al.}{2022}]{goldstick2022current}
Goldstick, J.~E., R.~M. Cunningham, and P.~M. Carter (2022).
\newblock Current causes of death in children and adolescents in the united
  states.
\newblock {\em New England journal of medicine\/}.

\bibitem[\protect\citeauthoryear{Goovaerts et~al.}{Goovaerts
  et~al.}{1997}]{goovaerts1997geostatistics}
Goovaerts, P. et~al. (1997).
\newblock {\em Geostatistics for natural resources evaluation}.
\newblock Oxford University Press on Demand.

\bibitem[\protect\citeauthoryear{Green and Fernandez}{Green and
  Fernandez}{2018}]{green2018nyt}
Green, E.~L. and M.~Fernandez (2018, March).
\newblock Trump wants to arm teachers. these schools already do.
\newblock {\em New York Times\/}.

\bibitem[\protect\citeauthoryear{Gustafson}{Gustafson}{2010}]{gustafson2010bayesian}
Gustafson, P. (2010).
\newblock Bayesian inference for partially identified models.
\newblock {\em The International Journal of Biostatistics\/}~{\em 6\/}(2).

\bibitem[\protect\citeauthoryear{Hazlett and Xu}{Hazlett and
  Xu}{2018}]{hazlett2018trajectory}
Hazlett, C. and Y.~Xu (2018).
\newblock Trajectory balancing: A general reweighting approach to causal
  inference with time-series cross-sectional data.
\newblock {\em Available at SSRN 3214231\/}.

\bibitem[\protect\citeauthoryear{Heaton}{Heaton}{2010}]{heaton2010hidden}
Heaton, P. (2010).
\newblock {\em Hidden in plain sight: What cost-of-crime research can tell us
  about investing in police}, Volume 279.
\newblock Rand Corporation.

\bibitem[\protect\citeauthoryear{Hensman, Matthews, and Ghahramani}{Hensman
  et~al.}{2015}]{hensman2015scalable}
Hensman, J., A.~Matthews, and Z.~Ghahramani (2015).
\newblock Scalable variational gaussian process classification.

\bibitem[\protect\citeauthoryear{Huang, Chen, and Liu}{Huang
  et~al.}{2019}]{huang2019gpmatch}
Huang, B., C.~Chen, and J.~Liu (2019).
\newblock Gpmatch: A bayesian doubly robust approach to causal inference with
  gaussian process covariance function as a matching tool.
\newblock {\em arXiv preprint arXiv:1901.10359\/}.

\bibitem[\protect\citeauthoryear{Huang, Zhang, and Sch{\"o}lkopf}{Huang
  et~al.}{2015}]{huang2015identification}
Huang, B., K.~Zhang, and B.~Sch{\"o}lkopf (2015).
\newblock Identification of time-dependent causal model: A gaussian process
  treatment.
\newblock In {\em Twenty-Fourth International Joint Conference on Artificial
  Intelligence}.

\bibitem[\protect\citeauthoryear{Imai and Kim}{Imai and
  Kim}{2019}]{imai2019use}
Imai, K. and I.~S. Kim (2019).
\newblock On the use of two-way fixed effects regression models for causal
  inference with panel data.

\bibitem[\protect\citeauthoryear{Imbens and Rubin}{Imbens and
  Rubin}{2015}]{imbens2015causal}
Imbens, G.~W. and D.~B. Rubin (2015).
\newblock {\em Causal inference in statistics, social, and biomedical
  sciences}.
\newblock Cambridge University Press.

\bibitem[\protect\citeauthoryear{Jyl{\"a}nki, Vanhatalo, and
  Vehtari}{Jyl{\"a}nki et~al.}{2011}]{jylanki2011robust}
Jyl{\"a}nki, P., J.~Vanhatalo, and A.~Vehtari (2011).
\newblock Robust gaussian process regression with a student-t likelihood.
\newblock {\em Journal of Machine Learning Research\/}~{\em 12\/}(Nov),
  3227--3257.

\bibitem[\protect\citeauthoryear{Kagawa, Castillo-Carniglia, Vernick, Webster,
  Crifasi, Rudolph, Cerd{\'a}, Shev, and Wintemute}{Kagawa
  et~al.}{2018}]{kagawa2018repeal}
Kagawa, R.~M., A.~Castillo-Carniglia, J.~S. Vernick, D.~Webster, C.~Crifasi,
  K.~E. Rudolph, M.~Cerd{\'a}, A.~Shev, and G.~J. Wintemute (2018).
\newblock Repeal of comprehensive background check policies and firearm
  homicide and suicide.
\newblock {\em Epidemiology\/}~{\em 29\/}(4), 494--502.

\bibitem[\protect\citeauthoryear{Kanagawa, Hennig, Sejdinovic, and
  Sriperumbudur}{Kanagawa et~al.}{2018}]{kanagawa2018gaussian}
Kanagawa, M., P.~Hennig, D.~Sejdinovic, and B.~K. Sriperumbudur (2018).
\newblock Gaussian processes and kernel methods: A review on connections and
  equivalences.
\newblock {\em arXiv preprint arXiv:1807.02582\/}.

\bibitem[\protect\citeauthoryear{Kaplan}{Kaplan}{2019}]{kaplan2019}
Kaplan, J. (2019).
\newblock {Jacob Kaplan's Concatenated Files: Uniform Crime Reporting (UCR)
  Program Data: Supplementary Homicide Reports, 1976-2017}.
\newblock {Ann Arbor, MI: Inter-university Consortium for Political and Social
  Research [distributor], 2019-07-15. https://doi.org/10.3886/E100699V7}.

\bibitem[\protect\citeauthoryear{Karch, Brandmaier, and Voelkle}{Karch
  et~al.}{2018}]{karch2018gaussian}
Karch, J.~D., A.~M. Brandmaier, and M.~C. Voelkle (2018).
\newblock Gaussian process panel modeling-kernel-based longitudinal modeling.

\bibitem[\protect\citeauthoryear{Kim, Lee, and Gupta}{Kim
  et~al.}{2020}]{kim2020bayesian}
Kim, S., C.~Lee, and S.~Gupta (2020).
\newblock Bayesian synthetic control methods.
\newblock {\em Journal of Marketing Research\/}~{\em 57\/}(5), 831--852.

\bibitem[\protect\citeauthoryear{Li and Kontar}{Li and
  Kontar}{2020}]{li2020negative}
Li, M. and R.~Kontar (2020).
\newblock On negative transfer and structure of latent functions in
  multi-output gaussian processes.
\newblock {\em arXiv preprint arXiv:2004.02382\/}.

\bibitem[\protect\citeauthoryear{Liu, Wang, and Xu}{Liu
  et~al.}{2020}]{liu2020practical}
Liu, L., Y.~Wang, and Y.~Xu (2020).
\newblock A practical guide to counterfactual estimators for causal inference
  with time-series cross-sectional data.
\newblock {\em Available at SSRN 3555463\/}.

\bibitem[\protect\citeauthoryear{Lofstrom, Bird, and Martin}{Lofstrom
  et~al.}{2016}]{lofstrom2016california}
Lofstrom, M., M.~Bird, and B.~Martin (2016).
\newblock {\em California's historic corrections reforms}.
\newblock Public Policy Institute of California Sacramento.

\bibitem[\protect\citeauthoryear{Lofstrom and Raphael}{Lofstrom and
  Raphael}{2016}]{lofstrom2016incarceration}
Lofstrom, M. and S.~Raphael (2016).
\newblock Incarceration and crime: Evidence from california’s public safety
  realignment reform.
\newblock {\em The ANNALS of the American Academy of Political and Social
  Science\/}~{\em 664\/}(1), 196--220.

\bibitem[\protect\citeauthoryear{Lott and Mustard}{Lott and
  Mustard}{1997}]{lott1997crime}
Lott, Jr, J.~R. and D.~B. Mustard (1997).
\newblock Crime, deterrence, and right-to-carry concealed handguns.
\newblock {\em The Journal of Legal Studies\/}~{\em 26\/}(1), 1--68.

\bibitem[\protect\citeauthoryear{McCourt, Crifasi, Stuart, Vernick, Kagawa,
  Wintemute, and Webster}{McCourt et~al.}{2020}]{mccourt2020purchaser}
McCourt, A.~D., C.~K. Crifasi, E.~A. Stuart, J.~S. Vernick, R.~M. Kagawa, G.~J.
  Wintemute, and D.~W. Webster (2020).
\newblock Purchaser licensing, point-of-sale background check laws, and firearm
  homicide and suicide in 4 us states, 1985--2017.
\newblock {\em American journal of public health\/}~{\em 110\/}(10),
  1546--1552.

\bibitem[\protect\citeauthoryear{Menchetti and Bojinov}{Menchetti and
  Bojinov}{2020}]{menchetti2020estimating}
Menchetti, F. and I.~Bojinov (2020).
\newblock {\em Estimating causal effects in the presence of partial
  interference using multivariate Bayesian structural time series models}.

\bibitem[\protect\citeauthoryear{Miratrix}{Miratrix}{2020}]{miratrix2020ITS}
Miratrix, L. (2020).
\newblock Using simulation to analyze interrupted time series designs.
\newblock {\em arXiv preprint arXiv:2002.05746\/}.

\bibitem[\protect\citeauthoryear{Modi and Seljak}{Modi and
  Seljak}{2019}]{modi2019generative}
Modi, C. and U.~Seljak (2019).
\newblock Generative learning of counterfactual for synthetic control
  applications in econometrics.
\newblock {\em arXiv preprint arXiv:1910.07178\/}.

\bibitem[\protect\citeauthoryear{Naish-Guzman and Holden}{Naish-Guzman and
  Holden}{2008}]{naish2008generalized}
Naish-Guzman, A. and S.~Holden (2008).
\newblock The generalized fitc approximation.
\newblock In {\em Advances in neural information processing systems}, pp.\
  1057--1064.

\bibitem[\protect\citeauthoryear{{National Research Council}}{{National
  Research Council}}{2005}]{nrc2005firearms}
{National Research Council} (2005).
\newblock {\em Firearms and Violence: A Critical Review}.
\newblock {Washington, DC: National Academies Press}.

\bibitem[\protect\citeauthoryear{Oganisian and Roy}{Oganisian and
  Roy}{2020}]{oganisian2020practical}
Oganisian, A. and J.~A. Roy (2020).
\newblock A practical introduction to bayesian estimation of causal effects:
  Parametric and nonparametric approaches.
\newblock {\em arXiv preprint arXiv:2004.07375\/}.

\bibitem[\protect\citeauthoryear{Pang, Liu, and Xu}{Pang
  et~al.}{2020}]{pang2020bayesian}
Pang, X., L.~Liu, and Y.~Xu (2020).
\newblock A bayesian alternative to synthetic control for comparative case
  studies.
\newblock {\em Available at SSRN\/}.

\bibitem[\protect\citeauthoryear{Pear, Wintemute, Jewell, and Ahern}{Pear
  et~al.}{2022}]{pear2022firearm}
Pear, V.~A., G.~J. Wintemute, N.~P. Jewell, and J.~Ahern (2022).
\newblock Firearm violence following the implementation of california’s gun
  violence restraining order law.
\newblock {\em JAMA network open\/}~{\em 5\/}(4), e224216--e224216.

\bibitem[\protect\citeauthoryear{Petek}{Petek}{2019}]{petek2019budget}
Petek, G. (2019).
\newblock The 2019-2020 budget: Analysis of the governor's criminal justice
  proposals.
\newblock {Sacramento, CA}.

\bibitem[\protect\citeauthoryear{Pinkney}{Pinkney}{2021}]{pinkney2021improved}
Pinkney, S. (2021).
\newblock An improved and extended bayesian synthetic control.
\newblock {\em arXiv preprint arXiv:2103.16244\/}.

\bibitem[\protect\citeauthoryear{Rasmussen}{Rasmussen}{2004}]{rasmussen2004gaussian}
Rasmussen, C.~E. (2004).
\newblock Gaussian processes in machine learning.
\newblock In {\em Advanced lectures on machine learning}, pp.\  63--71.
  Springer.

\bibitem[\protect\citeauthoryear{Ray and van~der Vaart}{Ray and van~der
  Vaart}{2018}]{ray2018semiparametric}
Ray, K. and A.~van~der Vaart (2018).
\newblock Semiparametric bayesian causal inference using gaussian process
  priors.
\newblock {\em arXiv preprint arXiv:1808.04246\/}.

\bibitem[\protect\citeauthoryear{Ren, Wu, Braun, Pillai, and Dominici}{Ren
  et~al.}{2021}]{ren2021bayesian}
Ren, B., X.~Wu, D.~Braun, N.~Pillai, and F.~Dominici (2021).
\newblock Bayesian modeling for exposure response curve via gaussian processes:
  Causal effects of exposure to air pollution on health outcomes.
\newblock {\em arXiv preprint arXiv:2105.03454\/}.

\bibitem[\protect\citeauthoryear{Richardson, Evans, and Robins}{Richardson
  et~al.}{2011}]{richardson2011transparent}
Richardson, T.~S., R.~J. Evans, and J.~M. Robins (2011).
\newblock Transparent parameterizations of models for potential outcomes.
\newblock {\em Bayesian Statistics\/}~{\em 9}, 569--610.

\bibitem[\protect\citeauthoryear{Robins, Rotnitzky, and Zhao}{Robins
  et~al.}{1994}]{robins1994estimation}
Robins, J.~M., A.~Rotnitzky, and L.~P. Zhao (1994).
\newblock Estimation of regression coefficients when some regressors are not
  always observed.
\newblock {\em Journal of the American statistical Association\/}~{\em
  89\/}(427), 846--866.

\bibitem[\protect\citeauthoryear{Rubin}{Rubin}{1973}]{rubin1973matching}
Rubin, D.~B. (1973).
\newblock Matching to remove bias in observational studies.
\newblock {\em Biometrics\/}, 159--183.

\bibitem[\protect\citeauthoryear{Rubin}{Rubin}{1978}]{rubin1978bayesian}
Rubin, D.~B. (1978).
\newblock Bayesian inference for causal effects: The role of randomization.
\newblock {\em The Annals of statistics\/}, 34--58.

\bibitem[\protect\citeauthoryear{Rudolph, Stuart, Vernick, and Webster}{Rudolph
  et~al.}{2015}]{rudolph2015association}
Rudolph, K.~E., E.~A. Stuart, J.~S. Vernick, and D.~W. Webster (2015).
\newblock Association between connecticut's permit-to-purchase handgun law and
  homicides.
\newblock {\em American journal of public health\/}~{\em 105\/}(8), e49--e54.

\bibitem[\protect\citeauthoryear{Samartsidis, Seaman, Montagna, Charlett,
  Hickman, and Angelis}{Samartsidis et~al.}{2020}]{samartsidis2020bayesian}
Samartsidis, P., S.~R. Seaman, S.~Montagna, A.~Charlett, M.~Hickman, and D.~D.
  Angelis (2020).
\newblock A bayesian multivariate factor analysis model for evaluating an
  intervention by using observational time series data on multiple outcomes.

\bibitem[\protect\citeauthoryear{Samartsidis, Seaman, Presanis, Hickman,
  De~Angelis, et~al.}{Samartsidis et~al.}{2019}]{samartsidis2019assessing}
Samartsidis, P., S.~R. Seaman, A.~M. Presanis, M.~Hickman, D.~De~Angelis,
  et~al. (2019).
\newblock Assessing the causal effect of binary interventions from
  observational panel data with few treated units.
\newblock {\em Statistical Science\/}~{\em 34\/}(3), 486--503.

\bibitem[\protect\citeauthoryear{Schulam and Saria}{Schulam and
  Saria}{2017}]{schulam2017reliable}
Schulam, P. and S.~Saria (2017).
\newblock Reliable decision support using counterfactual models.
\newblock In {\em Advances in Neural Information Processing Systems}, pp.\
  1697--1708.

\bibitem[\protect\citeauthoryear{Siegel, Xuan, Ross, Galea, Kalesan, Fleegler,
  and Goss}{Siegel et~al.}{2017}]{siegel2017easiness}
Siegel, M., Z.~Xuan, C.~S. Ross, S.~Galea, B.~Kalesan, E.~Fleegler, and K.~A.
  Goss (2017).
\newblock Easiness of legal access to concealed firearm permits and homicide
  rates in the united states.
\newblock {\em American journal of public health\/}~{\em 107\/}(12),
  1923--1929.

\bibitem[\protect\citeauthoryear{Solin et~al.}{Solin
  et~al.}{2016}]{solin2016stochastic}
Solin, A. et~al. (2016).
\newblock Stochastic differential equation methods for spatio-temporal gaussian
  process regression.

\bibitem[\protect\citeauthoryear{Solin and S{\"a}rkk{\"a}}{Solin and
  S{\"a}rkk{\"a}}{2020}]{solin2020hilbert}
Solin, A. and S.~S{\"a}rkk{\"a} (2020).
\newblock Hilbert space methods for reduced-rank gaussian process regression.
\newblock {\em Statistics and Computing\/}~{\em 30\/}(2), 419--446.

\bibitem[\protect\citeauthoryear{{Stan Development Team}}{{Stan Development
  Team}}{2021}]{stan}
{Stan Development Team} (2021).
\newblock {Stan Modeling Language Users Guide and Reference Manual}.
\newblock Version 2.26.

\bibitem[\protect\citeauthoryear{Swanson, Norko, Lin, Alanis-Hirsch, Frisman,
  Baranoski, Easter, Robertson, Swartz, and Bonnie}{Swanson
  et~al.}{2017}]{swanson2017implementation}
Swanson, J.~W., M.~A. Norko, H.-J. Lin, K.~Alanis-Hirsch, L.~K. Frisman, M.~V.
  Baranoski, M.~M. Easter, A.~G. Robertson, M.~S. Swartz, and R.~J. Bonnie
  (2017).
\newblock Implementation and effectiveness of connecticut's risk-based gun
  removal law: does it prevent suicides.
\newblock {\em Law \& Contemp. Probs.\/}~{\em 80}, 179.

\bibitem[\protect\citeauthoryear{Tuomaala}{Tuomaala}{2019}]{tuomaala2019bayesian}
Tuomaala, E. (2019).
\newblock The bayesian synthetic control: Improved counterfactual estimation in
  the social sciences through probabilistic modeling.
\newblock {\em arXiv preprint arXiv:1910.06106\/}.

\bibitem[\protect\citeauthoryear{Wainwright}{Wainwright}{2019}]{wainwright2019high}
Wainwright, M.~J. (2019).
\newblock {\em High-dimensional statistics: A non-asymptotic viewpoint},
  Volume~48.
\newblock Cambridge University Press.

\bibitem[\protect\citeauthoryear{Webster and Wintemute}{Webster and
  Wintemute}{2015}]{webster2015effects}
Webster, D.~W. and G.~J. Wintemute (2015).
\newblock Effects of policies designed to keep firearms from high-risk
  individuals.
\newblock {\em Annual review of public health\/}~{\em 36}, 21--37.

\bibitem[\protect\citeauthoryear{Williams and Rasmussen}{Williams and
  Rasmussen}{2006}]{williams2006gaussian}
Williams, C.~K. and C.~E. Rasmussen (2006).
\newblock {\em Gaussian processes for machine learning}, Volume~2.
\newblock MIT press Cambridge, MA.

\bibitem[\protect\citeauthoryear{Wilson, Dann, and Nickisch}{Wilson
  et~al.}{2015}]{wilson2015thoughts}
Wilson, A.~G., C.~Dann, and H.~Nickisch (2015).
\newblock Thoughts on massively scalable gaussian processes.
\newblock {\em arXiv preprint arXiv:1511.01870\/}.

\bibitem[\protect\citeauthoryear{Wintemute, Beckett, Kass, Tancredi, Studdert,
  Pierce, Braga, Wright, and Cerd{\'a}}{Wintemute
  et~al.}{2017}]{wintemute2017evaluation}
Wintemute, G.~J., L.~Beckett, P.~H. Kass, D.~Tancredi, D.~Studdert, G.~Pierce,
  A.~A. Braga, M.~A. Wright, and M.~Cerd{\'a} (2017).
\newblock Evaluation of california's armed and prohibited persons system: study
  protocol for a cluster-randomised trial.
\newblock {\em Injury prevention\/}~{\em 23\/}(5), 358--358.

\bibitem[\protect\citeauthoryear{Wintemute, Pear, Schleimer, Pallin, Sohl,
  Kravitz-Wirtz, and Tomsich}{Wintemute et~al.}{2019}]{wintemute2019extreme}
Wintemute, G.~J., V.~A. Pear, J.~P. Schleimer, R.~Pallin, S.~Sohl,
  N.~Kravitz-Wirtz, and E.~A. Tomsich (2019).
\newblock Extreme risk protection orders intended to prevent mass shootings: a
  case series.
\newblock {\em Annals of internal medicine\/}~{\em 171\/}(9), 655--658.

\bibitem[\protect\citeauthoryear{Witty, Takatsu, Jensen, and Mansinghka}{Witty
  et~al.}{2020}]{witty2020gp}
Witty, S., K.~Takatsu, D.~Jensen, and V.~Mansinghka (2020, 13--18 Jul).
\newblock Causal inference using {G}aussian processes with structured latent
  confounders.
\newblock In H.~D. III and A.~Singh (Eds.), {\em Proceedings of the 37th
  International Conference on Machine Learning}, Volume 119 of {\em Proceedings
  of Machine Learning Research}, pp.\  10313--10323. PMLR.

\bibitem[\protect\citeauthoryear{Xu}{Xu}{2017}]{xu2017gsynth}
Xu, Y. (2017).
\newblock {Generalized Synthetic Control Method: Causal Inference with
  Interactive Fixed Effects Models}.
\newblock {\em Political Analysis\/}~{\em 25}, 57--76.

\bibitem[\protect\citeauthoryear{Zhu}{Zhu}{2005}]{zhu2005semi}
Zhu, X. (2005).
\newblock {\em Semi-supervised learning with graphs}.
\newblock Carnegie Mellon University.

\end{thebibliography}

\clearpage
\appendix

\setcounter{figure}{0}  
\setcounter{table}{0}
\setcounter{theorem}{0}
\setcounter{corollary}{0}

\renewcommand\thefigure{\thesection.\arabic{figure}} 
\renewcommand\thetable{\thesection.\arabic{table}}    
\renewcommand\thetheorem{\thesection.\arabic{theorem}}    
\renewcommand\thecorollary{\thesection.\arabic{corollary}}

\section{Estimation and implementation choices and further discussion}
\label{sec:implementation}

\subsection{Single outcome}

Both the Gaussian and Poisson model share the following three components to model the latent process
\begin{enumerate}
    \item Each unit, $i$, is given an independent offset drawn from a standard normal distribution, $\nu \sim \mathcal{N}(0,1)$, providing a unit-specific intercept.
    \item The global trend is modeled as a single-task Gaussian process with lengthscale $\rho_{\text{global}}$, 
    \begin{align*}
        \rho_{\text{global}} &\sim \text{InvGamma}(5,5)\\
        g &\sim \mathcal{N}(0, \mathbf{K}_{\text{global}}; \rho_{\text{global}})
    \end{align*}
    where $\mathbf{K}_{\text{global}}$ is a covariance matrix with each entry given by the squared exponential kernel with lengthscale, $\rho$, i.e. $\mathbf{K}_{\text{global}}(i,j) = \exp\left(-\frac{\|x_i - x_j\|^2}{2\rho}\right)$.
    \item Individual components are modeled with a multitask Gaussian process with a rank $k$ task-covariance matrix,
    \begin{align*}
        \rho_{\text{time}} &\sim \text{InvGamma}(5,5)\\
        \alpha_{\text{time}} &\sim \mathcal{N}(0, 1)\\
        \mathbf{\beta} &\sim \mathcal{N}(0, \mathbf{I}_k)\\
        \mathbf{K}_{\text{unit}} &= \mathbf{\beta}\mathbf{\beta}^T\\
        \text{vec}(\mathbf{f}) &\sim \mathcal{N}(0, \mathbf{K}_{\text{time}} \otimes \mathbf{K}_{\text{unit}}; \rho_{\text{time}}, \alpha_{\text{time}})
    \end{align*}
    \item The variance for a Normal outcome is drawn from $\sigma^2 \sim \mathcal{N}(0, 1)$
\end{enumerate}
With $\mathbf{K}_{\text{time}}$ is a squared exponential kernel of the same form as $\mathbf{K}_{\text{global}}$. 
For the Gaussian link, a shared standard deviation is drawn, $\sigma \sim \mathcal{N}(0,1)$, and
the observed control outcomes for each unit, $i$, are modeled as
\begin{align*}
    \frac{\mathbf{Y}_i}{\mathbf{N}_i} \sim \mathcal{N}\left(\nu_i + \mathbf{f}_i + g, \frac{\sigma^2}{\sqrt{\mathbf{N}_i}}\right),
\end{align*}
where $\mathbf{N}$ is a matrix containing the population for each state for all time periods.
Similarly, for the Poisson link function the observed untreated outcomes for each unit is modeled as
\begin{align*}
    \mathbf{Y_i} \sim \text{Pois}\left(\mathbf{N}_i +  \exp\left(\nu_i + \mathbf{f}_i + g\right)\right),
\end{align*}
where $N_i$ is an offset.

\subsection{Multiple outcomes}
The multi-outcome models analogously use the following components, 
\begin{enumerate}
    \item The hyperparameters of the global and individual kernels are drawn as 
    \begin{align*}
        \rho_{\text{global}} &\sim \text{InvGamma}(5,5)\\
        \alpha_{\text{global}} &\sim \mathcal{N}(0, 1)\\
    \end{align*}
    \item The global time trend for each outcome is drawn as
    \begin{align*}
    g &\sim \mathcal{N}(0, \mathbf{K}_{\text{global}}; \rho_{\text{global}}, \alpha_{\text{global}})
    \end{align*}
    \item Individual components are modeled as
    \begin{align*}
        \rho_{\text{time}} &\sim \text{InvGamma}(5,5)\\
        \mathbf{\beta} &\sim \mathcal{N}(0, \mathbf{I}_k)\\
        \mathbf{K}_{\text{unit}} &= \mathbf{\beta}\mathbf{\beta}^T
    \end{align*}
    \item Each unit is given an independent intercept for each outcome drawn from a standard normal distribution $\nu \sim \mathcal{N}(0,1)$.
    \item The covariance between outcomes is modeled as 
    \begin{align*}
        \mathbf{Z} \sim \mathcal{N}(0, \mathbf{I})\\
        \mathbf{K}_{\text{outcome}} := \mathbf{ZZ}^T
    \end{align*}
    and the latent function is drawn as
    \begin{align*}
        \text{vec}(\mathbf{f}) &\sim \mathcal{N}(0, \mathbf{K}_{\text{time}} \otimes \mathbf{K}_{\text{unit}} \otimes \mathbf{K}_{\text{outcomes}}; \rho_{\text{time}})
    \end{align*}
    \item The variance for each Normal outcome is drawn from $\sigma^2 \sim \mathcal{N}(0, 1)$
\end{enumerate}
Finally the probability of each outcome is modeled as
\begin{align*}
\frac{\mathbf{Y}_{i,j}}{\mathbf{N}_i} \sim \mathcal{N}(\nu_{i,j} + \mathbf{f}_{i,j}  + g_{j}, \frac{\sigma^2_j}{\sqrt{\mathbf{N}}_i})
\end{align*}
for the normal outcome for unit $i$ and outcome $j$, and 
\begin{align*}
    \mathbf{Y}_{i,j} \sim \text{Pois}\left(\mathbf{N}_i + \exp{\nu_{i,j} + \mathbf{f}_{i,j} + g_j} \right)
\end{align*}
for the Poisson.

\subsection{Synthetic Control Implementation}
\label{sec:appendix_synth}
The goal of the \emph{synthetic control method} \citep{abadie2010synthetic, abadie2019synth_review} is to find a weighted average of other states that approximate the pre-intervention murder rate for California. 
We use a modified version of SCM  originally proposed by \citet{doudchenko2016balancing}, known as \emph{SCM with an intercept shift}; see also \citet{ferman2019synthetic} and \citet{ben2018augmented}.
This finds weights $\hat{\gamma} \in \R^{N-1}$ and an intercept $\hat{\alpha}$ that solve the following constrained optimization problem
\begin{equation}\label{eq:synth_minimization}
    \begin{aligned}
    \min_{\alpha, \gamma} \;\;\;\;& \sum_{t=1}^{T_0} \left(Y_{1t} - \alpha - \sum_{j=2}^N \gamma_j Y_{jt}\right)^2\\
    \text{subject to} \;\;\;\; & \sum_{j=2}^{N} \gamma_j = 1\\
& \gamma_j \geq 0 \;\;\; j = 2, \ldots, N.
\end{aligned} 
\end{equation}
The SCM estimator of $Y^\ast_{1t}$ at time $t$ is then $$\hat{Y}_{1t}^\ast = \hat{\alpha} + \sum_{j=2}^{N} \hat{\gamma}_jY_{jt},$$
with effect estimate $\hat{\tau}_t = Y_{1t} - \hat{\alpha} - \hat{Y}_{1t}^\ast$.
There is a large literature on extensions and modifications to SCM; see \citet{doudchenko2016balancing} and \citet{abadie2019synth_review}. The implementation in Equation \eqref{eq:synth_minimization} differs from the original proposal in \citet{abadie2010synthetic} in two ways.
First, this optimization gives equal weight to all lagged outcomes, and does not use auxiliary covariates.
Second, it includes the intercept shift.

The intercept has a closed form solution, 
\[\hat{\alpha} = \bar{Y}_1 - \sum_{j=2}^N \hat{\gamma}_j \bar{Y}_j,\]
where $\bar{Y}_j = \frac{1}{T_0}\sum_{t=1}^{T_0}Y_{jt}$.
Plugging this in to Equation \eqref{eq:synth_minimization} shows that the intercept-shifted weights are equivalent to finding synthetic control weights with the residuals $Y_j-\bar{Y}_j$ in place of the unadjusted outcomes $Y_j$,
\begin{equation}\label{eq:synth_minimization_2}
    \begin{aligned}
    \min_{\gamma} \;\;\;\;& \sum_{t=1}^{T_0} \left(Y_{1t}  - \bar{Y}_1 - \sum_{j=2}^N \gamma_j (Y_{jt} - \bar{Y}_1)\right)^2\\
    \text{subject to} \;\;\;\; & \sum_{j=2}^{N} \gamma_j = 1\\
& \gamma_j \geq 0 \;\;\; j = 2, \ldots, N.
\end{aligned} 
\end{equation}
The corresponding estimate of the treatment effect at post-treatment time $t$ has the form of a weighted difference-in-difference estimator:
$$\hat{\tau}_t = (Y_{1t} - \bar{Y}_1) - \sum_{j=2}^{N} \hat{\gamma}_j (Y_{jt} - \bar{Y}_j).$$

\subsection{Comparison with \texttt{CausalImpact}}
\label{sec:appendix_causal_impact}

The \texttt{CausalImpact} approach from \citet{causalImpact} proposes a Bayesian Structural Time Series model for forecasting the missing counterfactual outcomes for a single unit. Figure \ref{fig:causal_impact} shows the estimates from applying \texttt{CausalImpact} to our application.

\paragraph{Single treated unit only.} The original paper largely focuses on the setting with a single treated unit, without control units, which corresponds to the ``interrupted time series'' or single-task GP setup in Section \ref{sec:single_gp_intro}. 
In this setting, the \texttt{CausalImpact} Bayesian Structural Time Series can be viewed as a special case of the more general Gaussian Process framework, with specific choices for the covariance \citep{solin2016stochastic}. These implementation choices can have meaningful advantages, especially in applications with high frequency time series and seasonality. However, such models can be difficult to generalize and are also restricted to a pre-defined set of regularly spaced time points.

\paragraph{Incorporating comparison units.} While not a focus of the original paper, the \texttt{CausalImpact} approach can also be adapted to include untreated comparison units. This is where the \texttt{CausalImpact} approach and our MTGP framework start to diverge. 

The first critical difference is that \texttt{CausalImpact} conditions on the \emph{observed} outcomes of the comparison units, in the same way that the approach would condition on observed (time-varying) covariates. By contrast, our framework models the relationship across units in terms of latent (rather than noisy) outcomes. This is especially important in applications like ours when we anticipate large differences in the variance of the noise across units, such as when looking at murder rates in Texas versus Wyoming. This also allows us to appropriately propagate uncertainty.

Another important difference is in model parameterization. In \texttt{CausalImpact}, they suggest incorporating a spike-and-slab prior to encourage sparsity among the weights of the control units.  While not pursued in this paper, a similar strategy can be employed by leveraging variants of the horseshoe prior \citep{carvalho2009horseshoe}.  In contrast to \texttt{CausalImpact}, our model is closer to a reduced rank regression model, which has a natural interpretation in terms of the number of (time-varying) latent confounders.

Finally, while not an important constraint in our application, we note that the \texttt{CausalImpact} approach cannot easily accommodate missing data in the comparison units, which \citet{causalImpact} emphasize in their package documentation. By contrast, this is straightforward in our MTGP framework.

\begin{figure}[htbp]
    \centering
    \begin{subfigure}{0.5\textwidth}
    \includegraphics[width=0.9\maxwidth]{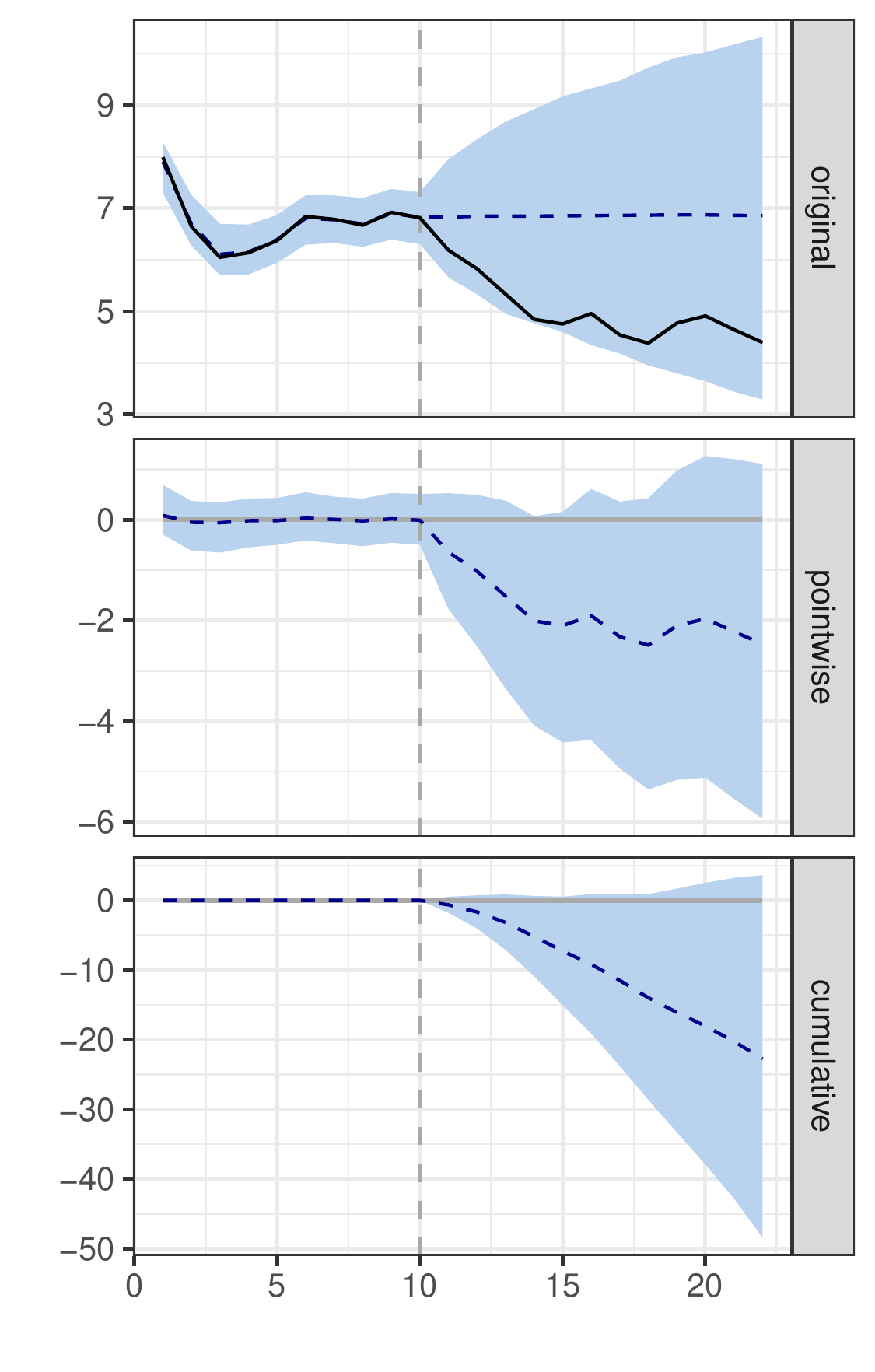}
    \caption{California Only}
    \label{fig:causal_impact_ITS}
    \end{subfigure}~%
    \begin{subfigure}{0.5\textwidth}
        \centering
    \includegraphics[width=0.9\maxwidth]{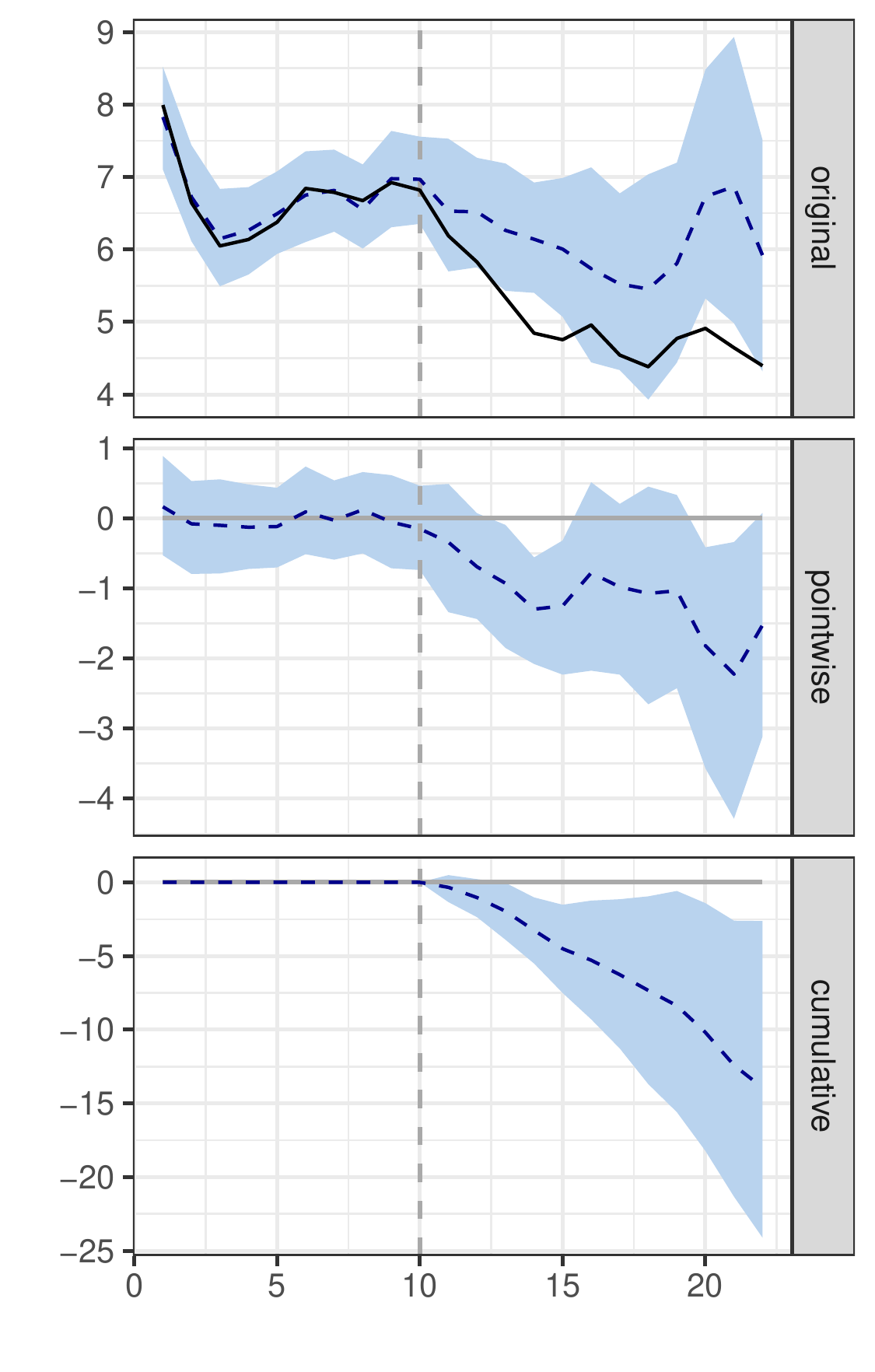}
    \caption{With Comparison States}
    \label{fig:causal_impact_CITS}
    \end{subfigure}
    \caption{Estimates using the \texttt{CausalImpact} package from \citet{causalImpact}. The left pane is based on California alone; the right pane includes comparison states in the model fit.}
    \label{fig:causal_impact}
\end{figure}

\clearpage
\section{Proofs}

\mtgpnllicm*
\begin{proof}
The proof largely follows from propositions in \citet{kanagawa2018gaussian}, presented here for the sake of completeness.

\paragraph{Part (a).}
The form of the weights follows directly from the conditional expectation of the multivariate normal in Equation \eqref{eq:gp_predmean}. Specifically, the posterior (predictive) mean of the MTGP for unit $i$ at time $t$ is:
\begin{align*}
\hat{\mu}_{it} &= 
(\mathbf{K}_{\obs} + \sigma^2\mathbb{I})^{-1} k((i, t), \cdot)\mathbf{Y}_1 \\
&= \underbrace{\left(\mathbf{K}_{\text{time}} \otimes \mathbf{K}_{\text{unit}} + \sigma^2 \mathbb{I}\right)^{-1}\left(k_{\text{time}}(t, \cdot) \otimes k_{\text{unit}}(i, \cdot)\right)}_{\boldsymbol{w}} \mathbf{Y}_1 \\
&= \sum_{(i',t') \in \mathcal{C}} w_{i't'}^{(i, i)} Y_{i't'},
\end{align*}
for general weights $w^{(i,t)} \in \mathbb{R}^{N\times T}$ for target observation $(i,t)$.

The separation of the overall weights $w$ into the product of unit and time weights $\gamma$ and $\lambda$ follows from fundamental properties of kernels.
Specifically, the form of the ICM kernel is given by the product of its constituent components, i.e. $k_{\text{time}}(t,t')k_{\text{unit}}(i,i')$. 
This resulting kernel is equivalent to 
$$
k_{\text{time}}(t,t')k_{\text{unit}}(i,i') 
= \mathbb{E}_{f\sim \mathcal{N}(0, \mathbf{K}_{\text{time}})}\left[f(t)^\top f(t')\right]\mathbb{E}_{g\sim \mathcal{N}(0, \mathbf{K}_{\text{unit}})}\left[g(i)^\top g(i')\right].
$$
Finally, considering the full covariance we have
$$
\mathbf{K}_{\text{time}}\otimes\mathbf{K}_{\text{unit}} = (f^\top f) \otimes (g^\top g) = (f \otimes g)^\top(f \otimes g)
$$
which we can see is the product of independent functions, with equivalent representations given in terms of functions of the respective basis representations, $f = \lambda^\top \phi(t)$, and $g = \gamma^\top \psi(i)$.

\paragraph{Part (b).} This is immediate from proposition 3.12 (equation 43) of \citet{kanagawa2018gaussian} after applying decomposing the weights into the time and unit components.

\paragraph{Part (c).}
We start with the following result from Kanagawa. \begin{proposition}\citep{kanagawa2018gaussian}
Assume that $Y_i = f_i + \varepsilon_i$ where $\varepsilon$ is independent zero-mean noise, and $f$ is a fixed function contained in the reproducing Hilbert space implied by the kernel $k$. 
For any previously unobserved point, denoted $t^*$, we have
$$
\sqrt{\sum_{i=1}^T{k}(t^*, \cdot)_i w^{(t^*)}_i+\sigma^{2}}=\sup _{f \in \mathcal{H}_{k}}\left(f_{t^*}-\sum_{i=1}^{T} w_{i}^{(t^*)} f_i\right) ,
$$
where $i$ indexes over the $T$ observed points, and $w = \left(\boldsymbol{K} + \sigma \mathbb{I}\right)^{-1}k(t^*, \cdot)$.
\end{proposition}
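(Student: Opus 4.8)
The plan is to prove this worst-case-error identity by the standard RKHS route: convert point evaluations into inner products, apply Cauchy--Schwarz to obtain a representer norm, and then collapse that norm using the normal-equation form of the weights. First I would use the reproducing property to write $f_{t^*} = \langle f, k(t^*,\cdot)\rangle_{\mathcal{H}_k}$ and $f_i = \langle f, k(\cdot_i,\cdot)\rangle_{\mathcal{H}_k}$. By linearity, the prediction error becomes a single inner product, $f_{t^*} - \sum_i w_i^{(t^*)} f_i = \langle f, h\rangle_{\mathcal{H}_k}$, with \emph{representer} $h := k(t^*,\cdot) - \sum_i w_i^{(t^*)} k(\cdot_i,\cdot) \in \mathcal{H}_k$.

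Next I would take the supremum over the unit ball $\{f : \|f\|_{\mathcal{H}_k}\leq 1\}$ --- the constraint inherited from the hypothesis $\|f\|_{\mathcal{H}_k}\leq 1$ in the parent proposition, which is exactly what makes the right-hand side finite. Cauchy--Schwarz gives $\sup_{\|f\|_{\mathcal{H}_k}\leq 1}\langle f, h\rangle_{\mathcal{H}_k} = \|h\|_{\mathcal{H}_k}$, attained at $f = h/\|h\|_{\mathcal{H}_k}$, so the worst-case bias equals the representer norm.

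The substantive step is to evaluate $\|h\|_{\mathcal{H}_k}^2$ and identify it with the left-hand side. Expanding by bilinearity and using $\langle k(\cdot_i,\cdot), k(\cdot_j,\cdot)\rangle_{\mathcal{H}_k} = k(i,j)$ yields $\|h\|_{\mathcal{H}_k}^2 = k(t^*,t^*) - 2\, w^\top k(t^*,\cdot) + w^\top \mathbf{K} w$, where $\mathbf{K}$ is the Gram matrix and $w = w^{(t^*)}$. Substituting $w = (\mathbf{K}+\sigma^2\mathbb{I})^{-1} k(t^*,\cdot)$ and writing $\mathbf{K} = (\mathbf{K}+\sigma^2\mathbb{I}) - \sigma^2\mathbb{I}$ makes the quadratic term telescope, $w^\top \mathbf{K} w = w^\top k(t^*,\cdot) - \sigma^2\|w\|^2$, so the noiseless worst-case squared bias reduces to $k(t^*,t^*) - w^\top k(t^*,\cdot) - \sigma^2\|w\|^2$. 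The independent mean-zero noise then contributes variance $\mathbb{E}[(\sum_i w_i\varepsilon_i)^2] = \sigma^2\|w\|^2$; adding this back recovers the GP posterior variance $k(t^*,t^*) - \sum_i k(t^*,\cdot)_i w_i^{(t^*)}$ inside the square root, which I would match to the stated right-hand side.

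I expect the main obstacle to be the constant bookkeeping in this last step: reconciling the compact left-hand side as transcribed (in particular the placement of the $k(t^*,t^*)$ term and whether the ridge enters as $\sigma$ or $\sigma^2$) with the posterior-variance expression the algebra naturally produces, and deciding whether the trailing $\sigma^2$ is the latent-function variance or the predictive variance of a fresh noisy observation. I would manage this by carrying everything through $M := \mathbf{K}+\sigma^2\mathbb{I}$ together with the identity $M^{-1}\mathbf{K}M^{-1} = M^{-1} - \sigma^2 M^{-2}$, keeping the bias and noise contributions separate until the end. Once the single-index identity is established, the corresponding statement for the ICM kernel in Part~(c) follows immediately by applying it to the Kronecker-separated weights $w = \gamma^{(1)} \otimes \lambda^{(t^*)}$, since $\|w\|_2^2 = \|\gamma^{(1)}\|_2^2\,\|\lambda^{(t^*)}\|_2^2$ and $k(t^*,\cdot)^\top w$ factors as in Part~(a).
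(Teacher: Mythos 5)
Your proof is correct, but note that the paper itself contains no proof of this statement: it is quoted (with citation) from \citet{kanagawa2018gaussian} and used as a black box inside the proof of part (c), so your derivation supplies an argument the paper delegates entirely to the reference. What you do --- reproducing property, Cauchy--Schwarz over the unit ball to get the representer norm $\|h\|_{\mathcal{H}_k}$, then the substitution $w=(\mathbf{K}+\sigma^2\mathbb{I})^{-1}k(t^*,\cdot)$ with $\mathbf{K}=(\mathbf{K}+\sigma^2\mathbb{I})-\sigma^2\mathbb{I}$ to collapse $w^\top\mathbf{K}w$ --- is precisely the standard argument behind the cited result, and your algebra correctly recovers the identity $(\text{worst-case bias})^2 + \sigma^2\|w\|_2^2 = k(t^*,t^*)-\sum_i k(t^*,\cdot)_i\, w_i$, i.e.\ the GP posterior variance. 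You were also right to distrust the transcription: as printed, the left-hand side drops the $k(t^*,t^*)$ term and the minus sign, writes the ridge as $\sigma\mathbb{I}$ rather than $\sigma^2\mathbb{I}$, and the supremum is taken over all of $\mathcal{H}_k$, which is infinite without the unit-ball constraint you import from the parent proposition. One point to tighten: your step of ``adding back'' the noise contribution $\mathbb{E}\bigl[(\sum_i w_i\varepsilon_i)^2\bigr]=\sigma^2\|w\|_2^2$ mixes a worst-case (supremum) quantity with an average-case (variance) one; in \citet{kanagawa2018gaussian} the literal equality involving the $\sigma^2$ terms is obtained by enlarging the RKHS so that the supremum ranges over the noise coordinates as well. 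Your version remains rigorous if stated as the three-term identity above --- both $\|h\|_{\mathcal{H}_k}^2$ and $\sigma^2\|w\|_2^2$ are computed exactly, so nothing is lost --- but it is not literally the same single-supremum equality as the quoted statement, and a referee checking against the reference would want that distinction made explicit.
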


The error bounds follow by applying this result, noting that the supremum forms an upper bound, rearranging terms, and applying the separation of weights shown earlier. 
\end{proof}

\clearpage
\section{Additional results}
\label{sec:appendix_additional_results}

\subsection{Additional diagnostics}
\label{sec:appendix_additional_diagnostics}

\begin{figure}[htbp]
    \centering
    \includegraphics[width=0.9\maxwidth]{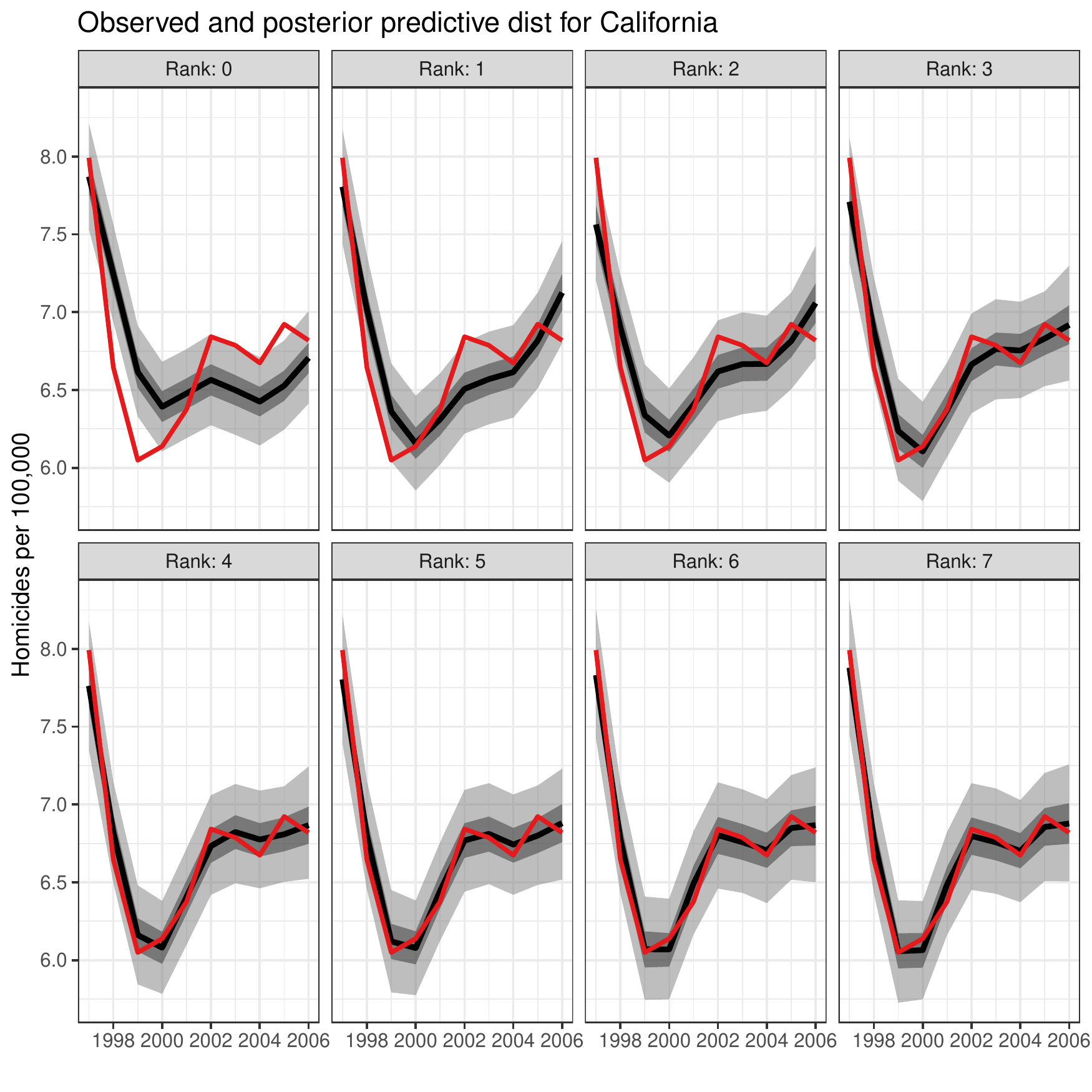}
    \caption{Pre-treatment imbalance for Poisson model. 50\% interval and 95\% interval plotted. Observed data is in red.}
    \label{fig:ca_balance_pois_raw}
\end{figure}

\begin{figure}[htbp]
    \centering
    \begin{subfigure}{0.5\textwidth}
    \includegraphics[width=0.9\maxwidth]{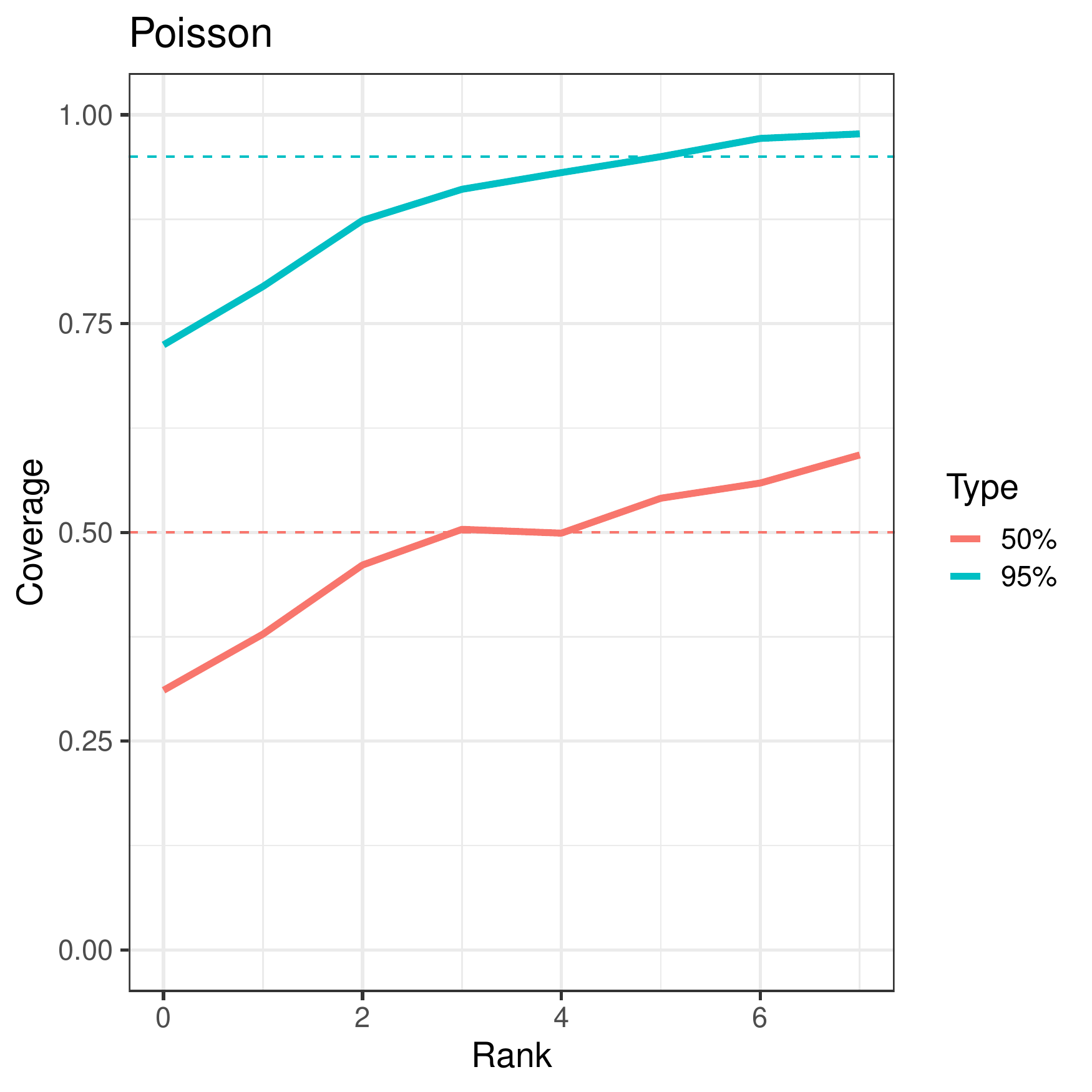}
    \caption{Poisson}
    \label{fig:coverage_pois}
    \end{subfigure}~%
    \begin{subfigure}{0.5\textwidth}
        \centering
    \includegraphics[width=0.9\maxwidth]{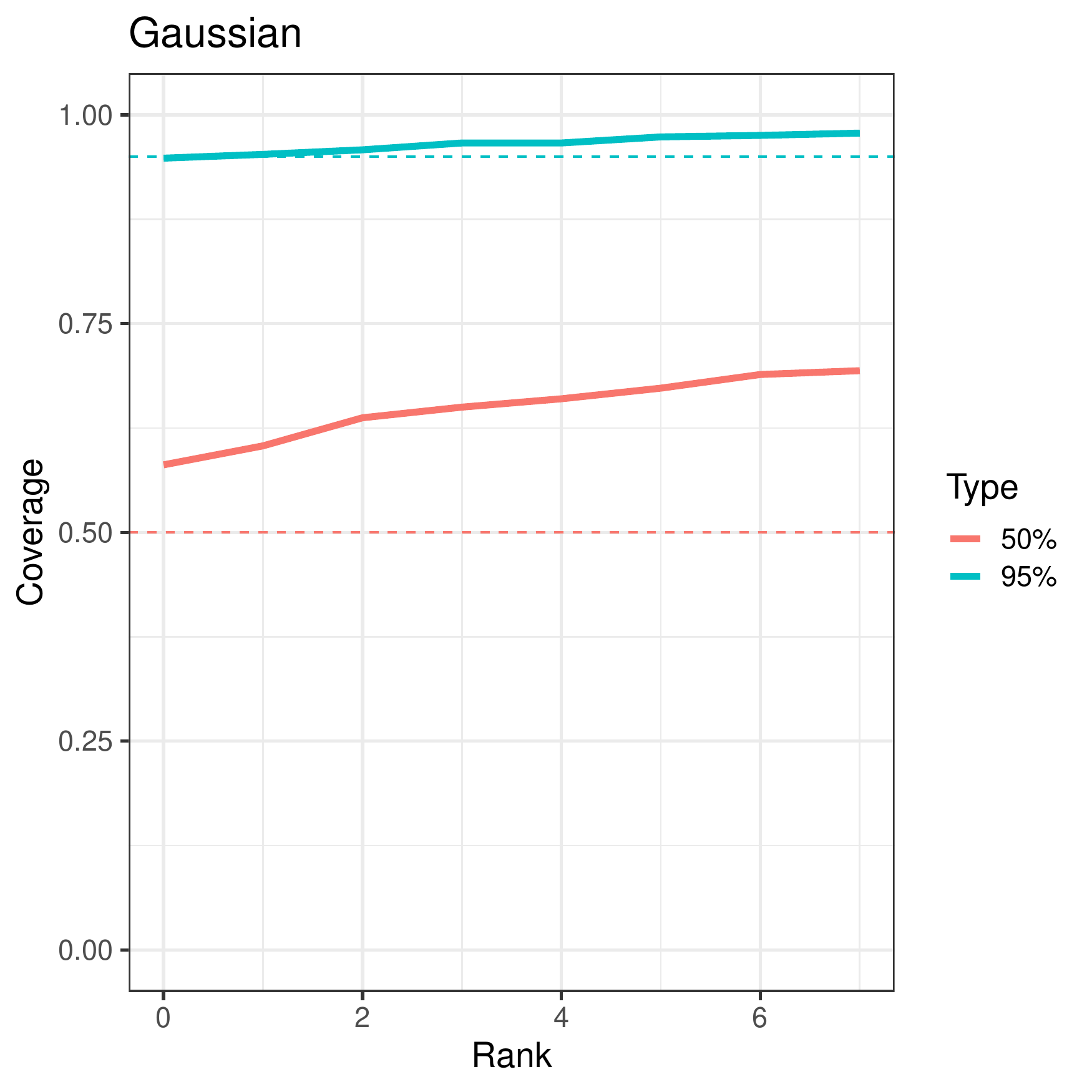}
    \caption{Gaussian}
    \label{fig:coverage_gaus}
    \end{subfigure}
    \caption{Coverage vs rank for Poisson and Gaussian models}
    \label{fig:coverage_ppc}
\end{figure}

\begin{figure}[htbp]
    \centering
    \includegraphics[width=0.6\maxwidth]{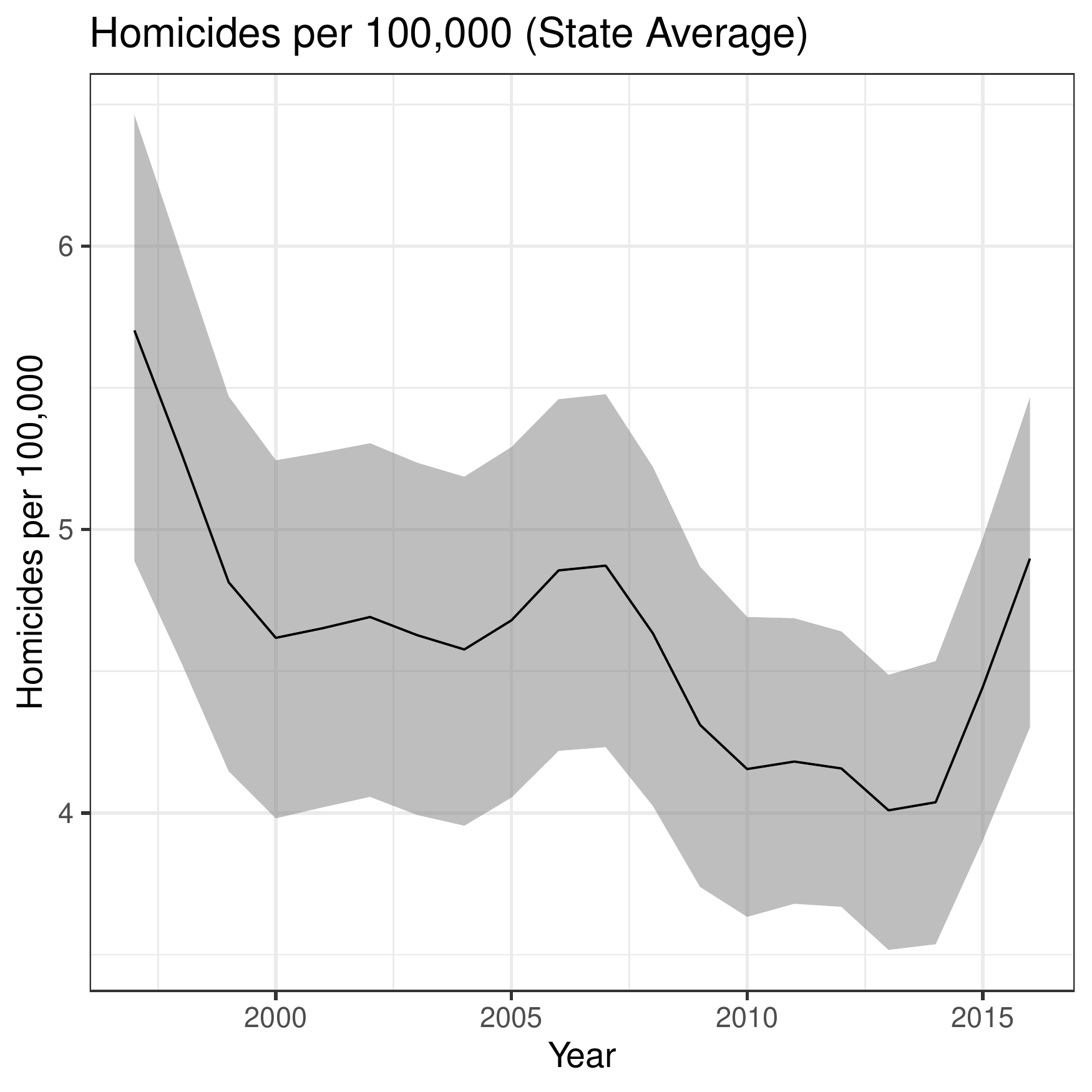}
    \caption{Global Gaussian Process Mean}
    \label{fig:global_factor}
\end{figure}

\begin{figure}[htbp]
    \centering
    \includegraphics[width=0.9\maxwidth]{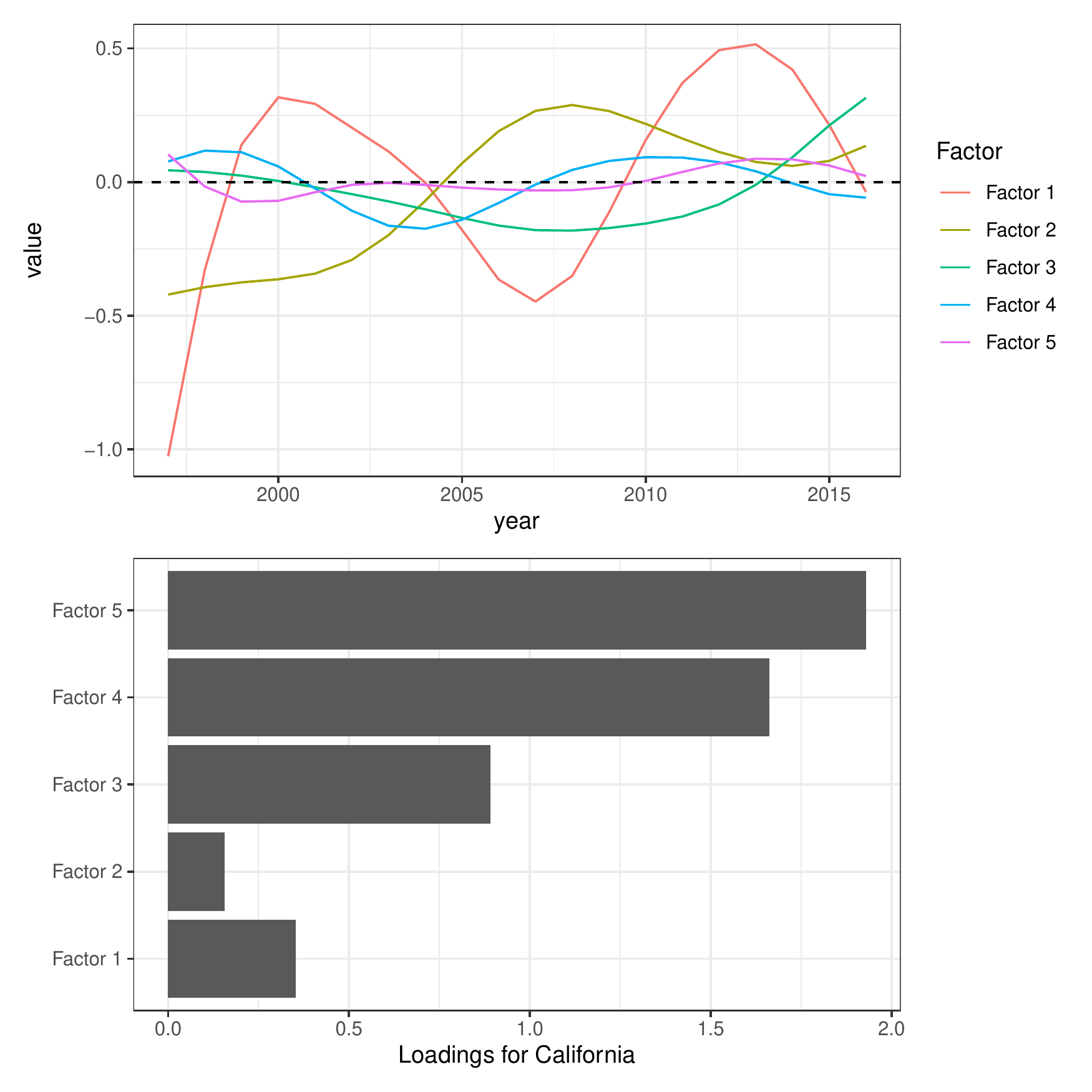}
    \caption{Inferred latent factors (top) and loadings for California. Note that the latent factors are only identifiable up to rotation.  Here, we order the factors according to the fraction of variance explained across outcomes from all states, and we choose the sign of the factors so that the loadings are all positive for California. All the estimated factors are smooth over time due to the kernel, $\mathbf{K}_\text{time}$. }
    \label{fig:all_factor}
\end{figure}

\begin{figure}[htbp]
    \centering
    \includegraphics[width=0.5\maxwidth]{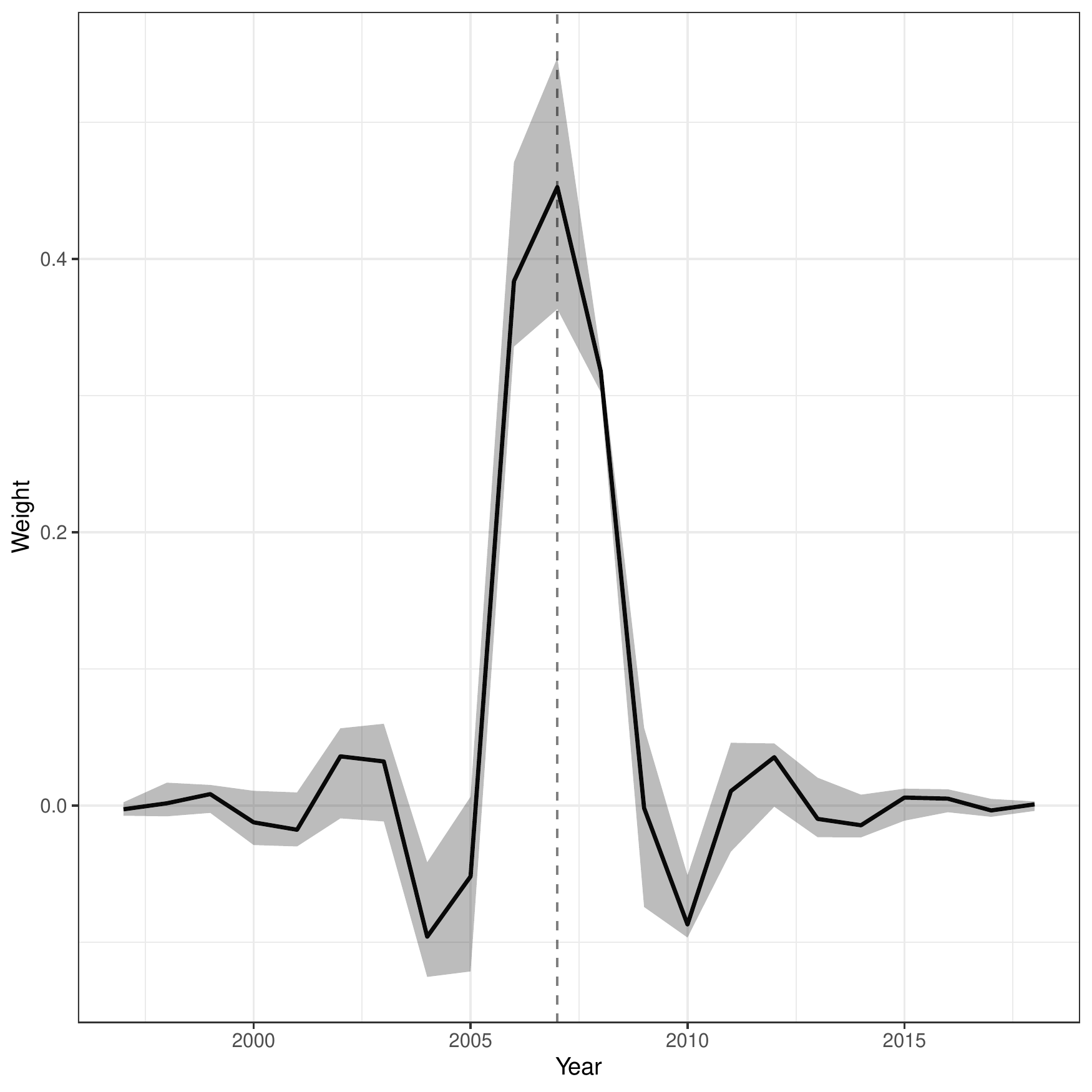}
    \caption{Inferred time weights for California.}
    \label{fig:time_weights}
\end{figure}
\begin{figure}[htbp]
    \centering
    \includegraphics[width=0.5\maxwidth]{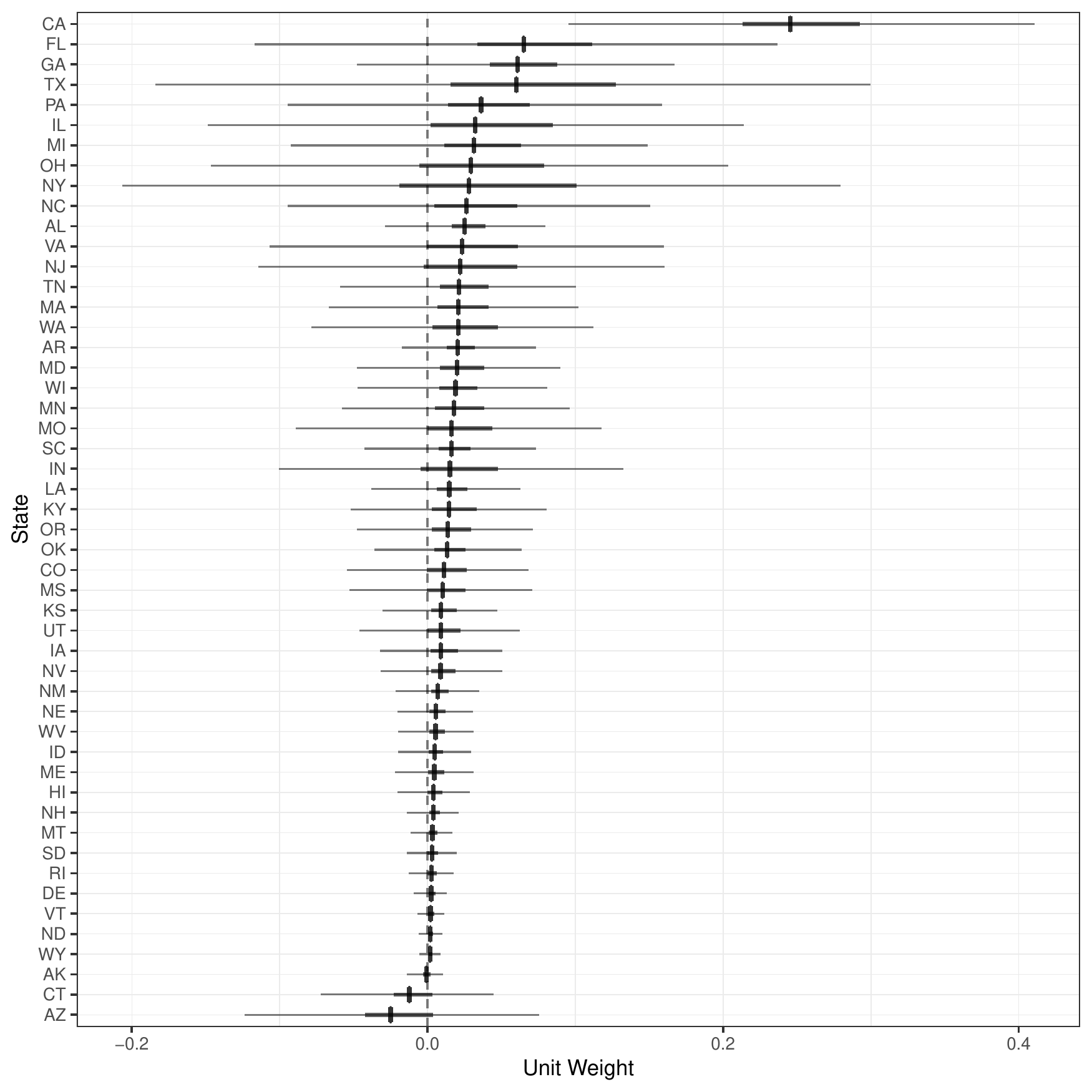}
    \caption{Inferred unit weights for California.}
    \label{fig:unit_weights}
\end{figure}

\clearpage
\subsection{Additional impact estimates}
\label{sec:appendix_additional_impacts}

\begin{figure}[htbp]
    \centering
    \includegraphics[width=0.9\maxwidth]{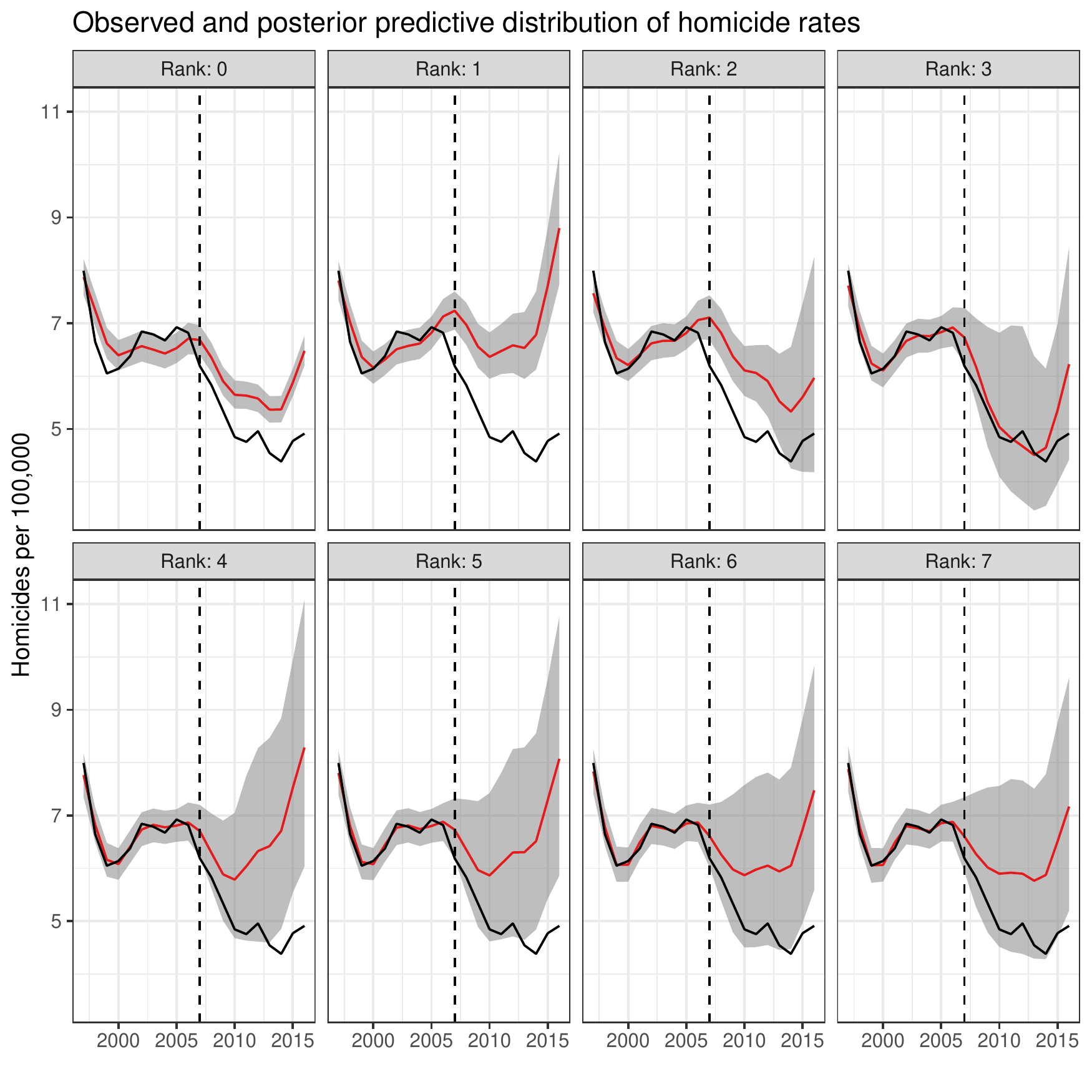}
    \caption{Estimates from MTGP for Normal model}
    \label{fig:mtgp_ests_appendix}
\end{figure}

\begin{figure}[htbp]
    \centering
    \includegraphics[width=0.9\maxwidth]{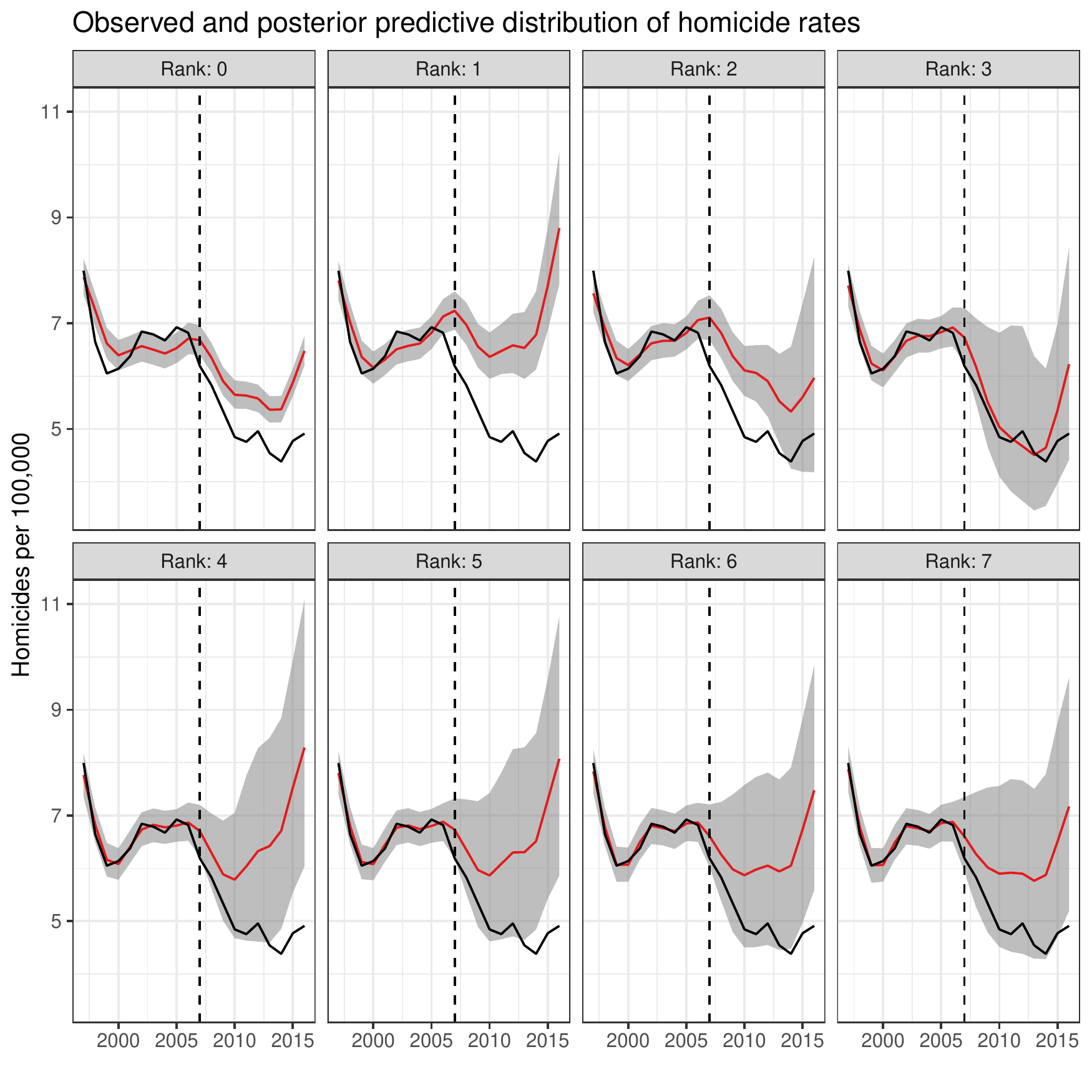}
    \caption{Estimates from MTGP for Normal model with a linear adjustment for background covariates.}
    \label{fig:mtgp_ests_appendix_covariates}
\end{figure}

\begin{figure}[htbp]
    \centering
    \includegraphics[width=0.9\maxwidth]{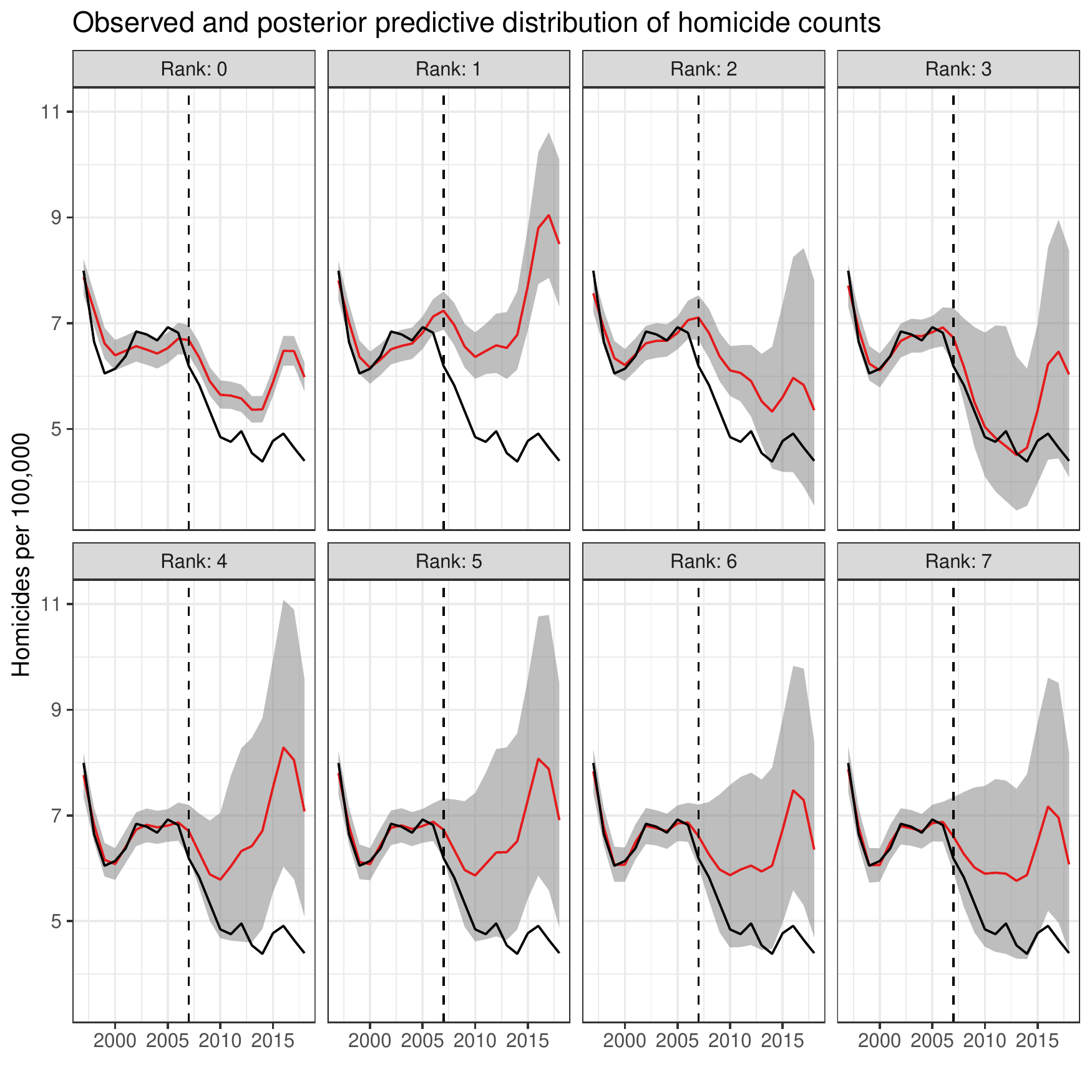}
    \caption{Estimates from MTGP with Poisson link}
    \label{fig:mtgp_ests_poisson_appendix}
\end{figure}

\begin{figure}[htbp]
    \centering
    \includegraphics[width=0.9\maxwidth]{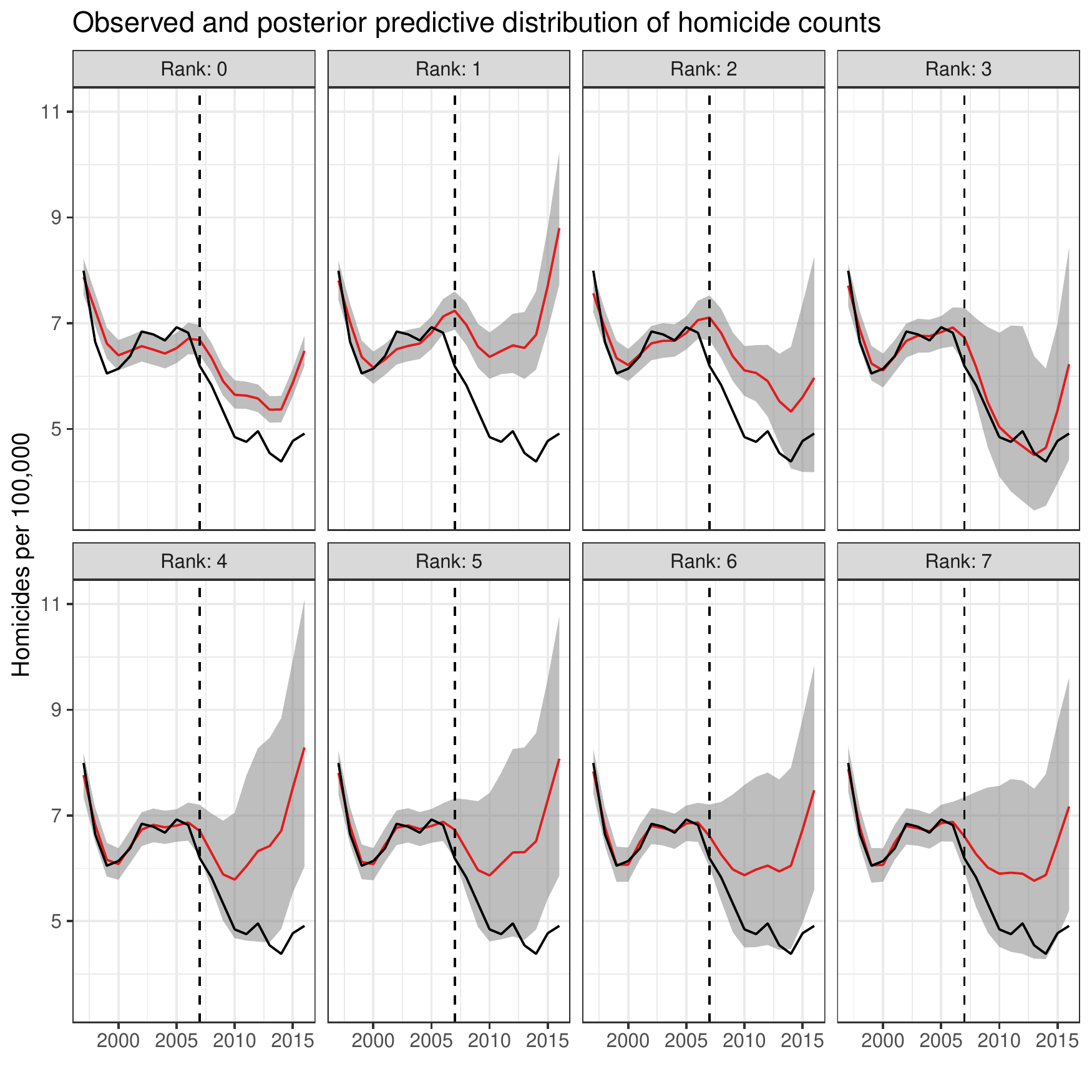}
    \caption{Estimates from MTGP with Poisson link and linear covariate adjustment.}
    \label{fig:mtgp_ests_poisson_adj}
\end{figure}

\begin{figure}[htbp]
    \centering
    \includegraphics[width=0.7\maxwidth]{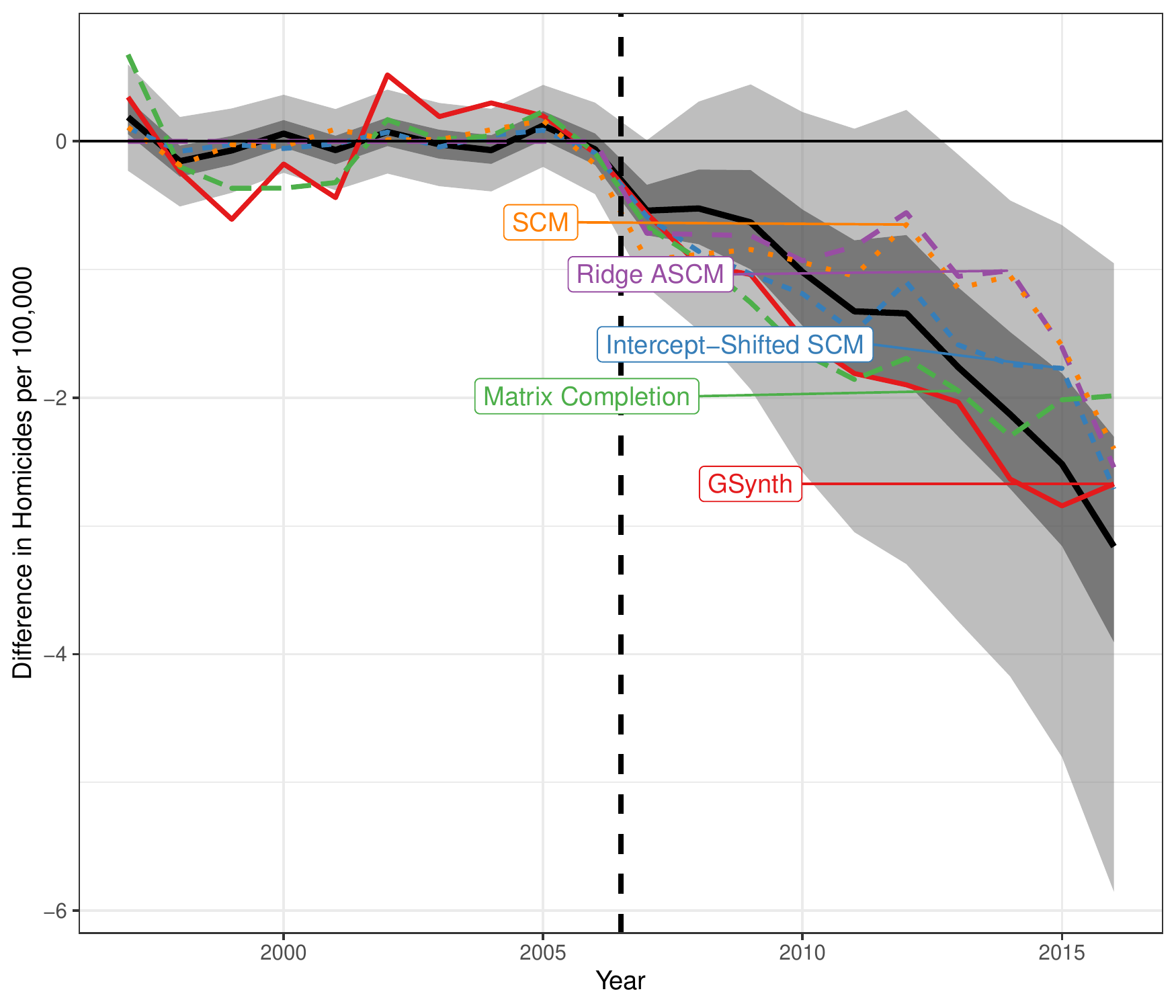}
    \caption{Estimates from SCM with and without an intercept shift, ridge ASCM, \texttt{gsynth}, and matrix completion, plotted over the posterior distribution of the effect of APPS on homicide rates with the rank 5 Poisson model.}
    \label{fig:alt_ests}
\end{figure}

\begin{figure}[htbp]
    \centering
    \includegraphics[width=0.7\maxwidth]{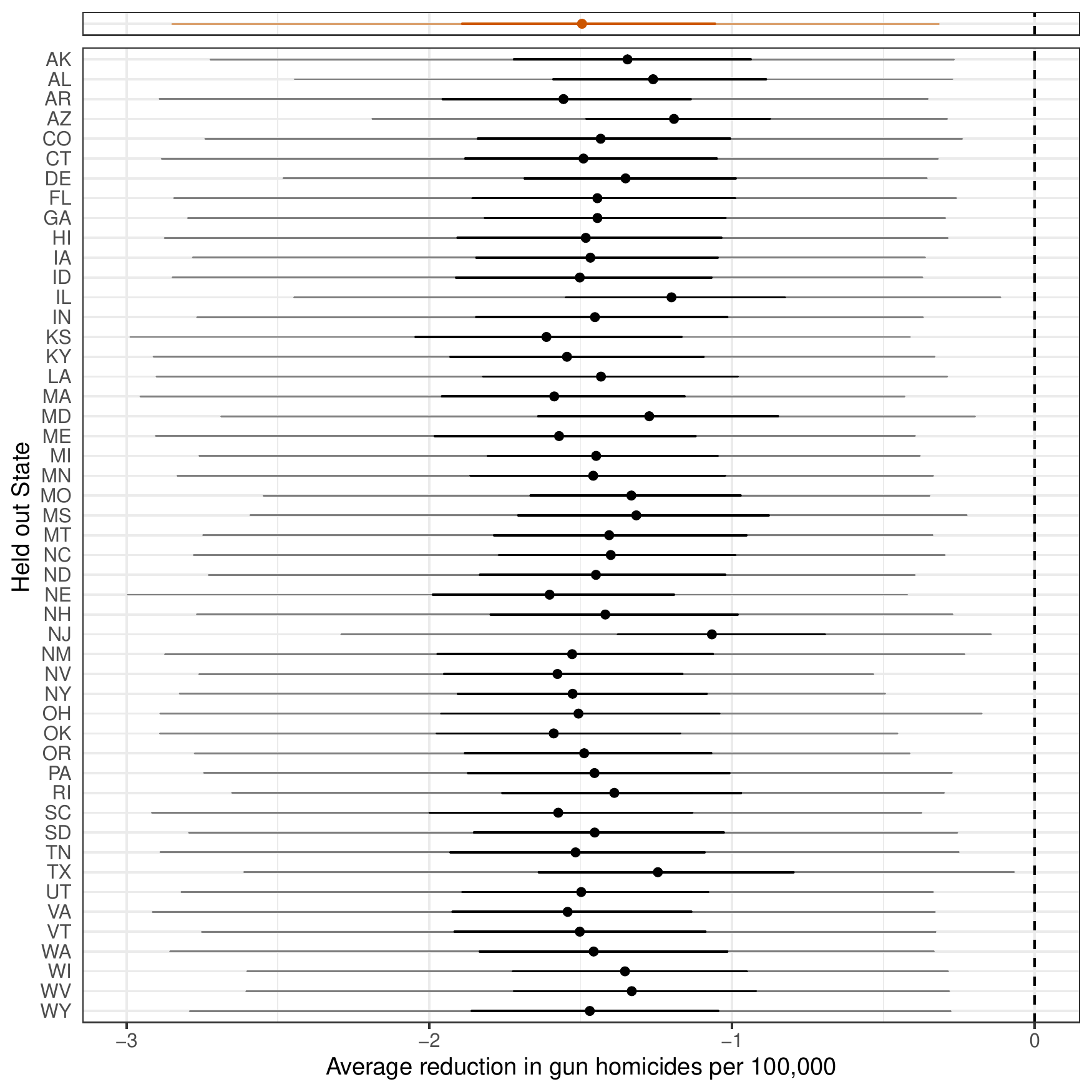}
    \caption{The posterior distribution for the rank 5 Poisson model when leaving out the homicide rate series for each of the 49 comparison states. The posterior with all states is shown in orange.}
    \label{fig:loo}
\end{figure}

\begin{figure}[htbp]
    \centering
    \includegraphics[width=0.45\maxwidth]{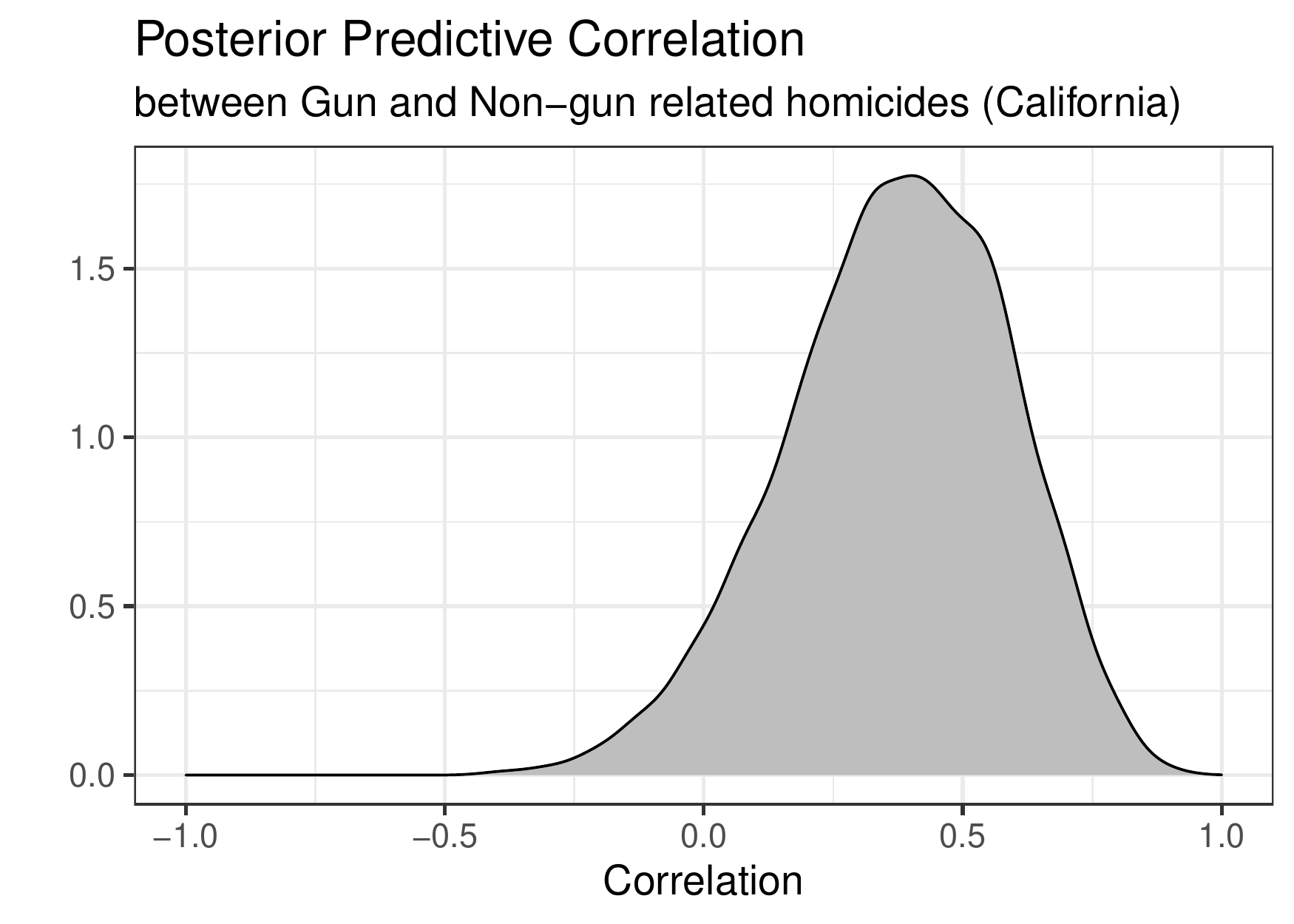}
    \caption{Posterior predictive correlation between gun and non-gun related homicides in California, pre-treatment.}
    \label{fig:mo_cor}
\end{figure}

\pagebreak
\subsection{Select MCMC Convergence Diagnostics}

\begin{figure}[htbp!]
    \centering
    \includegraphics[width=0.75\maxwidth]{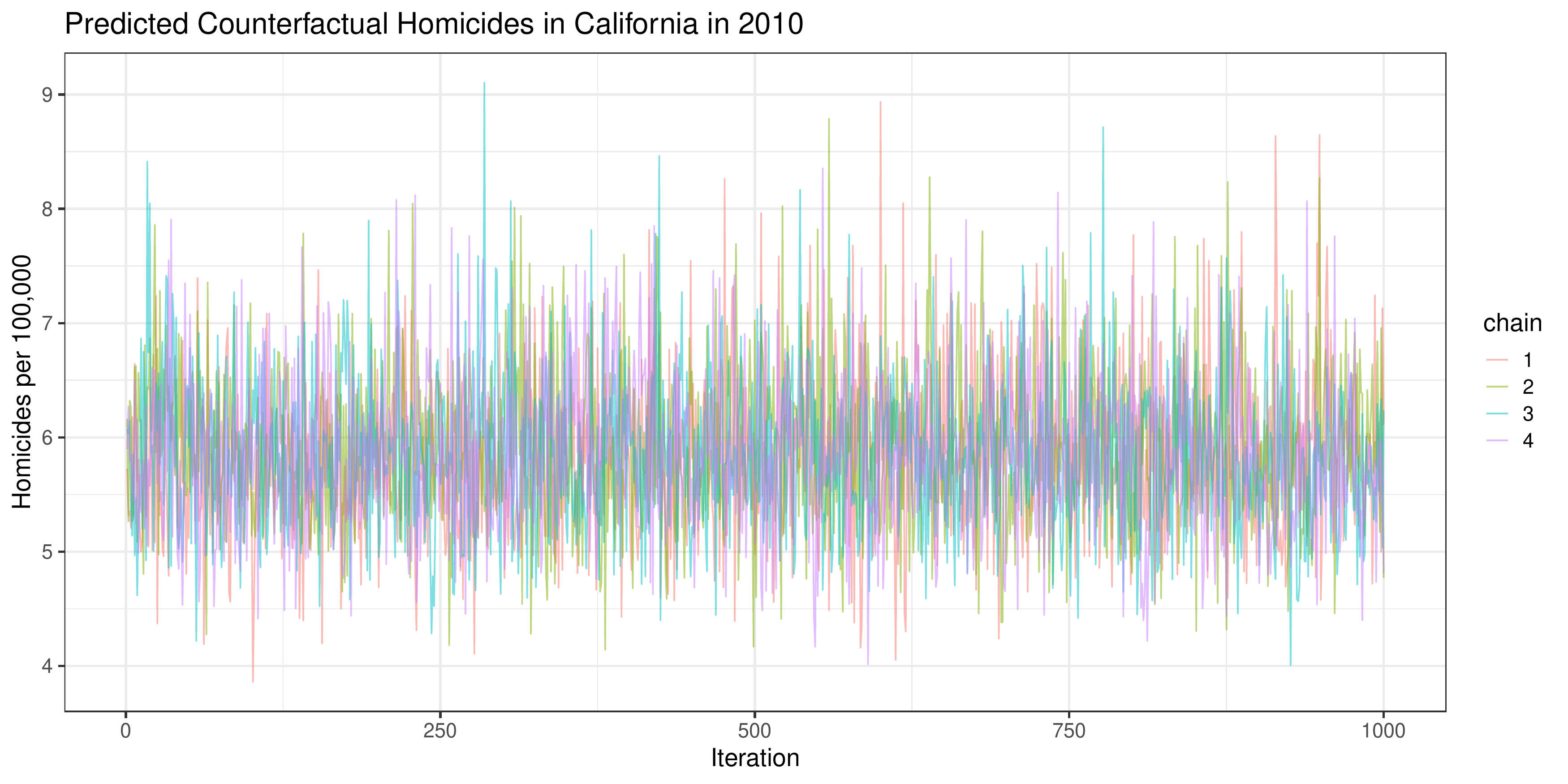}
    \caption{Predicted counterfactual homicides for California in 2010 for the Rank 5 Poisson model. For this model, $\hat{R} \approx 1$ and the bulk effective sample size is approximately 3500, indicating excellent mixing.}
    \label{fig:traceplot_ca2010_Poisson}
\end{figure}

\begin{figure}[htbp!]
    \centering
    \includegraphics[width=0.75\maxwidth]{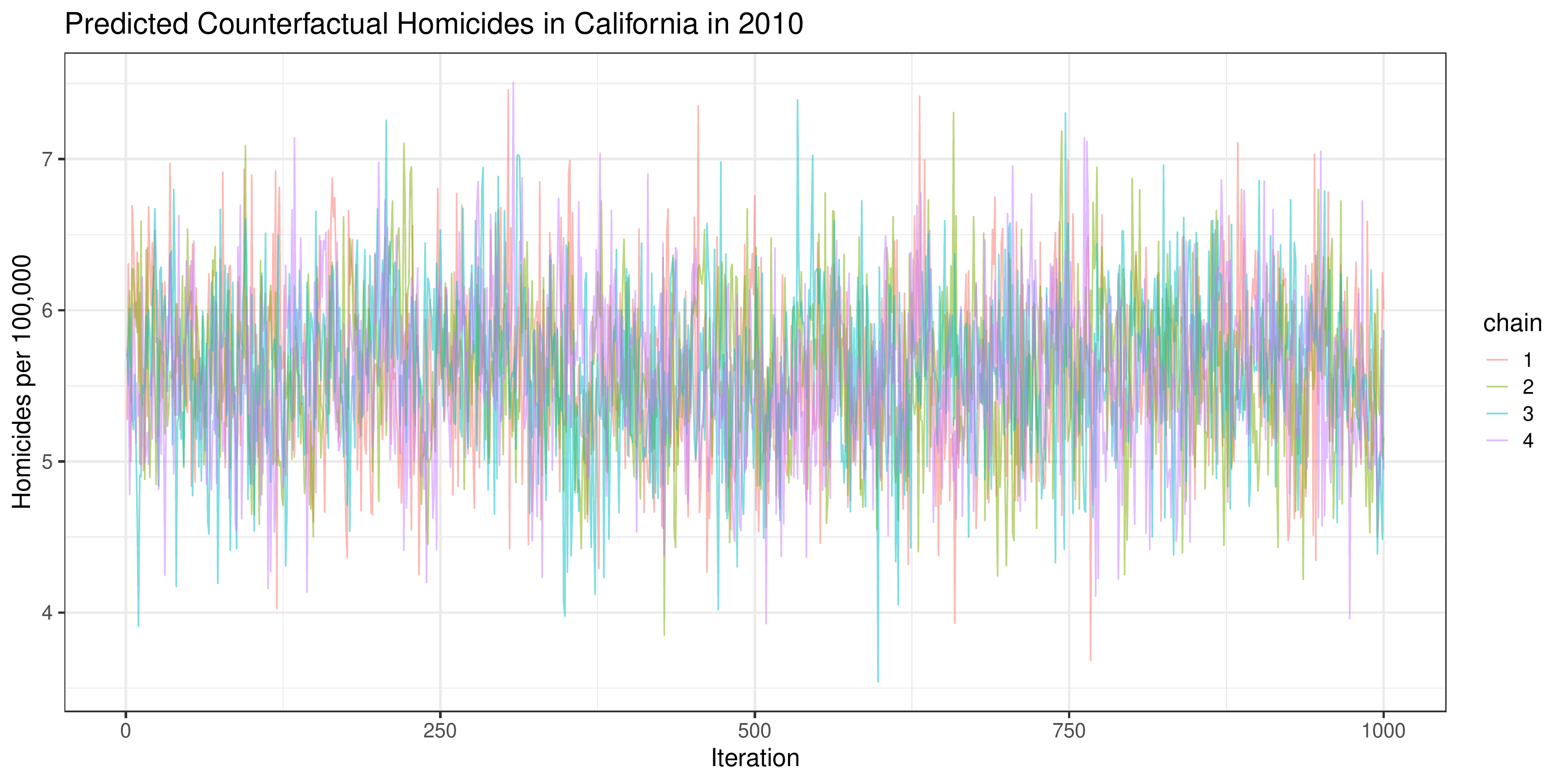}
    \caption{Predicted counterfactual homicides for California in 2010 for the Rank 5 Gaussian model. For this model, $\hat{R} \approx 1$ and the bulk effective sample size is approximately 1000, indicating excellent mixing.}
    \label{fig:traceplot_ca2010_Gaussian}
\end{figure}

\end{document}